\documentclass[11pt]{article}
\usepackage[utf8]{inputenc}
\usepackage[a4paper, portrait, margin=1in]{geometry}

\usepackage{color,amsmath,amsfonts,fullpage,amsthm,mathrsfs,fancyhdr,verbatim,framed,hyperref,xspace,mdframed,tabularx,booktabs,enumitem,comment,float, mathtools}
\usepackage[dvipsnames]{xcolor}
\definecolor{linkblue}{HTML}{0072BB}
\usepackage{graphicx}
\hypersetup{
    colorlinks=true,
    linkcolor=blue,
    citecolor=teal
}

\usepackage{fec}
\newtheorem{theorem}{Theorem}

\newtheorem{question}{Question}
\newtheorem{definition}{Definition}
\newtheorem{lemma}{Lemma}

\newtheorem{corollary}{Corollary}

\newtheorem{assumption}{Assumption}
\newtheorem*{inassumption*}{Quantum Input Assumptions}

\usepackage{mathpazo}
\usepackage{graphicx}

\usepackage{braket}
\usepackage{longtable} % in your preamble
\usepackage{tikz}
\usetikzlibrary{shapes.geometric}
\usepackage{pgfplots}
\usepackage{enumitem}

\DeclareMathOperator{\supp}{supp}

\newcommand{\cycnormonew}[2]{\left\|#1\right\|_{\mathrm{cyc},1,#2}}

\def\bra#1{\mathinner{\langle{#1}|}}
\def\ket#1{\mathinner{|{#1}\rangle}}

\newcount\Comments
\newcommand{\kibitz}[2]{\ifnum\Comments=1\textcolor{#1}{#2}\fi}

\newcommand{\expP}{\mathbb{E}_{y \underset{P}{\sim} x}}
\newcommand{\tmix}{t_{\mathrm{mix}}}

\usepackage{authblk}

\title{Quantum Speedups for Group Relaxations of \\Integer Linear Programs}

\author{Brandon Augustino\footnote{These authors contributed equally and are listed in alphabetical order.} \thanks{brandon.augustino@jpmchase.com}}
\author{Dylan Herman\protect\footnotemark[1]} 
\author{Guneykan Ozgul}  
\author{Jacob Watkins}
\author{\\\vspace{-.25cm} Atithi Acharya}
\author{Enrico Fontana}
\author{Junhyung Lyle Kim}
\author{Shouvanik Chakrabarti\thanks{shouvanik.chakrabarti@jpmchase.com}} 
\affil{Global Technology Applied Research, JPMorganChase, New York, NY 10001, USA}

\date{}
\begin{document}

\maketitle

% \begin{abstract}
%     Integer Linear Programs (ILPs) are a flexible and ubiquitous model for discrete optimization problems. Solving ILPs is \textsf{NP-Hard} in general, but of great practical importance. It has proven challenging to identify super-quadratic quantum speedups for ILPs, primarily because most classical algorithms that handle ILPs with many constraints are global and exhaustive, whereas quantum frameworks that offer the potential for super-quadratic speedups leverage the local properties of the objective function and feasible set. We address this difficulty by considering quantum algorithms for Gomory's group relaxation, a relaxation of an ILP that is obtained by removing the nonnegativity constraints from variables that are positive in the optimal solution of the linear programming relaxation, while keeping integrality of the decision variables. We present a classical algorithm that is competitive with known alternatives that solves the group relaxation via a local search, and a corresponding quantum algorithm that under reasonable technical conditions offers a super-quadratic speedup. When the group relaxation satisfies a non-degeneracy condition analogous to (albeit stronger than) that found in linear programming, our approach yields an optimal solution to the original integer program. In other cases, the group relaxation can improve downstream branch-and-cut solvers by reducing the integrality gap, a behavior that we numerically show to be typical for some practically interesting ILPs.  
% \end{abstract}
\begin{abstract}
Integer Linear Programs (ILPs) are a flexible and ubiquitous model for discrete optimization problems. Solving ILPs is \textsf{NP-Hard} yet of great practical importance. Super-quadratic quantum speedups for ILPs have been difficult to obtain because classical algorithms for many-constraint ILPs are global and exhaustive, whereas quantum frameworks that offer super-quadratic speedup exploit local structure of the objective and feasible set. We address this via quantum algorithms for Gomory's group relaxation. The group relaxation of an ILP is obtained by dropping nonnegativity on variables that are positive in the optimal solution of the linear programming (LP) relaxation, while retaining integrality of the decision variables. We present a competitive feasibility-preserving classical local-search algorithm for the group relaxation, and a corresponding quantum algorithm that, under reasonable technical conditions, achieves a super-quadratic speedup. When the group relaxation satisfies a nondegeneracy condition analogous to, but stronger than, LP nondegeneracy, our approach yields the optimal solution to the original ILP. Otherwise, the group relaxation tightens bounds on the optimal objective value of the ILP, and can improve downstream branch-and-cut by reducing the integrality gap; we numerically observe this on several practically relevant ILPs. To achieve these results, we derive efficiently constructible constraint-preserving mixers for the group relaxation with favorable spectral properties, which are of independent interest.
\end{abstract}

\newpage
\tableofcontents
\newpage

\section{Introduction} 

\subsection{Motivation}
Quantum algorithmic speedups for combinatorial optimization have been heavily studied for more than two decades \cite{farhi2000quantum,farhi2014quantum,montanaro2018backtracking,doi:10.1137/1.9781611975482.107,montanaro2020branchAndBound}, both due to the practical importance of these problems, as well as the theoretical optimism emerging from the derivation of many rigorous (usually quadratic) speedups based on amplitude amplification, amplitude estimation, quantum walks, or some combination thereof. However, motivated by the need for larger theoretical quantum speedups if a quantum advantage is to be realized on hardware~\cite{babbush2021focus, omanakuttan2025threshold}, recent years have seen a renewed interest in the development of frameworks for super-quadratic speedups \cite{hastings2018shortPath,dalzell2022mind,chakrabarti2024generalized} and the rigorous characterization of heuristics such as the Quantum Approximate Optimization Algorithm (QAOA)~\cite{farhi2014quantum} and the Quantum Adiabatic Algorithm (QAA)~\cite{farhi2000quantum}.

A unified framework for studying many discrete optimization problems is through the study of \textit{integer linear programs} (ILPs). With $A \in \Zmbb ^{m \times n}$ and $b\in \Z{m}$, define $$\Pcal \coloneqq \{ x \in \Z{n}_{\geq 0} : Ax = b \}.$$ 
Given a vector $c \in \R{n}$, the integer linear programming problem is to solve
\begin{equation}\label{e:ILP}\tag{ILP}
  \textsf{OPT} \coloneqq \min \{ c^{\top} x : x \in \Pcal \}.
\end{equation}
ILP is a classic \textsf{NP-complete} problem underpinning countless applications in computer science, operations research, and decision sciences more generally. It naturally encodes many combinatorial optimization problems, as well as problems from finance, economics, industrial engineering, and the algorithmic geometry of numbers~\cite{schrijver1987polyhedral,grotschel2012geometric,wolsey1999integer,conforti2014integer}. As a consequence of this flexibility, the study of algorithms for solving ILPs has gained significant interest, and there exist several popular commercial~(Gurobi~\cite{gurobi} and CPLEX~\cite{cplex2009v12}) and open-source solvers~(SCIP~\cite{SCIPOptSuite10}). From a practical point of view however, problems at interesting scales can often be solved, even when the asymptotic scaling of the algorithm is exponential. Polynomial speedups over this exponential scaling are significant as they directly expand the domain of problems that can be handled in this way.

Although special instances of ILP are polynomial-time solvable---such as when the feasible region is an integral\footnote{A polyhedron is integral if it equals the convex hull of its integer points.} polyhedron---general instances remain computationally intractable. The difficulty of ILP stems from integrality: it destroys convexity, replacing a well-structured polyhedron with a discrete subset of the integer lattice. Unlike linear programs (LPs), ILPs lack local convexity properties that could inform efficient local search procedures. Even when the number of variables and constraints is small, or the feasible set appears relatively simple (for example, binary ILPs where the decision variables are $x \in \{0,1\}^n$), determining feasibility or optimality can encode well-known \textsf{NP-Complete} problems such as subset-sum and \textsf{3-SAT}. Assuming the exponential time hypothesis~\cite{impagliazzo2001complexity}, these problems cannot be solved in polynomial time. From a practical point of view however, ILPs are solved regularly as part of established computational pipelines. Designing effective algorithms for solving general integer programming problems remains a central and enduring challenge in theoretical and applied computer science.

It is natural to ask what speedups quantum algorithms can provide for ILPs. In the black-box setting an ILP can be made to encode unstructured search, and the largest speedup possible is quadratic due to Grover's search algorithm~\cite{grover1996fast}. For larger speedups (and practical relevance), we must focus on the white-box setting. In this setting, na\"ively applying Grover's algorithm to solve \eqref{e:ILP} would amount to brute-force search in a sufficiently large box $\{x\in\mathbb{Z}^n_{\ge 0}:\|x\|_\infty\le M\}$ for suitable $M$, yielding a runtime $\mathcal{O}((M{+}1)^{n/2}\,\mathrm{poly}(n,L))$, where $L$ denotes the input length of the problem data. This scales with the size of the \emph{ambient} space and ignores structure in the constraints. By contrast, the best (exact) classical ILP algorithms exploit geometry‑of‑numbers to decompose and enumerate the feasible region, and determine the optimal solution using $(\log(2n))^{\Ocal(n)} \cdot \poly (n, L)$ operations~\cite{dadush2012integer,reis2023subspace}. A recent line of work~\cite{montanaro2018backtracking,ambainis2017quantum,montanaro2020branchAndBound,chakrabarti2022universal} has improved upon direct Grover based methods by using quantum walks to accelerate the branch-and-cut algorithms that are used by many integer programming (IP) solvers, but there is no clear path to to obtain the super-quadratic speedups required to realize a quantum advantage in practice~\cite{babbush2021focus,omanakuttan2025threshold}.

A central obstacle to super–quadratic speedups for ILPs is that practically relevant instances typically carry many (often tight) constraints~\cite{abbas2024challenges}. This is unsurprising: in general, the canonical formulation \eqref{e:ILP} does not yield a succinct, algorithmically useful \emph{explicit} description of the feasible region $\Pcal$, making direct constraint handling challenging even classically. Instead, general IP solvers use refined but ultimately exhaustive frameworks which proceed by incrementally tightening \emph{relaxations} of \eqref{e:ILP} to implicitly describe $\Pcal$ rather than maintaining feasibility, and thus cannot exploit the local geometry of the objective on the feasible set. By contrast, the quantum frameworks that promise super‑quadratic gains—short‑path and its generalizations~\cite{hastings2018shortPath,dalzell2022mind,chakrabarti2024generalized}, quantum dynamical algorithms~\cite{chakrabarti2025speedups, herman2025mechanisms}, and heuristics such as QAOA~\cite{farhi2014quantum} and quantum annealing~\cite{farhi2000quantum, kadowaki1998quantum}—are explicitly local and benefit from mixers that preserve (or softly enforce) feasibility. Designing such local exploration for general ILPs is precisely the challenge: classical local methods (e.g., simulated annealing, parallel tempering, tabu search, memetic algorithms) succeed mainly on special cases with few constraints or efficiently enumerable feasible sets; the latter assumption is particularly strong, since even (approximately) uniform sampling over $\Pcal$ is \textsf{NP-Hard}.

This mismatch motivates our central question:
\begin{question}
\label{qstn:main}
    Is there a class of algorithms for integer linear programming that
    \begin{enumerate}
        \item Is \emph{applicable} to general ILPs with many constraints, 
        \item Is based on \emph{local} exploration of a feasible space, and as a consequence,
        \item Is a viable candidate for \emph{super-quadratic} quantum speedups?
    \end{enumerate}
\end{question}

In this paper, we answer Question~\ref{qstn:main} in the affirmative and give the first, to our knowledge, analysis of a framework that satisfies all the above conditions. Central to our results is a relaxation of \eqref{e:ILP} introduced by Gomory~\cite{gomory1965relation,gomory1967faces, gomory1969some} termed the \textit{group relaxation} (also known as the \textit{group problem}). The next sub-section introduces relaxations of ILPs in general and Gomory's group relaxation more specifically, and discusses its significance and applicability to ILP solving. We then go on to outline our main contributions, and discuss their role in satisfying the criteria of Question~\ref{qstn:main}.

\subsection{Relaxations of ILPs}
Modern integer-programming solvers—commercial (Gurobi~\cite{gurobi}, CPLEX~\cite{cplex2009v12}) and open-source (SCIP~\cite{SCIPOptSuite10})—follow a relax–tighten–search paradigm. A relaxation of \eqref{e:ILP} is a problem obtained by dropping some of the constraints in the original formulation. Relaxations are thus problems of the form $\OPT^{\prime} \coloneqq \min \{ c^{\top} x : x \in \Pcal^{\prime} \}$, with $\Pcal \subseteq \Pcal^{\prime}$ and $c^{\top} x \geq \OPT^{\prime}$ for any $x \in \Pcal$. The challenge is to design relaxations that are both tractable and useful, since there is a natural tradeoff between the computational effort required to solve the relaxation, and the quality of the lower bound it provides for $\OPT$. 

The canonical choice is the LP relaxation, obtained by dropping integrality:
\begin{equation}\label{e:LP}\tag{LP}
  \textsf{OPT}_{\text{LP}} \coloneqq \min \left\{ c^{\top} x : x\in \Pcal_{\text{LP}} \right\},
\end{equation}
where $$\Pcal_{\text{LP}} \coloneqq \{ x \in \R{n}_{\geq 0} : Ax = b \}.$$ 
The upshot of working with the LP relaxation is that it can be solved efficiently using \textit{interior point methods}~\cite{wright1997primal}, and clearly $\OPT_{\text{LP}} \leq \OPT$, since $\Pcal := \Pcal_{\text{LP}} \cap \Z{n}$. If the optimal solution to \eqref{e:LP} is already integral, it automatically certifies optimality for the ILP. Unfortunately, aside from very restrictive classes of ILPs, such as those in which the constraint matrix $A$ is \textit{totally unimodular}\footnote{A matrix $A$ is totally unimodular (TU) if every square submatrix has determinant in $\{-1,0,1\}$. When $A$ is TU and $b$ is integral, the polyhedron $\{x\in\mathbb{R}^n_{\ge 0}:Ax=b\}$ is integral, so the LP relaxation attains an integer optimum. TU matrices arise in network and flow models, including bipartite matching/assignment, transportation, and min‑cost flow.}, this is rarely the case.  Moreover, a simple rounding of solutions to LP relaxations typically fails to yield high quality solutions to \eqref{e:ILP} without incurring exponential overhead.  

Alternatively, the first nontrival integer programming algorithms proposed combining information obtained from the solution to the LP relaxation with logical disjunction to describe and enumerate $\Pcal$. Given an optimal solution $x_{\text{LP}}^{\star}$ to \eqref{e:LP}, \textit{cutting plane algorithms}, which were first introduced by Gomory~\cite{gomory1958outline}, generate an improved formulation of \eqref{e:ILP} whenever $x_{\text{LP}}^{\star} \notin \Pcal$. This tightened formulation is obtained by adding a linear constraint that is satisfied by every point in $\Pcal$, but is not obeyed by $x_{\text{LP}}^{\star}$. Such an inequality is called a \textit{cut}. This process continues until enough cuts have been added so that the resulting LP relaxation has an integral optimal solution, thereby solving \eqref{e:ILP}. In Land and Doig's algorithm~\cite{land1960automatic}, whenever $x_{\text{LP}}^{\star} \notin \Pcal$ the feasible region of the integer program is divided into subregions  satisfying $x_{\text{LP}}^{\star} \notin \Pcal_1$ and $x_{\text{LP}}^{\star} \notin \Pcal_2$. A canonical example is to branch on a fractional coordinate $x_{\text{LP}, j}^{\star} \notin \Z{}$ by considering $\Pcal_1 := \{x \in \Pcal : x_j \leq \lfloor x_{\text{LP}, j}^{\star} \rfloor \}$ and $\Pcal_2 := \{x \in \Pcal : x_j \geq \lceil x_{\text{LP}, j}^{\star} \rceil \}$. This process continues until a global certificate of optimality (or infeasibility) is obtained. Over the years, many different strategies for decomposing $\Pcal$ have been developed along these lines, and the resulting family of methods are termed as \textit{branch-and-bound}; see \cite{achterberg2007constraint} for a modern survey. Using cutting planes within branch-and-bound gives rise to \textit{branch-and-cut}. Variants of these algorithms underlie the software in \cite{gurobi, cplex2009v12, SCIPOptSuite10}. 

Stronger relaxations are therefore central. Working with a relaxation whose feasible region more closely approximates the original feasible region $\Pcal$ produces tighter bounds on $\OPT$, and reduces the search footprint. Gomory’s \textit{group relaxation}~\cite{gomory1965relation,gomory1967faces, gomory1969some} is a principled strengthening derived from an optimal LP basis. Assume $\rank(A) = m$ and let $\Bcal  \subset [n]$ denote an optimal basis of the LP relaxation. We can partition the variables and data according to $\Bcal $ and its complement $\mathcal{N} \coloneqq [n] \setminus \Bcal $, writing
$$ A = [A_\Bcal, A_\Ncal], \quad c = [c_\Bcal, c_\Ncal], \quad x = [x_\Bcal, x_\Ncal].$$
The group relaxation is formed by dropping the nonnegativity constraints on $x_\Bcal$ (i.e., the entries that took positive values in the optimal solution of the LP relaxation) in \eqref{e:ILP},  leading to:
\begin{equation}\tag{$\text{ILP}_{\Bcal }$}\label{e:group_prob_intro}
       \OPT_\Bcal \coloneqq  \min \left\{c^{\top} x: x \in \Pcal_{\Bcal} \right\},
\end{equation}
where 
$$\Pcal_{\Bcal} := \{ (x_{\Bcal} , x_{\Ncal}) \in \Z{m} \times \Z{n-m}_{\geq 0} : Ax = b \}.$$

It is immediate that
$
\OPT_\Bcal \le \OPT,
$
since $ \Pcal = \Pcal_{\Bcal} \cap \Zmbb _{\ge 0}^n$. Note that $ \Pcal_{\Bcal} \not\subset \Pcal_{\text{LP}}$ in general, because the group relaxation permits $x_{\Bcal}$ to take negative values. However,
$
\OPT_{\text{LP}} \le \OPT_\Bcal
$. To see this, note that for $(x_{\Bcal},x_{\Ncal}) \in \Pcal_{\Bcal}$, we may use invertibility of $A_{\Bcal}$ and write
$$
c_\Bcal^\top x_\Bcal +c_\Ncal^\top x_\Ncal
= c_{\Bcal}^\top A_{\Bcal}^{-1}b + \big(c_{\Ncal} - A_{\Ncal}^\top (A_{\Bcal}^{-1})^\top c_{\Bcal}\big)^\top x_{\Ncal},
$$
where $\bar{c}_{\Ncal} \coloneq c_{\Ncal} - A_{\Ncal}^\top (A_{\Bcal}^{-1})^\top c_{\Bcal} \geq 0$ are the usual LP \textit{reduced costs} for the nonbasic indices. It is well-known~\cite{schrijver1998theory, wolsey1999integer} that, upon eliminating $x_{\Bcal}$ via $x_{\Bcal} = A_{\Bcal}^{-1}(b - A_{\Ncal}x_{\Ncal})$, the group relaxation \eqref{e:group_prob_intro} can be equivalently expressed as
\begin{equation}\label{group_prob_intro_algebraic}
\OPT_\Bcal
= \min_{x_{\Ncal} \in \mathbb{Z}_{\geq 0}^{n-m}} \left\{\, c_{\Bcal}^{\top} A_{\Bcal}^{-1} b + \bar{c}_{\Ncal}^{\top} x_{\Ncal} : A_{\Bcal}^{-1}A_{\Ncal}x_{\Ncal} \equiv A_{\Bcal}^{-1}b \pmod{\Zmbb^m} \,\right\}.
\end{equation}  
% where $Z \coloneqq \bigoplus_{j=1}^{n-m} \Zmbb_{s_j}$, and $s_j$ is the order of the $j$-th column of $A_\Ncal$ in the lattice generated by $A_{\Bcal}$. 
Since $\Bcal$ is optimal $c_{\Bcal}^\top A_{\Bcal}^{-1}b  = \OPT_{\text{LP}}$, which combined with nonnegativity of $\bar{c}_{\Ncal}$ asserts that feasible solutions to \eqref{group_prob_intro_algebraic} satisfy
$$
c_{\Bcal}^\top A_{\Bcal}^{-1}b + \bar{c}_{\Ncal}^\top x_{\Ncal} \geq \OPT_{\text{LP}}.
$$
As a consequence, $ \OPT_\Bcal \ge \OPT_{\text{LP}}$ and together, we have the chain
$
\OPT_{\text{LP}} \le \OPT_\Bcal \le \OPT.
$

The group relaxation can, under suitable conditions, solve the original ILP to optimality. Gomory originally showed~\cite{gomory1965relation} that \eqref{e:group_prob_intro} is an ``asymptotic relaxation'': it recovers the ILP optimum whenever the right‑hand side $b$ is sufficiently large~\cite[Theorem 5.6]{wolsey1999integer}. More generally, Lasserre~\cite{lasserre2004generating, lasserre2009linear} identified a discrete nondegeneracy condition under which the group relaxation is exact: $ \OPT_\Bcal = \OPT$ unless there exists another optimal LP basis $ \Bcal' \neq \Bcal$ with $ \OPT_{\Bcal'} = \OPT_\Bcal$. A certificate of optimality can also be obtained efficiently. If a group‑optimal solution $x^\star \in \Pcal_{\Bcal}$ happens to be feasible for the original ILP, then $x^\star$ is optimal for the ILP because $ \OPT_\Bcal \le \OPT$.

Even when the group relaxation is degenerate (so $\OPT_\Bcal < \OPT$), it still provides strictly stronger information than the plain LP relaxation. As a bounding oracle, it yields tighter lower bounds, often allowing branch‑and‑bound to fathom nodes that would pass an LP bound test. As a tightening oracle, the \textit{corner polyhedron}~\cite{gomory1965relation,gomory1967faces, gomory1969some}—the convex hull of group‑feasible points—has been tremendously impactful in integer programming due to its
connection to cutting planes~\cite[Chapter 6]{conforti2014integer}. The geometry of the corner polyhedron exposes facet inequalities valid for the original ILP that typically dominate standard LP‑derived cuts, further shrinking the integrality gap~\cite[Chapter 19]{junger200950}.

Finally, while $\Pcal_{\Bcal}$ (and hence the feasible set of \eqref{group_prob_intro_algebraic}) can appear to be infinite as written, it is known that whenever the group relaxation is feasible there exists an optimal solution with nonbasic coordinates bounded by their orders, i.e., $x^{\star}_j \in \{0,\dots,u_j-1\}$ for $j\in\Ncal$, where $u_j$ is determined by the lattice generated by $A_{\Bcal}$~\cite{shapiro1968dynamic, lasserre2004generating}. Accordingly, one may work over an \emph{optimum‑preserving} finite feasible subset
$$
\overline{\Pcal}_{\Bcal} \coloneqq \left\{\, x_{\Ncal} \in \bigoplus_{j=1}^{n-m} \mathbb{Z}_{u_j} \;:\; A_{\Bcal}^{-1}A_{\Ncal}x_{\Ncal} \equiv A_{\Bcal}^{-1}b \pmod{\Z{m}} \,\right\}.
$$
% \dyh{So do we want to change this? Technically, depending on whether we use the $r$'s or $s$'s the below thm referencing this could be incorrect. although, it's informal so maybe it's ok. Also, the later sections are currently written as if the finite version as not been introduced yet. We can chat about what to do too.}\ba{So I was just trying to provide high-level intuition that informs the reader why we can work over a bounded set. I am happy to change this further, but I changed the s's to $u$ so that it could be kept informal}
In this bounded‑residue view, feasibility is captured by a linear congruence over a finite abelian group and the feasible points form a coset. For more details, see Section~\ref{sec:primal}. This structure is well suited to quantum primitives—finite abelian groups admit efficient quantum Fourier transforms~\cite{kitaev1995quantum}, and uniform superpositions over residue classes can be prepared coherently.

\subsection{Contributions}
The main contributions in this paper can be summarized as follows:  
\begin{enumerate}
    \item We give a competitive classical algorithm for solving Gomory’s group relaxation via a feasibility–preserving local search over a finite abelian group.
    \item We design a quantum algorithm within the generalized short–path framework~\cite{chakrabarti2024generalized}, provide efficient state–preparation and block–encodings, and identify conditions under which the speedup is \emph{super–quadratic}.
\item We construct \emph{constraint–preserving mixers} over the group relaxation’s feasible set—(i) a product–of–cycles with explicit log–Sobolev bounds and (ii) an expander with constant spectral gap. We also show that the ground states of the corresponding mixing Hamiltonians can be prepared efficiently. Our approach therefore provides feasibility‑preserving, good‑gap primitives for short‑path/QAOA/QAA on group relaxations; and these are of independent interest. 
\end{enumerate}

We begin first by informally stating our main result as follows.
\begin{theorem}[Main Result, informal statement of Theorems \ref{thm:main-speedup} \& \ref{thm:main-speedup-poincare}]
\label{thm:main-informal}
Let $\Bcal$ be an optimal basis for the LP relaxation of \eqref{e:ILP}. Define the set of optimal solutions to the (finite) group relaxation associated with this basis to be 
$$\overline{\Pcal}_{\Bcal}^{\star} := \{ x^{\star}_{\Ncal} \in \overline{\Pcal}_{\Bcal} : \OPT_{\textup{LP}} + \bar{c}_{\Ncal}^{\top} x^{\star}_{\Ncal}  = \OPT_\Bcal\}.$$
We construct a quantum algorithm that, with high probability, finds an optimal solution $x^{\star}_{\Ncal} \in \Pcal_\Bcal^\star$ with runtime 
$$
    \mathcal{O} \left( \left( |\overline{\Pcal}_{\Bcal}|/|\overline{\Pcal}_{\Bcal}^\star| \right)^{\frac{1}{2}-\alpha(n)} \;\poly (n, L) \right),$$
where
$\alpha (n) > 0$ and satisfies $\alpha (n) = \Ocal_n(1)$. Under mild technical assumptions, $\alpha (n) = \Theta_n(1)$, in which case the speedup is genuinely super-quadratic.  
\end{theorem}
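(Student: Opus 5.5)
The plan is to instantiate the generalized short-path framework of \cite{chakrabarti2024generalized} on the finite feasible set $\overline{\Pcal}_{\Bcal}$, viewed as a coset $x_0 + H$ of a subgroup $H \le G := \bigoplus_{j} \mathbb{Z}_{u_j}$. Concretely, this means the following.
\begin{itemize}
\item The problem Hamiltonian is $H_{\mathrm{prob}} = \sum_{x_\Ncal} \bar{c}_\Ncal^\top x_\Ncal \ket{x_\Ncal}\bra{x_\Ncal}$, restricted to the coset.
\item The driver is a constraint-preserving mixer $H_{\mathrm{mix}}$, namely a Laplacian of a Cayley-type graph on $H$ translated to $x_0 + H$.
\item Its ground state is the uniform superposition over $\overline{\Pcal}_{\Bcal}$.
\end{itemize}
The short-path Hamiltonian $H(b) = H_{\mathrm{mix}} + b\, H_{\mathrm{prob}}$ (suitably normalized, possibly with a power of the objective) is evolved from $b=0$ to a small constant $b$, and then amplitude amplification is applied on the resulting state with respect to the marked set $\overline{\Pcal}_{\Bcal}^\star$. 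The framework's guarantee gives a runtime of $(|\overline{\Pcal}_{\Bcal}|/|\overline{\Pcal}_{\Bcal}^\star|)^{1/2-\alpha}$ times the cost of block-encodings and state preparation. Here $\alpha$ is controlled by two ingredients: the spectral gap of $H(b)$ along the path, and the overlap gain of its ground state with the optimal set.

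\textbf{Step 1: efficient primitives.} I would first compute $H$ and a coset representative $x_0$ from a Smith normal form of $A_\Bcal^{-1}A_\Ncal$ modulo $\mathbb{Z}^m$. This is polynomial in $(n,L)$, and it yields generators of $H$. Then:
\begin{itemize}
\item The uniform superposition over $x_0 + H$ is prepared by a QFT over $G$, or directly by uniform sampling over the generator coefficients.
\item $H_{\mathrm{prob}}$ is block-encoded by arithmetic on $\bar{c}_\Ncal$.
\item $H_{\mathrm{mix}}$ is block-encoded as a sparse linear combination of shift operators by generators.
\end{itemize}
All of this costs $\poly(n,L)$.

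\textbf{Step 2: spectral input.} I would supply the mixer's functional inequalities in two versions.
\begin{itemize}
\item For the product-of-cycles mixer (Theorem~\ref{thm:main-speedup}), I would prove a log-Sobolev constant bound by tensorization of cycle log-Sobolev constants along the invariant factors of $H$. The framework then converts this into hypercontractivity, and hence into a lower bound on the ground-state overlap improvement with the low-energy set.
\item For the expander mixer (Theorem~\ref{thm:main-speedup-poincare}), I would use a Cayley graph on $H$ with a random $\Ocal(\log|H|)$ generating set, which gives a constant gap by Alon--Roichman. The Poincaré-based version of the short-path analysis then applies.
\end{itemize}

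\textbf{Step 3: $\alpha(n)$.} The last step is to show $\alpha(n)>0$ always, and $\alpha=\Theta(1)$ under a technical assumption. The assumption would concern the density of states of $\bar{c}_\Ncal^\top x_\Ncal$ near $\OPT_\Bcal - \OPT_{\mathrm{LP}}$, or a bounded normalized objective range, in the form required by \cite{chakrabarti2024generalized}.

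\textbf{Main obstacle.} I expect the crux to be Step 2 combined with Step 3: quantifying the gap along the path and the overlap gain in terms of the group's structure. Cycles of length $u_j$ have gap $\Theta(1/u_j^2)$, which can make $\alpha$ decay with $\max_j u_j$. I would try first to rescale $b$ by the mixer's log-Sobolev constant, and to show that the objective's Lipschitz constant with respect to the mixer metric keeps the improvement exponent a constant fraction of $\log|\overline{\Pcal}_{\Bcal}|$.
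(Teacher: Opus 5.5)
Your proposal follows essentially the same route as the paper. It runs the generalized short path on the feasible coset $\hat{\xmbf}+K$ with a Cayley-walk mixer whose ground state is the uniform superposition, uses SNF-based generator finding with polynomial-time state preparation and block-encodings, and handles the two mixers the same way: a tensorized log-Sobolev bound for the product of cycles (Theorem~\ref{thm:main-speedup}) and an Alon--Roichman expander with the spectral-gap version (Theorem~\ref{thm:main-speedup-poincare}).

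The only thing to sharpen is Step~3. The paper does not assume a density-of-states condition, since spectral density is derived from $\omega$ or $\delta$ inside the framework. Instead it assumes \eqref{eqn:condition1}, so that $\Delta_P\le\max_j\cycnormonew{h_j}{\mathbf{c}}$ gives $|E^\star|/\Delta_P=\Omega(\log(|K|/|K^\star|))$, and, for the product-of-cycles walk, \eqref{eqn:condition2}, which forces $\omega^{-1}=\Theta(\log(|K|/|K^\star|))$. Your idea of rescaling $\mu$ cannot rescue long cycles, because $\alpha$ is linear in $\mu<\mu^\star$ and $\mu^\star$ scales like $\omega^2$; this is exactly why the expander version drops \eqref{eqn:condition2}.
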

This theorem combines each of our main contributions. Firstly, as discussed, the classical algorithm that we introduce locally explores the feasible space of the group relaxation and hence incurs a cost of $\Ocal \left((|\overline{\Pcal}_{\Bcal}|/|\overline{\Pcal}_{\Bcal}^\star| ) \; \poly (n, L) \right)$. Our quantum algorithm is therefore an improvement over a direct Grover-like speedup of this classical algorithm. In fact, this improvement occurs under the technical conditions that make $\alpha(n) = \Theta_n(1)$. These conditions will be discussed in more detail in the technical sections, and we will give constructive examples where they are satisfied. 

We also note that each of these developments are in fact \emph{necessary} to obtain this improvement. The framework of \cite{chakrabarti2024generalized} as well as several other quantum frameworks that may enable super-quadratic speedups require a ``mixer" Hamiltonian that drives the exploration of the feasible space. In the following subsections, we explore the main technical ideas behind each contribution.

\paragraph{Solving the group relaxation with null space walks} 
As discussed in our motivations, quantum algorithms for discrete optimization can struggle to handle constraints in a natural way. For example, a large class of methods are based on quantum Hamiltonians, which encode the cost function in a diagonal Hermitian operator and perform ``mixing" operations, via a non-diagonal Hamiltonian, to transform from some initial configuration to the optimal solution, i.e. a minimal energy configuration of the diagonal Hamiltonian. To handle constraints, a penalty term is often added to the objective, or, in situations where the constraint set is simple (e.g., a single Hamming weight constraint) a constraint-preserving mixer is used. This typically has detrimental effects on performance, however, since either approach often disrupts the spectral properties of the relevant Hamiltonians which determine the efficiency of such approaches. This is connected to a similar difficulty for \emph{classical} algorithms, where most algorithms that handle constraints are global and exhaustive. Local algorithms that explore the feasible space directly, either resort to penalty terms or are limited to constrained spaces that can be efficiently ``enumerated." Since the enumeration of the feasible space of an ILP is \textsf{NP-complete} in general, this is only applicable to very specific ILPs.

The first ingredient of our quantum algorithm for the group relaxation is therefore a corresponding classical algorithm that is only based on a local search.
Classical methods for solving the group relaxation are rooted in Gomory’s original work, and run in time $\Ocal(n \; |\det A_{\Bcal }|)$, see, e.g.,~\cite[Proposition~4.4]{aardal2002non}. These involve searching for the shortest path between two elements (the identity and an element determined by the right-hand side vector $b$) in a directed Cayley graph associated with a particular finite abelian group $G$, where the presented runtime uses the upper bound $\lvert G \rvert \leq |\det A_{\Bcal }|$. The group $G$ is related to the integer span of the constraint matrix $A_{\Bcal}^{-1}A_{\Ncal}$. This algorithm does not directly explore the feasible space locally; it actually always first encounters infeasible points before finding the solution. Nevertheless, the algorithm does use more local information than an exhaustive search method. There is also no current quantum algorithmic framework that facilitates a path to super-quadratic speedups for this shortest-path approach, and we leave this as an interesting potential direction for future work. 

To ameliorate these difficulties, we move the search to the null space of the constraint matrix, $A_{\Bcal}^{-1}A_{\Ncal}$. Specifically, we reduce \eqref{e:group_prob_intro} to a problem with a domain $Z$ that is a finite abelian group. The feasible set for this finite problem is a coset $\overline{\Pcal}_{\Bcal} = \hat{\mathbf{x}}+K \subseteq Z$, where $\hat{\mathbf{x}}\in Z$ is any feasible residue vector and $K \leq Z$ is the kernel of the map $ \zmbf \mapsto A_{\Bcal }^{-1}A_{\mathcal{N}}\zmbf \pmod{\Z{m}}$, $\zmbf \in Z$. Note that $|\hat{\mathbf{x}}+K|=|K|$. In Section~\ref{sec:primal}, we construct a random walk on $K$, where each step of the walk remains within the feasible set by design. This approach ensures that feasibility is preserved at every step, and allows us to efficiently prepare a uniform superposition over feasible solutions to initialize the algorithm. We show in Lemma~\ref{lem:kernel_gen_finding} that an arbitrary generating set for $K$ can be found in $\mathcal{O}\left(\textup{poly}(n, L)\right)$ classical time. 

% Conversely, our quantum algorithm performs a random walk over the kernel subgroup $K \le Z$ that preserves feasibility, where $Z$ is the finite ambient space group, and obeys $|Z| = |G|\; |K|$. Our quantum algorithm thus yields a super-quadratic speedup whenever $|K|/|K^\star|$ is not too large relative to $|G|$, where $K^\star$ is the set of optimal solutions in the feasible coset.
% \sh{Could someone with a bit more context add a paragraph here about Dylan's construction and its generality. We should also comment on why it is reasonable to expect this to be competitive with Gomory's approach in a reasonable setting.} 
It remains to argue that the local search algorithm is comparable to preexisting classical methods. Particularly, our algorithm is competitive with Gomory's shortest-path approach when $\lvert K \rvert$ is comparable to $\lvert G \rvert$.
%Since we are searching the null space via random walk, we would require $|K|$ to not be too large relative to $|G|$. 
In Section~\ref{s:compare}, we demonstrate a family of ILPs that satisfy this condition. However, one can determine this for any particular problem efficiently.

Equipped with our local search algorithm for the group relaxation, we are now ready to explore corresponding quantum speedups. While we rely on the framework of \cite{chakrabarti2024generalized} to obtain rigorous runtimes, we point out that this walk is potentially useful to derive mixing Hamiltonians for a variety of other quantum algorithms. These include popular heuristics such as the QAOA and quantum annealing, as well as algorithms such as the quantum adiabatic algorithm whose runtime analysis diverges from that considered here.

% \ba{May want to mention here: The nullspace walk may be of independent interest as a means for handling constraints/tackling ILPs with QAOA, QAA...}\guney{Algorithms like QAOA and QAA require some sort of local mixer to work which is what nullspace walk is providing. I think this would be interesting to point out as an additional contribution.  }\dyh{Yeah this also goes back to a comment i added above. Here we are technical able to tackle problems that you can't enumerate, ILPs (i.e. modulo the group problem being an approximation to the real ILP). This allows us to start with ``approximately feasible'' yet more non-uniform states. Contrast with QAOA for MIS starting from empty set. I know this is a stretch but might be worth mentioning.}

\paragraph{Generalized Short Path algorithm}

Our work extends the generalized short path\footnote{Note that here and throughout, ``short path'' has nothing to do with the shortest path problem.} (SP) algorithm of~\cite{chakrabarti2024generalized}, which itself builds on the SP framework of Hastings~\cite{hastings2018shortPath} and Dalzell et al.~\cite{dalzell2022mind}. These algorithms refine the quantum adiabatic algorithm (QAA)~\cite{farhi2000quantum} by introducing a piecewise-linear transformation of the cost Hamiltonian and replacing slow adiabatic evolution with two quantum ``jumps'' between ground states. Let $\Xcal$ be a finite set, and consider the optimization problem $\min_{x \in \Xcal} f(x)$, encoded as a quantum Hamiltonian $H$ with $H \ket{x} = f(x)\ket{x}$. Let $E^\star<0$ denote the ground state energy, i.e., the optimal objective value $f(x^\star)$. The generalized SP Hamiltonian is
\begin{equation}
    H_{\mu} = -D(P) + \mu \,\theta_{\eta} (H/|E^\star|),
    \label{eq:SP-hamiltonian}
\end{equation} 
where $\theta_\eta(x) \coloneqq \min\{0, (x + 1 - \eta)/\eta\}$ is a piecewise-linear function, $\mu > 0$ is a tunable parameter, and $D(P)$ is the discriminant matrix of the Markov chain over the feasible set (see Definition~\ref{defn:discriminant}). The algorithm then starts by preparing the ground state of the discriminant matrix, and then prepares the ground state $\ket{\psi_\mu}$ of an intermediate Hamiltonian $H_{\mu}$ where the chosen $\mu$ depends on the properties of the random walk and the cost function. The key idea is that the probability measure $|\psi_{\mu}|^2$ is more concentrated on the optimal solutions than the stationary distribution of the classical random walk. By using quantum amplitude amplification, this algorithm overall uses super-quadratically fewer queries to the cost function $H$ than existing classical algorithms. 

Using the generators of the null space $K$, one can instantiate a feasibility-preserving product–of–cycles Cayley walk with explicit spectral gap bounds. One can also leverage the nice expansion properties of Cayley graphs, and apply the \textit{Alon-Roichmain theorem}~\cite{alon1994random} to construct an expander walk with spectral gap lower bounded by a constant. Both walks yield runtime guarantees that scale with the number of feasible solutions, and achieve a constant exponent improvement over direct amplitude amplification on $K$ under mild technical conditions. For each walk, we show how to prepare the corresponding ground states efficiently, and construct block‑encodings for both $D(P)$ and $H$. The result is an end‑to‑end, feasibility‑preserving pipeline for group relaxations that advances short path (and, by extension, QAOA/QAA) beyond prior unconstrained‑centric applications.

\paragraph{Discussion} It is clear that our result satisfies most of the criterion laid out in Question~\ref{qstn:main}. Specifically, the group relaxation can be applied to general ILPs with many constraints, the new classical algorithm is based on local search, and a super-quadratic quantum speedup can be obtained. 

The only remaining question is that of \emph{usefulness}. After all, Theorem~\ref{thm:main-informal} only offers us a way to solve a relaxation of the ILP, and not the ILP itself, and it is not immediately clear whether there is an impact for the latter problem. 
Solving the group relaxation can be helpful for solving the ILP in two ways. Firstly, when the group relaxation is nondegenerate (see, Definition~\ref{def:non_degen}), the group relaxation solves the ILP directly. In this case, the new classical algorithm that we are accelerating is in fact solving the ILP itself. Whether the quantum algorithm achieves a quantum speedup over \emph{state-of-the-art} classical algorithms for such ILPs depends not only on the size of the feasible set $|\overline{\Pcal}_{\Bcal}|$, but on the performance of state-of-art solvers for these problems (it could be the case, for example that ILPs with nondegenerate group relaxations are easier for these solvers). 

A different setting is when the group relaxation is used as the bounding procedure for an outer Branch-and-Bound algorithm. It is known that the performance of Branch-and-Bound algorithm is heavily influenced by the strength of the bounding procedure that is used at each node, and a stronger bounding procedure can substantially reduce the number of nodes that must be explored by the branching procedure. A quantum speedup for the group relaxation can therefore indirectly accelerate a branch-and-bound solver if the resulting optimality gap is lesser than that obtained from the continuous LP relaxation. In Section~\ref{sec:numerical}, we test this hypothesis on a mixture of benchmark and synthetically generated integer problems. We find that the group relaxation indeed substantially reduces the gap compared to the LP relaxation. Furthermore, we observe that for a substantial fraction of tested problems, solving the group relaxation produces a solution for the original ILP.

Beyond pure ILPs, the group relaxation machinery extends naturally to \emph{mixed‑integer linear programs} (MILPs), which are problems of the form:
\begin{equation}\label{e:MILP}\tag{MILP}
\textsf{OPT}_{\text{MILP}} \coloneqq \min \left\{ c^{\top} x : Ax = b,\; x \in \mathbb{Z}_{\ge 0}^{p} \times \mathbb{R}_{\ge 0}^{\,n-p} \right\}.
\end{equation}
MILPs blend integer and continuous decisions and thus enable one to model a wide range of challenging applications. Wolsey~\cite{wolsey1971extensions} showed that, after clearing denominators in the continuous block by choosing
$$
\mathbf{D} \coloneqq \mathrm{lcm}\big\{|\det A_{\Bcal}| : \Bcal \in \mathscr{B} \cap \{p{+}1,\dots,n\}\big\}
$$
and substituting $z_j=\mathbf{D}\,x_j$ for the continuous variables, every optimal solution to \eqref{e:MILP} is obtained by solving the resulting ILP. This reduction allows the group relaxation techniques developed here to apply verbatim. That said, computing a practically useful $\mathbf{D}$ (and the requisite bases) at scale is nontrivial; devising scalable strategies for this step is beyond the scope of this work and left to future investigation.

\subsection{Related work}

\paragraph{Integer linear programming}
We give a high-level overview of algorithmic development for ILP. For a thorough treatment of the field, we refer the reader to the classic texts \cite{grotschel2012geometric,wolsey1999integer,schrijver1987polyhedral,conforti2014integer}. Gomory's cutting plane method was the first algorithm for solving (mixed) integer programming problems~\cite{gomory1958outline}. Papadimitriou~\cite{papadimitriou1981complexity} proved that ILP was in \textsf{NP}, and as a by product, gave a pseudopolynomial-time dynamic programming algorithm for ILPs with a fixed number of constraints. 

A significant breakthrough was made by Lenstra~\cite{lenstra1983integer}, who showed that ILPs are polynomial time solvable when the number of variables is fixed. A careful complexity analysis shows that Lenstra's algorithm runs in time $2^{\Ocal(n^2)}$ 
times a polynomial in the encoding size $L$ of the problem data. Kannan improved the leading factor to $n^{\Ocal(n)}$ shortly after \cite{kannan1983improved}. Both algorithms fundamentally rely on lattice point enumeration techniques. \textit{Lenstra-type} algorithms utilize a thinnest direction to decompose the IP into lower dimensional subproblems which correspond to parallel lattice hyperplanes. \textit{Kannan-type algorithms} generalize Lenstra-type algorithms by decomposing the feasible region along lattice shifts of a linear subspace. 

Building on these ideas, Dadush~\cite{dadush2012integer} developed an improved Kannan-type algorithm for the more general problem of \textit{convex integer programming}, where the task is to decide whether a convex set $\Kcal \subseteq \R{n}$ contains an integer point. In his thesis, Dadush showed that his algorithm runs in time $2^{\Ocal(n)} n^n \cdot \poly (L)$. He further observed that, assuming a conjecture of Kannan and Lov\'asz~\cite{kannan1988covering} on the best volume‑based lower bound for the covering radius of a convex body with respect to a lattice\footnote{Intuitively, the covering radius is the smallest scaling of a convex body that guarantees a lattice point in every translate.}, the overall complexity could be tightened to $(\log(2n))^{\Ocal(n)} \cdot \poly(L)$. In a recent breakthrough, Reis and Rothvoss~\cite{reis2023subspace} gave a constructive resolution of the \textit{subspace flatness conjecture}, thereby enabling Dadush’s algorithm to attain this improved bound.

The running time of the algorithm described in \cite{reis2023subspace} cannot be reduced in general, and whether there exists a single exponential time algorithm for exact ILP remains open.\footnote{Kanann and Lova\'sz~\cite{kannan1988covering} give an instance that saturates the $\log (n)$ lower bound for the covering radius.} Alternatively, Dadush, Eisenbrand and Rothvoss~\cite{dadush2025approximate} give an algorithm based on reducing the exact ILP problem to a sequence of approximate ones. These intermediate problems are solved using an algorithm for approximate IP from Dadush~\cite{dadush2014randomized}, which runs in time $2^{\Ocal(n)}$. This approach leads to a framework for exactly solving general ILPs in time $2^{\Ocal(n)} \cdot n^n$, but matches the state of the art $(\log (2n))^{\Ocal(n)}$ complexity bound for problems with solutions whose entries are at most $\poly (n)$.

\paragraph{Quantum algorithms for integer and combinatorial optimization}
By now quantum proposals for combinatorial optimization have been investigated for over two decades. The Quantum adiabatic algorithm~\cite{farhi2000quantum} and Quantum Approximate Optimization algorithm~\cite{farhi2014quantum} are the most thoroughly studied frameworks. 

There are also universal quantum speedups for backtracking~\cite{montanaro2018backtracking,ambainis2017quantum}, which themselves furnish universal quantum speedups for branch-and-bound~\cite{montanaro2020branchAndBound,chakrabarti2022universal}. However, the speedup here is quadratic. More recently, Decoded Quantum Interferometry~\cite{jordan2024optimization} has been proposed for solving optimization problems over finite fields. This approach is based on Regev's reduction~\cite{regev2009lattices}, and obtains exponential speedups for approximately solving the \textit{optimal polynomial intersection} problem. 
%\sh{Point out that to the best of our knowledge this is the first super-quadratic speedup for a ``general" integer programming problem?}

\paragraph{The group relaxation and the corner polyhedron}
Gomory introduced and studied the group relaxation and the corner polyhedron in \cite{gomory1965relation,gomory1967faces,gomory1969some}. Wolsey~\cite{wolsey1971extensions} generalized the group relaxation to \textit{mixed integer programming} shortly after, and introduced \textit{extended relations}. Extended relations are all of the possible group relaxations one can get upon dropping successively fewer nonnegativity constraints. One of the extended relaxations is guaranteed to solve the ILP to optimality, since the original ILP trivially corresponds to the extended relaxation obtained from dropping no non-negativity constraints. 

There are several studies that investigate strategies for mitigating poor performance in the degenerate setting~\cite{bell1977convergent,gorry1973computational,wolsey1973generalized,wolsey1999integer}. Empirical studies on the tightness of the group relaxation and the closely related \textit{corner relaxation} are provided in~\cite{fischetti2008tight,cornuejols2012tight}.

In addition to characterizing the nondegeneracy condition, Lasserre~\cite{lasserre2004generating} showed that group relaxations are related to a generalized residue formula from Brion and Vergne~\cite{brion1997lattice}, which serves as the foundation for efficient lattice point enumeration in polytopes when the dimension is a fixed constant~\cite{barvinok1994polynomial,barvinok1999algorithmic,de2004effective}. 

\subsection*{Paper organization}
The rest of this paper is organized as follows. Section~\ref{sec:prelim} establishes notation and reviews the necessary background material, covering finite abelian groups, lattices, Markov chains, linear programming and the generalized short–path framework from \cite{chakrabarti2024generalized}. In Section~\ref{sec:primal}, we develop the null space formulation of the group relaxation, and present a classical Markov–chain local search over the kernel. In Section~\ref{s:quantum}, we give the quantum algorithm: we specify state preparation and block–encodings, and analyze two mixers (product–of–cycles and expander walks) to obtain our main runtime guarantees. Section~\ref{sec:numerical} reports empirical results on applying the group relaxation to synthetic cutting–stock instances and every pure ILP found in MIPLIB~2017~\cite{miplib2017}. Appendix~\ref{app:kernel_compression} details practical strategies for optimizing the null space computations, and Appendix~\ref{sec:alt_speedup} discusses an alternative speedup avenue (Gibbs sampling) and how this setting interacts with our assumptions.  

\section{Preliminaries}\label{sec:prelim}
Let $[d] = \{ 1, \dots, d\}$, and $\R{n}_{\geq 0}$ (resp. $\R{n}_{> 0}$) denote the nonnegative (resp. positive) orthant. We represent the $n \times 1$ all-ones (resp. all-zeros) vector by $\mathbf{1}_n \in \R{n}$ (resp. $\mathbf{0}_n \in \R{n}$). Let $I_m$ denote the $m\times m$ identity matrix. Given a matrix $A$, let $A_j$ refer to the $j$th column of $A$. More generally, given a subset $S \subseteq [n]$ of indices, let $A_S$ denote the submatrix whose columns are indexed in $S$. Similarly, a vector $v$ has subvectors $v_S$ obtained by collecting only those entries indexed in $S$. Given a vector over $\Rmbb$, or equivalently a real-valued function whose domain is finite, the support $\supp(v)$ of $v$ is the collection of indices such that $v_i \neq 0$. We let $\|\cdot\|_p$ denote the usual $\ell_p$-norm on $\mathbb{R}^d$ for $p \in [1,\infty]$, and the $\ell_0$-norm counts the number of non-zero entries. We use $\mathbf{e}_j \in \mathbb{R}^{n}$ to denote the $j$-th, $n$-dimensional standard basis vector.

\subsection{Abelian groups and lattices}\label{subsec:groups_lattices}

For the basic definitions and properties of groups, we refer the reader to any abstract algebra textbook such as~\cite{lang2012algebra}. Here we provide some key results and definitions pertinent to this work.
 
A group $G$ is called \emph{abelian} if its group binary operation is commutative. In the abelian setting, it is customary to represent the group operation using the plus symbol, its identity using $0$, its inverse using a minus sign, etc. All groups considered in this work are abelian. For an integer $k$ and $g \in G$, let $k g$ denote repeated addition of $g$ or its inverse $k$ times, depending on the sign of $k$. Let $\cong$ denote group isomorphism. We will use the notation $S \leq G$ to denote subgroups $F=S$ of $G$, and $G/S = \{g + S : g \in G\}$ to denote the quotient group, whose elements are cosets $g + S = \{g+s : s\in S\}$. Given $a, b \in G$, we write $a = b \pmod{S}$ if $a \in b + S$. To maintain brevity, we frequently use $a$ to refer to the coset $a + S$, and context should make clear when the quotient space is being discussed.

Given a subset $S \subseteq G$, let $\langle S\rangle$ denote the smallest subgroup of $G$ containing $S$. We say that $S$ \emph{generates} $G$ if $\langle S \rangle = G$. Equivalently, $S$ generates $G$ if every $g \in G$ can be obtained from finite products and inverses of elements in $S$. A group $G$ is called \emph{cyclic} if it has a single generator $g$, i.e., $G = \langle g\rangle$. The \emph{order} of a group $G$ is simply its cardinality as a set $\abs{G}$. We also refer to the order of an element $g$ to mean the order of $\langle g\rangle$.

Let $\Zmbb_r$ denote the additive group of integers modulo $r$. The \emph{fundamental theorem of finite abelian groups} states that every nontrivial finite abelian group $G$ is isomorphic to a direct sum 
\begin{align*} 
    G \cong \Zmbb _{r_1} \oplus \Zmbb _{r_2}\cdots \oplus \Zmbb _{r_k},
\end{align*}
for integers $r_1,\ldots, r_k \geq 2$. There are several canonical choices for the integers $r_j$; here we typically consider the \emph{invariant factor} decomposition, whereby $r_j \vert r_{j+1}$ (i.e., $r_j$ divides $r_{j+1}$), for all $j\in [k-1]$. In this case, the $r_j$ are called invariant factors, and the decomposition is unique.

A \emph{norm} $\norm{\cdot}:G \rightarrow \Rmbb_{\geq 0}$ over an abelian group $G$ is a real-valued function that is positive definite, subadditive, and inverse symmetric ($\norm{-g} = \norm{g}$). Of interest in this work is a group norm defined below. 
% We say $\Zmbb_{a}$ is written in \emph{balanced residues} when the elements are defined symmetrically about zero. For example, if $a = 2k+1$ is odd, the elements are labeled $-k$ to $k$.
% \dyh{without balance residues, should be good.}
\begin{definition}[Cyclic metric]\label{def:cyc-norms}
    For $v \in \bigoplus_{i=1}^d  \Zmbb_{a_i}$ and nonzero $w \in \R{d}$, define
    $$
    \cycnormonew{v}{w}  \coloneqq \sum_{i=1}^d |w_i|\,\max\!\big\{v_i,\, a_i - v_i\big\}.$$
\end{definition}

% \dyh{
% So I think we want
% \begin{align*}
%     \lVert \sum_i \lvert w_i \rvert\lvert x_j \pmod{a_j} - x_j' \pmod{a_j}\rvert
% \end{align*}can u go back to waht u had?}

For our purposes, a \emph{lattice} is a discrete abelian subgroup of $\R{m}$ given by
\begin{equation*}
    \Lambda = \left\{ \sum_{i=1}^n z_i B_i : z_i \in \Zmbb \right\}
\end{equation*}
where $B_1,\ldots, B_n \in \R{m}$ are linearly independent. See~\cite{cassels1996introduction} for a comprehensive overview. If the $B_i$ also form a basis ($n=m$), the lattice is \emph{full rank}. We also denote $\Lambda$ by $B \Z{n}$, where $B$ has columns $B_i$. When $\ker(B) \neq 0$, i.e., the columns are linearly dependent, $B \Z{n}$ may not form a lattice but will in certain cases, such as if $B$ has rational entries. A square integer matrix $U$ is \emph{unimodular} if $\det U = \pm 1$. Such a matrix always has unimodular inverse, and two full-rank lattices $A \Z{n}$, $B \Z{n}$ are equal iff $A = B U$ for some unimodular $U$. We can think of $U$ as a change of basis matrix for a lattice, changing the fundamental cell while preserving the volume of that cell.

Any integer matrix $A \in \Z{m\times n}$ can be expressed in \emph{Smith normal form} (SNF) $A = URV$, where $U, V$ are unimodular with dimension $m, n$ respectively, and $R = \diag(r_1, \ldots, r_k)$ is a diagonal integer matrix with $r_i \vert r_{i+1}$ for all $i \in [k-1]$. The rightmost $r_i$ may be zero. The SNF can be computed in time polynomial in the dimension and log of the matrix norm~\cite{iliopoulos1989worst}, with improved performance in sparse instances. Given a full-rank lattice $A\Z{m}$ with square matrix $A$, the quotient group $Q = \Z{m}/ A\Z{m}$ is finite and
\begin{equation*}
    \abs{Q} = \abs{\det A}.
\end{equation*}
In fact, the SNF of $A$ fully determines the group structure in the sense that 
\begin{equation*} 
    \Z{m}/A\Z{m} \cong \bigoplus_{j=1}^m \Zmbb_{r_j}
\end{equation*} 
is the invariant factor decomposition. Specifically, $Q$ has independent generators which are the columns $u_j$ of $U$ (as cosets), where each $u_j$ generates the cyclic group $\Zmbb_{r_j}$.

\subsection{Basics of finite Markov chains}\label{subsec:markov}
Let $\Xcal$ be a finite set of size $N$. A \textit{Markov chain} on $\Xcal$ is a random process that moves between elements of $\Xcal$ according to an $N \times N$ transition matrix $P$, where $P(x, y)$ is the probability of moving from $x$ to $y$. The matrix $P$ is stochastic: $\sum_{y \in \Xcal} P(x, y) = 1$ for all $x \in \Xcal$. Probability distributions $\rho$ on $\Xcal$ evolve according to $\rho^{(t+1)} = \rho^{(t)} P$. After $t$ steps the state is $\rho^{(t)} = \rho^{(0)} P^{\top}$, $\rho^{(0)} = \rho$. A distribution $\pi$ is \textit{stationary} if $\pi P = \pi$. For any function $f: \Xcal \to \mathbb{R}$,
$$
(Pf)(x) = \sum_{y \in \Xcal} P(x, y) f(y).
$$ 

The \emph{reversed Markov chain} $\mathcal{M}^\star = (\Xcal, P^\star, \pi)$ shares the stationary distribution $\pi$ and has transition matrix $P^\star$ defined by
$$
\pi(x) P(x, y) = \pi(y) P^\star(y, x).
$$
The chain is \emph{reversible} if $P^\star = P$, and the chain is \emph{symmetric} if $P = P^{\top}$.

For reversible chains, there exists an inner product space over which $P$ is diagonalizable, with eigenvalues
$$
1 = \lambda_1 > \lambda_2 \geq \cdots \geq \lambda_n \geq -1,
$$
where $\lambda_1 = 1$ and $\lambda_2 < 1$. The \textit{spectral gap} is
$$
\delta = 1 - \max\{ |\lambda| : \lambda \in \{\lambda_2, \dots, \lambda_n\}, \lambda \neq \pm 1 \}.
$$
The chain is \emph{aperiodic} if the set
$\mathcal{T}(x) = \{ t \geq 1 : P^{\top}(x, x) > 0\}$ satisfies $\gcd(\mathcal{T}(x)) = 1, \forall x \in \Xcal$. This can always be assured by making the chain \emph{lazy}: 
\begin{align*}
    P_{\text{lazy}} = \frac{I + P}{2},
\end{align*}
which only halves the spectral gap but preserves the stationary distribution.

The \textit{mixing time} of a Markov chain is the amount of time it takes for the distance to stationarity to be small:
$$ t_{\text{mix}} (\varepsilon) \coloneqq \min \left\{ t : d(t) \leq \varepsilon \right\},$$
where $d(t) \coloneqq \sup_{\mu} \textup{TV}(\mu P^{\top}, \pi)$, and $\textup{TV}$ is the \emph{total variation distance}
$$\textup{TV}(\mu , \nu) \coloneqq \frac{1}{2} \| \mu - \nu \|_{1}.$$
The relationship between mixing time and the spectral gap for a reversible Markov chain can be expressed as 
$$ t_{\text{mix}} (\varepsilon) =\Ocal \left( \delta^{-1} \cdot \polylog \frac{1}{\varepsilon} \right).$$

We also introduce the concept of log-Sobolev inequalities.
\begin{definition}[Log-Sobolev inequality]
\label{defn:log-sobolev}
A Markov chain $\mathcal{M} = (\Xcal, P, \pi)$ satisfies a log-Sobolev  inequality if for all $f : \Xcal \rightarrow \mathbb{R}$,
$$
\frac{1}{2} \mathbb{E}_{x \sim \pi}\expP\left[ (f(x) - f(y))^2\right] \geq \omega\left(\mathbb{E}_{x\sim\pi}[f(x) \ln f(x)] - \mathbb{E}_{x \sim \pi}[f(x)] \ln \mathbb{E}_{x \sim \pi}[f(x)]\right).
$$
The largest such $\omega$ is called the log-Sobolev constant, and is related to the spectral gap via $\omega \leq \delta$.
\end{definition}

The \textit{discriminant matrix} is a useful operator for analyzing random walks:
\begin{definition}[Discriminant matrix]
\label{defn:discriminant}
For $\mathcal{M} = (\Xcal, P, \pi)$, the discriminant matrix is
$$
D(P)_{ji} = \sqrt{P_{ij} P^\star_{ji}} = \left( \operatorname{diag}(\pi)^{1/2} P \operatorname{diag}(\pi)^{-1/2} \right)_{ij}.
$$
\end{definition}

Lastly, we also define the concept of pseudo-Lipschitzness.
\begin{definition}[$P$-pseudo Lipschitz norm] %-- Definition 2.6 \cite{chakrabarti2024generalized}]
\label{defn:pseudo-Lipschitz}
Let $\mathcal{M} = (\Xcal, P, \pi)$ be a Markov chain. The $P$-pseudo Lipschitz norm of $f: \Xcal \rightarrow \mathbb{R}$ is defined to be
$$
   \lVert f \rVert_{P} \coloneqq \max_{x\in \mathcal{X}}\expP[(f(x) - f(y))^2].
$$
\end{definition}

\subsection{Background on linear programming}\label{ss:LP}
The \textit{linear optimization} problem and its dual can be written in standard form as
\begin{equation*}
   \min_{x \in \mathbb{R}^n} \left\{ c^{\top} x : Ax = b,\, x \geq 0\right\}, \quad  \max_{y \in \mathbb{R}^m} \left\{ b^{\top} y : A^{\top} y \leq c\right\},
\end{equation*}
where $A \in \Zmbb^{m\times n}$ and $b \in \Zmbb^m$ define a collection of $m$ linear constraints. Here and throughout, we assume $A$ is full rank. For our purposes, a \emph{basis} $\Bcal \subseteq [n]$ is a subset of $m$ indices such that the vectors $\{A_j\}_{j\in\Bcal}$ form a basis over $\Rmbb^m$ in the usual sense. A vector $x \in \mathbb{R}^n$ is called \emph{feasible} if $x$ satisfies the constraints, i.e., $Ax = b$ and $x \geq 0$. Moreover, a \textit{basic feasible solution} $x = (x_\Bcal, x_{\Ncal})$  is a feasible solution such that its nonzero entries are indexed by $\Bcal$. Note that the values of $x_{\Bcal}$ are uniquely determined by the system $A_{\Bcal } x_{\Bcal } = b$. The associated basic dual solution is $y_{\Bcal} = (A^{\top}_{\Bcal})^{-1} c_{\Bcal}$. 

Let $\mathscr{B}$ denote the collection of all feasible bases of the LP relaxation. For any basis $\Bcal  \in \mathscr{B}$, we define the complement $\mathcal{N} \coloneqq [n] \setminus \Bcal $, and the complementary parts $A_{\Ncal}, c_\Ncal$ of the constraints and the objective function, respectively. We will assume that $m = \Theta(n)$. Associated to any basis $\Bcal $, the \textit{reduced costs} (or dual slacks) are given by $\bar{c} \coloneqq c - A^{\top}_{\Ncal} y_{\Bcal}$. The reduced cost $\bar{c}_j$ for $j \in [n]$ quantifies the rate of improvement in the objective function per unit increase in $x_j$, subject to feasibility.

Let $(x^{\star}, y^{\star}) \in \R{n}_{\geq 0} \times \R{m}$ denote a primal-dual solution corresponding to basis $\Bcal  \in \mathscr{B}$. We say that $\Bcal $ is an \textit{optimal} basis if, in addition to primal-dual feasibility,  \textit{complementarity slackness} is satisfied:
$$
x^{\star} \odot \left(c-A^{\top} y^{\star} \right) = \mathbf{0}_n
$$
where $\odot$ denotes the element-wise product. When the LP is nondegenerate, $\Bcal = \supp(x^\star)$ and $\Ncal = \{i \in [n] : x_i^\star = 0\}$.

When the problem data $(A,b,c)$ is integer, we let $L$ denote the binary input length:
\begin{equation*}
    L \coloneqq \sum_{i=1}^m \sum_{j=1}^n \left[ 1 + \left\lfloor \log_2 |A_{ij}| \right\rfloor \right] + \sum_{i=1}^m \left[ 1 + \left\lfloor \log_2 |b_i| \right\rfloor \right] + \sum_{j=1}^n\left[ 1 + \left\lfloor \log_2 |c_j| \right\rfloor \right].
\end{equation*}
The optimal partition $(\Bcal ,\mathcal{N})$ can be found in time $\Ocal \left( n^3 L\right)$ using interior point methods~\cite{wright1997primal}.

\subsection{The generalized short path algorithm} 

Short path (SP) algorithms, first proposed by Hastings~\cite{hastings2018shortPath} and later refined by Dalzell et al.~\cite{dalzell2022mind}, represent a methodological advancement over the QAA~\cite{farhi2000quantum}. Below, we outline the core concepts.

While quantum computers cannot efficiently prepare the ground state for general Hamiltonians, certain Hamiltonians have ground states that are easy to construct. The QAA leverages this by starting with an initial Hamiltonian $H_0$ whose ground state is simple to prepare—commonly, the transverse-field operator $H_0 = -\sum_{i \in [n]} \sigma^x_i$, where $\sigma^x_i$ is the Pauli $X$ operator on qubit $i \in [n]$. The algorithm then defines a time-dependent Hamiltonian
$$
H_{\text{QAA}} (t) \coloneqq (1-\tau (t)) H_{0} + \tau (t) H,
$$
where $\tau :[0, T] \to [0,1]$ is a smooth, monotonic function satisfying $\tau(0)=0$ and $\tau(T)=1$ so that $H_{\text{QAA}}(0)=H_0$ and $H_{\text{QAA}}(T)=H$. According to the \textit{adiabatic theorem}~\cite{messiah2014quantum}, simulating the Schrödinger equation
$$
i \frac{\text{d}}{\text{dt}} \ket{x(t)} = H_{\text{QAA}}(t)\ket{x(t)}
$$
for sufficiently large $T$ prepares the ground state of $H$, provided that the minimum gap $\delta_\mathrm{min}$ between the lowest two eigenvalues is nonzero. Here
$$
\delta_\mathrm{min} := \min_{t \in [0, T]} \{ \lambda_{2} (H_{\text{QAA}}(t)) - \lambda_{1} (H_{\text{QAA}}(t)) \},
$$
where $\lambda_{1}$ and $\lambda_{2}$ are the smallest and second smallest eigenvalues, respectively.

The challenge is that $\delta_\mathrm{min}$ tends to shrink rapidly with system size, necessitating prohibitive simulation times $T$. Roughly, $T$ scales polynomially with the inverse gap. Obtaining analytical bounds on the spectral gap is notoriously difficult~\cite{hislop2012introduction}, but more importantly, the QAA is hampered by ``localization'' phenomena—quantum analogues of a classical particle getting trapped in a local minimum. %—known as avoided level crossings. 
Localization can induce (super) exponentially small gaps, resulting in run times that may be exponentially slower than classical brute-force search~\cite{altshuler2009adiabatic}.

SP algorithms avoid first-order phase transitions via two key modifications:
\begin{enumerate}
    \item Instead of using the standard adiabatic Hamiltonian $H_{\text{QAA}}$, the algorithm uses a slightly different interpolation between the mixer Hamiltonian $H_0$ and the final Hamiltonian $H$ which encodes a truncated and normalized cost function.
    \item The algorithm does not follow the adiabatic path, but performs two successive jumps called short jump and long jump to avoid the points where the spectral gap of the Hamiltonian decays (super) exponentially.
\end{enumerate}

The \emph{generalized} quantum SP algorithm, as developed in~\cite{chakrabarti2024generalized}, further provides a principled framework for achieving super-quadratic quantum speedups over not just over unstructured search, but also over classical Markov Chain search in combinatorial optimization. The key ingredient is a generalized mixer Hamiltonian over the Markov Chain graph of interest. Below, we review the algorithm and conditions for super-quadratic speedups.

\subsubsection{Short path Hamiltonian and algorithmic steps}
Let $\Mcal = (\Xcal, P, \pi)$ be a reversible, aperiodic Markov chain over a finite set $\Xcal$, where $\pi$ is the stationary distribution of $P$. 
The generalized SP Hamiltonian $H_\mu$ is defined as in \eqref{eq:SP-hamiltonian}. This construction generalizes the original SP framework to arbitrary reversible Markov chains and cost functions, as formalized in~\cite{chakrabarti2024generalized}. We define $E^{\star} \coloneqq \min_{z \in \Xcal} H(z)$, and subtract a scalar from the cost function to ensure $E^{\star} < 0$.

We outline the generalized SP algorithm in Algorithm~\ref{alg:generalized-SP}. The algorithm begins by preparing the ground state $\ket{\sqrt{\pi}} = \sum_{z \in \Xcal} \sqrt{\pi(z)} |z\rangle$ of $-D(P)$, a weighted superposition over the feasible set $\Xcal$. Next, quantum phase estimation and amplitude amplification are employed to carry out the short jump. This enables us to prepare the ground state $\ket{\psi_{\mu}}$ of $H_{\mu}$ for a suitably chosen value of $\mu$. Following this, amplitude amplification is applied to project the state onto the optimal solution subspace, taking advantage of the overlap created by the short jump, leading to the long jump. Finally, a measurement is performed in the computational basis to obtain a feasible solution, which is optimal with high probability. Each step is implemented using block-encoded quantum circuits~\cite{chakraborty2018power} and fixed-point amplitude amplification~\cite{yoder2014fixed}, ensuring polynomial overhead in the relevant parameters.

\begin{algorithm}
  \caption{\textsc{GeneralizedShortPathAlgorithm}}
  \label{alg:generalized-SP}
  \hspace*{\algorithmicindent} \textbf{Input}: Algorithmic parameters $\mu,\eta$. Problem parameters $H, P, \pi, E^{\star}$. \\
 \hspace*{\algorithmicindent} \textbf{Output}: A global minimizer of $H$.
  \begin{algorithmic}[1]
    \State Prepare $\ket{\sqrt{\pi}}$, the ground state of $-D(P)$.
    \State \textbf{Short Jump:} Given block-encodings of $-D(P)$ and $H_{\mu}$, prepare $\ket{\psi_\mu}$ (the ground state of $H_{\mu}$) up to exponentially small error using \cite[Proposition 1]{dalzell2022mind}.
    \State \textbf{Long Jump:} Given block-encodings of $H_{\mu}$ and $H$, prepare $ \frac{\Pi^{\star}\ket{\psi_\mu}}{\|\Pi^{\star}\ket{\psi_\mu} \|}$ up to exponentially small error using procedure outlined in \cite[Proposition 1]{dalzell2022mind}, where $\Pi^{\star}$ orthogonally projects onto the space of  global minimizers in $\Xcal$.
    \State Measure the resulting state in the computational basis.
  \end{algorithmic}
\end{algorithm}
% The long jump step could also be performed using generalized quantum minimum finding \cite[Theorem 49]{van2020quantum}.

\subsubsection{Runtime of the short path algorithm}
\label{sss:runtime_short_path}

The generalized short path algorithm makes use of the following input assumptions:
\begin{assumption}
\label{asm:input-assumptions}
The following subroutines are used as primitives in the Generalized Quantum SP framework.
\label{asm:quantum-input}
    \begin{enumerate}
        \item \textbf{Initial State Preparation:} We assume the existence of a unitary $U_\pi$ implementable using $\poly(n)$ gates, such that $U_\pi|0\rangle = |\sqrt{\pi}\rangle \coloneqq \sum_{z \in \Xcal} \sqrt{\pi(z)} |z\rangle$.
        \item \textbf{Block-encoding of Markov Chain:} We assume the existence of a unitary $U_{D(P)}$, implementable with $\textup{poly}(n)$ gates, that is a $(\kappa_1,a)$-block-encoding of $D(P)$ for $\kappa_1 =\mathcal{O}(\poly(n))$. 
        \item \textbf{Block-encoding of Cost Function:} We assume the existence of a unitary $U_{H}$, implementable with $\textup{poly}(n)$ gates, that is a $(\kappa_2,a)$-block-encoding of $H$ for $\kappa_2~=~\mathcal{O}(\poly(n))$.
    \end{enumerate}
\end{assumption}
\noindent Each of the input assumptions are satisfied in the settings we consider in this paper. We provide explicit procedures for initial state preparation and constructing the necessary block-encodings in Section~\ref{subsec:state_prep_and_block_encoding}. 

Our analysis of the short path framework requires two additional technical assumptions called $\gamma$-\emph{spectral density} and $\Delta_P$-\emph{stability}. The first of these concerns properties of the spectral density near the ground state energy.
\begin{definition}[$\gamma$-Spectral Density -- Definition 3.2 \cite{chakrabarti2024generalized}]
\label{defn:gamma-spectral}
The cost Hamiltonian $H$, with ground state energy $E^{\star} < 0$, is said to satisfy the $\gamma$-spectral density condition with respect to the stationary distribution $\pi$ if:
$$
    \pi(E \leq (1-\eta)E^{\star}) \leq \pi(E^{\star})^{\gamma},
$$
for some $\eta \in (0, 1)$ and $\gamma > 0$.
\end{definition}
\noindent To show a true super-quadratic speedup, we require $\gamma = \Theta(1)$. In this regime, the \emph{near‑optimal sublevel set} $\{x \in \Xcal: \bra{x} H \ket{x} \le (1-\eta)E^\star\}$ does not carry measure that grows super‑polynomially relative to the set of exact optima. If, for every $\eta\in(0,1)$, no constant $\gamma$ exists (i.e., the near‑optimal sublevel set is disproportionately large), then simple uniform sampling or rapidly mixing chains would find $(1-\eta)$‑optimal solutions in subexponential time, yielding a subexponential approximation scheme. Thus, for ``hard'' instances where no such scheme is expected, modeling with a \emph{constant} $\gamma$ is reasonable: the near‑optimal sublevel set is not exponentially larger (in measure) than the optimal set.

Meanwhile, $\Delta_P$-stability is a certain ``smoothness'' condition that we now define.
\begin{definition}[$\Delta_P$-stability -- Definition 3.1 \cite{chakrabarti2024generalized}]
\label{def:delta-stability}
    Let $\Mcal = (\Xcal, P, \pi)$ be a Markov chain and define $\widetilde{\theta}_\eta(x) \coloneqq \theta_\eta \left( \frac{x}{|E^\star|} \right)$. We say that $H$ is $\Delta_P (\eta)$-stable under $\Mcal$ if 
    \begin{align*}
        \expP[\widetilde{\theta}_\eta ( H(y) )] \leq \widetilde{\theta}_\eta ( H(x) + \Delta_P(\eta) ), \quad \forall x \in \Xcal.
    \end{align*} 
\end{definition}
\noindent From \cite[Lemma 4.26]{chakrabarti2024generalized}, we have $\sqrt{\lVert H\rVert_{P}} \geq \Delta_{P}(\eta), \forall \eta \in (0, 1)$. When the value of $\eta$ is not important, we  simply write $\Delta_P$. The $\Delta_P$-stability condition reflects the underlying structure of the problem, that is when $\Delta_P$ is large, the problem effectively becomes an unstructured search, and the achievable speedup is typically that of Grover's algorithm. In fact, the advantage over Grover's algorithm $\alpha(n)$ in Theorem \ref{thm:main-speedup} is inversely proportional to $\Delta_P$.

In Section \ref{sec:primal}, we will show that the null space random walk we design satisfies all of the aforementioned assumptions for the family of group relaxations considered in this work. When the Markov chain $P$ satisfies a Log-Sobolev inequality (Definition \ref{defn:log-sobolev}), we have the following bound on the runtime of the generalized SP algorithm.
\begin{theorem}[{Log-Sobolev Version~\cite[Theorem 4.2]{chakrabarti2024generalized}}]\label{thm:gen_short_path_runtime_ls}
Let $\Mcal = (\Xcal, P, \pi)$ be a reversible, aperiodic Markov chain, and let $H:\Xcal \rightarrow \R{}$ be a diagonal, $\Delta_P$-stable Hamiltonian (Definition \ref{def:delta-stability}) with ground state energy $E^{\star} < 0$. Suppose $\Mcal$ has a log-Sobolev constant lower bounded by $\omega$. Then
 $\gamma$-spectral density (Definition \ref{defn:gamma-spectral}) is satisfied for
$$
    \gamma = \frac{\omega(-(1-\eta)E^{\star} - \mathbb{E}_{\pi}[H])^2}{\lVert H \rVert_{P}\ln(1/\pi(E^{\star}))}.
$$ 
Also if $\mu$ satisfies
\begin{align*}
&\mu< \mu^{\star} \coloneqq \frac{2}{3}\gamma \omega\ln\left(\frac{1}{\pi(E^{\star})}\right)
\end{align*}
and \begin{align}
\label{eqn:ell_condition_ls}
\frac{1}{3\gamma \ln(1/\pi(E^{\star}))\sqrt{ \pi( E \leq (1-\eta)E^*)}} - \max\left(\frac{4\ln(\lvert \Xcal\rvert)}{\omega}, \frac{3\lvert E^{\star}\rvert^2(1-\eta)^2}{\Delta_P^2}\right)  \geq 2,
\end{align}
then there exists a short path algorithm that determines the ground state of $H$ over $\Xcal$ with running time
$$
\mathcal{O} \left(\textup{poly}(n)\omega^{-1}[\pi(E^{\star})^{-1}]^{\left(\frac{1}{2}-\frac{\eta(1-\eta)\lvert E^{\star}\rvert \mu}{2\ln(1/\pi(E^{\star}))\Delta_P}\right)}\right).
$$
\end{theorem}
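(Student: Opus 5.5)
This is Theorem~4.2 of \cite{chakrabarti2024generalized}, restated here, so the plan reconstructs that argument in three parts: the $\gamma$ bound, a bound on the overlap between $\ket{\sqrt{\pi}}$ and $\ket{\psi_\mu}$ (short jump), and a bound on the weight $\ket{\psi_\mu}$ puts on the optima (long jump).

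\emph{Spectral density.} Apply the Herbst argument. The log-Sobolev inequality with constant $\omega$ applied to $f = e^{\lambda H}$ bounds the entropy by $\frac{\lambda^2}{2\omega}\lVert H\rVert_P\,\mathbb{E}_\pi[e^{\lambda H}]$, using $\mathbb{E}_{y\sim x}[(f(x)-f(y))^2]$ and pseudo-Lipschitzness. Integrating in $\lambda$ gives the sub-Gaussian bound
\[
\pi\bigl(H-\mathbb{E}_\pi H\le -t\bigr)\le \exp\!\bigl(-\omega t^2/\lVert H\rVert_P\bigr).
\]
Take $t = -(1-\eta)E^\star-\mathbb{E}_\pi[H]$ and write $\exp(-x) = \pi(E^\star)^{x/\ln(1/\pi(E^\star))}$. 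This gives exactly the stated $\gamma$.

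\emph{Short jump.} Treat $H_\mu = -D(P)+\mu\,\theta_\eta(H/|E^\star|)$ as a perturbation of $-D(P)$, whose gap is at least the spectral gap, which is at least $\omega$. Since $\theta_\eta$ vanishes outside the sublevel set $\{E\le(1-\eta)E^\star\}$, its $\pi$-expectation is at most $\pi(E\le(1-\eta)E^\star)/\eta$. The constant $\mu^\star=\frac{2}{3}\gamma\omega\ln(1/\pi(E^\star))$ will be chosen so that the perturbation energy is small relative to the gap. For that, I would use the log-Sobolev (hypercontractive) variational bound on the ground energy, $\inf_f\{\mathcal{E}(f,f)+\mu\langle f,\theta f\rangle\}$, together with $\gamma$-density. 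The conclusion is that $\lvert\langle\sqrt{\pi}\mid\psi_\mu\rangle\rvert$ and the gap of $H_\mu$ are $1/\mathrm{poly}$, so phase estimation plus amplitude amplification prepares $\ket{\psi_\mu}$ at polynomial cost.

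\emph{Long jump.} Condition~\eqref{eqn:ell_condition_ls} should be what guarantees that $\ket{\psi_\mu}$ is close to a $\mu$-tilted distribution supported near the low-energy set. The key estimate is a lower bound on $\lVert\Pi^\star\psi_\mu\rVert^2$.

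I would first try to prove that lower bound by comparing the eigenvector equation of $H_\mu$ at $x^\star$ with its neighbors. $\Delta_P$-stability gives $\mathbb{E}_{y\sim x}\widetilde\theta_\eta(H(y))\le\widetilde\theta_\eta(H(x)+\Delta_P)$. Roughly, amplitudes should grow by a factor $e^{\mu\eta\cdot\text{slope}/\Delta_P}$ per unit of energy descent. Since the slope of $\theta_\eta$ is $1/\eta$ and the energy range is $(1-\eta)|E^\star|$, over the whole sublevel window this accumulates to
\[
\pi(E^\star)^{-\eta(1-\eta)|E^\star|\mu/(2\ln(1/\pi(E^\star))\Delta_P)}
\]
relative to the $\pi$-weight. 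Amplitude amplification then costs $\lVert\Pi^\star\psi_\mu\rVert^{-1}$, which gives the stated exponent. The $\omega^{-1}$ factor comes from the phase-estimation precision, which must resolve the gap.

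The main obstacle is making the long-jump estimate rigorous without a monotonicity or path structure. One route is a Rayleigh-quotient argument on the tilted state $\sqrt{\pi}\,e^{-\mu'\widetilde\theta}$. The other, which I would pursue first, is to bound the ground energy of $H_\mu$ from above by $\mu\,\theta_\eta$ shifted by $\Delta_P$, and then extract the overlap from the ground energy by a Chebyshev/Markov-type argument. That extraction also needs the second term of~\eqref{eqn:ell_condition_ls}, $\max(4\ln|\Xcal|/\omega,\ 3|E^\star|^2(1-\eta)^2/\Delta_P^2)$, to control the mixing-time and energy-spread error terms.
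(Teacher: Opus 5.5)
The paper gives no proof of this statement; it is imported from \cite[Theorem 4.2]{chakrabarti2024generalized}. Your sketch therefore has to stand on its own, and its central step is missing. The exponent comes entirely from a lower bound of the form $\lVert\Pi^\star\psi_\mu\rVert^2\ge \pi(E^\star)\,e^{\eta(1-\eta)\mu|E^\star|/\Delta_P}/\mathrm{poly}(n)$. This is also where $\Delta_P$-stability and \eqref{eqn:ell_condition_ls} have to be used. You list routes to it, carry out none, and the ones you list do not work:
\begin{itemize}
\item \emph{The eigenvector recursion is circular.} For the (Perron) ground state, the equation at $x^\star$ only says $\psi_\mu(x^\star)=(D(P)\psi_\mu)(x^\star)/(-\lambda_0-\mu)$. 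So it needs lower bounds on the neighbouring amplitudes first. Definition~\ref{def:delta-stability} is a one-step, averaged \emph{upper} bound on $\mathbb{E}_{y\sim P(x,\cdot)}\widetilde\theta_\eta(H(y))$, and supplies neither those bounds nor a descending path.
\item \emph{The exponent is matched rather than derived.} $\widetilde\theta_\eta$ is active only on $[E^\star,(1-\eta)E^\star]$, a window of width $\eta|E^\star|$ with slope $1/(\eta|E^\star|)$. Your bookkeeping (``slope $1/\eta$'' over a range $(1-\eta)|E^\star|$) yields $\mu(1-\eta)|E^\star|/\Delta_P$, not the stated quantity.
\item \emph{The energy-based fallbacks fail structurally.} A bound on $\lambda_0(H_\mu)$, or on a trial Rayleigh quotient, only controls averages like $\mathbb{E}_{\psi_\mu^2}[\widetilde\theta_\eta(H)]$. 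Such an average can be of order one with no mass on the level $E^\star$, and Markov or Chebyshev cannot certify an exponentially small mass.
\end{itemize}

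What is missing is a positivity-based representation of the amplitude at $x^\star$ itself. For instance, let $\lambda_0<\lambda_1$ be the two lowest eigenvalues of $H_\mu$. Since $-H_\mu=D(P)+\mu|\widetilde\theta_\eta(H)|$ and $D(P)=\mathrm{diag}(\pi)^{1/2}P\,\mathrm{diag}(\pi)^{-1/2}$, spectral decomposition plus Feynman--Kac give
\[
\psi_\mu(x^\star)\langle\psi_\mu|\sqrt{\pi}\rangle\;\ge\;\sqrt{\pi(x^\star)}\,e^{t(1+\lambda_0)}\,\mathbb{E}_{x^\star}\Bigl[e^{\mu\int_0^t|\widetilde\theta_\eta(H(X_s))|\,ds}\Bigr]-e^{-t(\lambda_1-\lambda_0)},
\]
where $X$ is the continuous-time walk \emph{started at} $x^\star$. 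Apply Jensen, the second time for the Poisson clock. Then apply stability at every step of the jump chain $Y$, together with the $1/(\eta|E^\star|)$-Lipschitz property of $\widetilde\theta_\eta$. This gives $\mathbb{E}_{x^\star}[\widetilde\theta_\eta(H(Y_k))]\le -1+k\Delta_P/(\eta|E^\star|)$. Hence the amplitude is boosted by at least $e^{\mu\eta|E^\star|/(2\Delta_P)}$ once $t\ge\eta|E^\star|/\Delta_P$, which is enough for the stated exponent. The time $t$ must also be at least of order $\ln(1/\pi(x^\star))/(\lambda_1-\lambda_0)$, while keeping $t\,|1+\lambda_0|=O(1)$. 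This two-sided constraint is of the kind \eqref{eqn:ell_condition_ls} encodes.

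Two smaller points. First, your short jump also stays at the level of intent. The mean bound on $\mathbb{E}_\pi[\theta_\eta]$ you record cannot control it, because $\mu^\star$ can exceed the spectral gap $\delta$ by a factor of order $\gamma\ln(1/\pi(E^\star))$, so perturbation theory around $-D(P)$ is unavailable. What works is the following. For $\psi=\sqrt{\pi}f$ with $\mathbb{E}_\pi f^2=1$ and $V=\mu|\widetilde\theta_\eta(H)|$, use the entropy inequality $\mathbb{E}_\pi[f^2V]\le s^{-1}\bigl(\mathrm{Ent}_\pi(f^2)+\ln\mathbb{E}_\pi e^{sV}\bigr)$. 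Combine it with the log-Sobolev inequality $\mathcal{E}(f,f)\ge\omega\,\mathrm{Ent}_\pi(f^2)$ and take $s=3/(2\omega)$. Adding $\gamma$-density gives $\langle\psi|H_\mu|\psi\rangle\ge-1+\tfrac13\mathcal{E}(f,f)-\epsilon$, with $\epsilon=\tfrac{2\omega}{3}e^{-3(\mu^\star-\mu)/(2\omega)}$. Hence $\lambda_0\ge-1-\epsilon$ and $\lambda_1\ge-1+\delta/3-\epsilon$. Since $\lambda_0\le-1$, also $|\langle\sqrt{\pi}|\psi_\mu\rangle|^2\ge1-3\epsilon/\delta$. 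This also explains the factor $2/3$ in $\mu^\star$.

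Second, in the $\gamma$ step the event $\{H\le(1-\eta)E^\star\}$ corresponds to the deviation $t=\mathbb{E}_\pi[H]-(1-\eta)E^\star$, which must be positive. The Herbst bound therefore gives $(\mathbb{E}_\pi[H]-(1-\eta)E^\star)^2$ in the numerator. This agrees with the displayed $(-(1-\eta)E^\star-\mathbb{E}_\pi[H])^2$ only when $\mathbb{E}_\pi[H]=0$. Your choice $t=-(1-\eta)E^\star-\mathbb{E}_\pi[H]$ bounds the wrong event, so ``exactly the stated $\gamma$'' is not correct as written.
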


% \begin{remark}
% Note that in \cite{chakrabarti2024generalized}, the $\mathbb{E}_{\pi}[\widetilde{\theta}_{\eta}(H)]$ term does not have the filter function $\widetilde{\theta}$ applied to it, i.e., it is just $\mathbb{E}_{\pi}[H]$. However, it is implicit in their analysis that we can make this swap.
% \end{remark}

\noindent %Note that any upper bound on $\Delta_P$ suffices; for example, one may use $\sqrt{\lVert H\rVert_{P}}$.
As remarked in \cite[Remark 4.4]{chakrabarti2024generalized}, the additional conditions in the above theorem--on top of spectral density, stability and the maximum-allowed $\mu$--are considerably mild since the denominator in the first term of \eqref{eqn:ell_condition_ls} is exponentially small.

There is also a version of the generalized SP runtime that only depends on the spectral gap of the walk which is slightly weaker, limiting its applicability.
\begin{theorem}[{Spectral Gap Version \cite[Theorem 4.4]{chakrabarti2024generalized}}]
\label{thm:gen_short_path_runtime_poincare}
Let $\Mcal = (\Xcal, P, \pi)$ be a reversible, aperiodic Markov chain, and let $H: \Xcal \rightarrow \R{}$ be a diagonal, $\Delta_P$-stable Hamiltonian  with ground state energy $E^{\star} < 0$, $P$-pseudo Lipschitz norm $\lVert H \rVert_{P}$. In addition, suppose $\Mcal$ has a spectral gap lower bounded by $\delta$. Then
 $\gamma$-spectral density is satisfied for
\begin{align*}
    \gamma = \frac{\sqrt{\delta}(-(1-\eta)E^{\star} - \mathbb{E}_{\pi}[\widetilde{\theta}_{\eta}(H)])}{\sqrt{\lVert H \rVert_{P}}\ln(1/\pi(E^{\star}))},
\end{align*}
and suppose that $\frac{\delta}{2} > \pi(E^{\star})^{\gamma}$.
Also if $\mu$ satisfies
\begin{align*}
    \mu < \mu^{\star} :=\frac{\delta}{4}
\end{align*}
and
 \begin{align}
\label{eqn:ell_condition_pc}
\frac{1}{2\sqrt{\pi( E \leq (1-\eta)E^*)}} - \max\left(\frac{4\ln(\lvert \Xcal\rvert)}{\delta}, \frac{3\lvert E^{\star}\rvert^2(1-\eta)^2}{\Delta_P^2}\right)  \geq 2,
\end{align}
then there exists a short path algorithm that determines the ground state of $H$ over $\Xcal$ with running time
$$
\mathcal{O} \left(\textup{poly}(n)\delta^{-1}[\pi(E^{\star})^{-1}]^{\left(\frac{1}{2}-\frac{\eta(1-\eta)\lvert E^{\star}\rvert \mu}{2\ln(1/\pi(E^{\star}))\Delta_P}\right)}\right).
$$
\end{theorem}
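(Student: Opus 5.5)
The proof will follow the structure of the log-Sobolev version (Theorem~\ref{thm:gen_short_path_runtime_ls}). The only change is that every place where the log-Sobolev constant $\omega$ supplied concentration or gap control is replaced by a Poincaré (spectral gap) argument. There are three steps: establish $\gamma$-spectral density, analyze the short jump, and analyze the long jump. The runtime is then the product of the short-jump cost and the long-jump amplitude-amplification cost, with polynomial overheads from the block-encodings of Assumption~\ref{asm:input-assumptions}.

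\textbf{Step 1 (spectral density).} A spectral gap $\delta$ implies exponential (rather than Gaussian) concentration for functions of bounded $P$-pseudo Lipschitz norm. This is the Gromov--Milman/Aida--Stroock type bound
\[
\pi\big(f - \mathbb{E}_\pi f \le -t\big) \le \exp\!\big(-c\,t\sqrt{\delta}/\sqrt{\lVert f\rVert_P}\big).
\]
I will prove it by bounding the Laplace transform $\mathbb{E}_\pi[e^{\lambda f}]$ through the Poincaré inequality applied to $e^{\lambda f/2}$, iterated in the standard way for $\lambda \lesssim \sqrt{\delta/\lVert f\rVert_P}$. I will apply it to $f=\widetilde{\theta}_\eta(H)$, using $\lVert \widetilde\theta_\eta(H)\rVert_P \le \lVert H\rVert_P$ up to the normalization, and take $t=-(1-\eta)E^\star-\mathbb{E}_\pi[\widetilde\theta_\eta(H)]$. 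Writing the resulting exponent as $\gamma\ln\pi(E^\star)$ gives exactly the stated $\gamma$.

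\textbf{Step 2 (short jump).} The perturbation $\mu\,\theta_\eta(H/|E^\star|)$ is supported only on the sublevel set $\{E\le(1-\eta)E^\star\}$. Its norm there is at most $\mu$, and by Step~1 that set has $\pi$-mass at most $\pi(E^\star)^\gamma<\delta/2$. I will use a variational/Weyl argument. Restricted to the complement of $\ket{\sqrt\pi}$, $-D(P)$ has gap $\delta$, and $\mu<\delta/4$ shifts eigenvalues by at most $\mu$. Hence $H_\mu$ keeps a gap of order $\delta$, and its ground state $\ket{\psi_\mu}$ has constant overlap with $\ket{\sqrt\pi}$. For the overlap I will bound the energy of $\ket{\sqrt\pi}$ under $H_\mu$ by $-1+\mu\,\pi(E\le(1-\eta)E^\star)$ and compare with the gap. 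Phase estimation plus amplitude amplification via \cite[Proposition 1]{dalzell2022mind} then costs $\mathrm{poly}(n)\,\delta^{-1}$. The first term of condition \eqref{eqn:ell_condition_pc} guarantees that these error terms are dominated.

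\textbf{Step 3 (long jump, the main obstacle).} I need the lower bound
\[
\lVert\Pi^\star\ket{\psi_\mu}\rVert^2 \ge \pi(E^\star)^{1-\eta(1-\eta)|E^\star|\mu/(\ln(1/\pi(E^\star))\Delta_P)}.
\]
My plan is to compare $\ket{\psi_\mu}$ with the tilted state $\propto\sum_x\sqrt{\pi(x)}\,e^{-s\,\widetilde\theta_\eta(H(x))}\ket{x}$. Definition~\ref{def:delta-stability} ($\Delta_P$-stability) controls how much one application of $D(P)$ can change $\widetilde\theta_\eta(H)$. I will use it to show that the tilted state has energy under $H_\mu$ close to the ground energy when $s\approx\mu/\Delta_P$. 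Combining this with the gap from Step~2 transfers the tilt to $\ket{\psi_\mu}$. The amplitude on optimal points is then boosted by a factor $e^{s\eta(1-\eta)|E^\star|\cdots}$, which gives the exponent reduction. I expect the delicate part to be making this tilt comparison rigorous without log-Sobolev (hypercontractive) tools. The second term of \eqref{eqn:ell_condition_pc}, involving $\ln|\Xcal|/\delta$ and $|E^\star|^2(1-\eta)^2/\Delta_P^2$, is where the resulting error terms must be absorbed.

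Fixed-point amplitude amplification then uses $\lVert\Pi^\star\ket{\psi_\mu}\rVert^{-1}$ rounds, each containing one short-jump preparation. This yields the stated running time.
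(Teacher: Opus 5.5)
The paper gives no proof of this statement. It is quoted from \cite{chakrabarti2024generalized}, with $\mathbb{E}_\pi[H]$ replaced by $\mathbb{E}_\pi[\widetilde{\theta}_\eta(H)]$ in $\gamma$, so your outline has to stand on its own. Step~2 is fine and needs nothing from Step~1. Weyl's inequality gives $\lambda_1(H_\mu)\ge -1-\mu$ and $\lambda_2(H_\mu)\ge -1+\delta-\mu$, and $\langle\sqrt{\pi}|H_\mu|\sqrt{\pi}\rangle\le -1$. Hence the squared overlap is at least $1-\mu/(\delta-\mu)>2/3$ for $\mu<\delta/4$.

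\textbf{The genuine gap is Step~3}, which carries all of the super-quadratic content. A gap-based comparison only controls $\ell_2$ distance: $\|\,|\psi_\mu\rangle-|\phi\rangle\|^2\le 2(\langle\phi|H_\mu|\phi\rangle-\lambda_1)/(\lambda_2-\lambda_1)$. The quantity you need, $\|\Pi^\star|\psi_\mu\rangle\|\approx\sqrt{\pi(E^\star)}\,e^{\Theta(s)}$, is exponentially small, and an $\ell_2$ error of that size can erase it entirely. Transferring the tilt would therefore require an energy excess of order $\delta\,\pi(E^\star)e^{2s}$. Nothing in your plan controls the excess at that scale. The tilt's Dirichlet-form and potential contributions are spread over the whole sublevel set $\{E\le(1-\eta)E^\star\}$, whose weight typically dwarfs $\pi(E^\star)e^{2s}$. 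Moreover, $\lambda_1$ itself is not known to that precision.

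What is missing is an entrywise argument. Since $-D(P)$ has nonpositive off-diagonal entries and the potential is diagonal, $H_\mu$ is stoquastic and $|\psi_\mu\rangle$ is entrywise nonnegative. One can therefore lower-bound $\langle x^\star|\psi_\mu\rangle$ through a nonnegative path expansion, e.g.\ of $(cI-H_\mu)^{\ell}|\sqrt{\pi}\rangle$. In that expansion, $\Delta_P$-stability combined with the slope $1/(\eta|E^\star|)$ of $\widetilde{\theta}_\eta$ gives $\mathbb{E}[\widetilde{\theta}_\eta(H(X_t))]\le -1+t\Delta_P/(\eta|E^\star|)$ for a walk started at an optimum. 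By reversibility, paths ending in $\{H=E^\star\}$ then collect a multiplicative gain from $\mu|\widetilde{\theta}_\eta|$ over roughly $\eta|E^\star|/\Delta_P$ steps, which is where the exponent comes from. Hypothesis \eqref{eqn:ell_condition_pc} is naturally read in this light, not as an error-absorption condition. It asks that an integer $\ell$ fit between $\max(4\ln|\mathcal{X}|/\delta,\,3|E^\star|^2(1-\eta)^2/\Delta_P^2)$, enough steps to mix and to climb from $E^\star$ to the threshold, and $1/(2\sqrt{\pi(E\le(1-\eta)E^\star)})$, few enough that the potential's perturbation stays small.

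Separately, Step~1 applies concentration to the wrong function. $\widetilde{\theta}_\eta(H)\le 0$ everywhere and vanishes identically on $\{H\ge(1-\eta)E^\star\}$, so $\{H\le(1-\eta)E^\star\}$ is not a lower-tail event for it. With your $t$, an energy minus a number in $[-1,0]$, the event $\{\widetilde{\theta}_\eta(H)\le\mathbb{E}_\pi\widetilde{\theta}_\eta(H)-t\}$ is empty as soon as $(1-\eta)|E^\star|>1$. The Poincar\'e-type concentration must be applied to $H$ itself, with $t=\mathbb{E}_\pi[H]-(1-\eta)E^\star$. This produces $\mathbb{E}_\pi[H]$ in $\gamma$, as in the log-Sobolev version. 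The $\widetilde{\theta}_\eta$ form displayed here is a substitution the paper asserts without proof, and your route cannot derive it.
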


The following provides sufficient conditions under which the SP algorithm provides a super-quadratic speedup over classically sampling from $\pi$.
\begin{corollary}[{Super-Quadratic Condition \cite[Corollary 4.5]{chakrabarti2024generalized}}]
\label{cor:super_quad_runtime_short_path}
Let $\Mcal = (\Xcal, P, \pi)$ be a reversible, aperiodic Markov chain, and let $H: \Xcal \rightarrow \R{}$ be a diagonal Hamiltonian with ground state energy $E^{\star} < 0$.
Suppose that the chain $\Mcal$ and cost-function $H$ are such that the conditions of Theorem~\ref{thm:gen_short_path_runtime_ls} or \ref{thm:gen_short_path_runtime_poincare} are satisfied for some $\mu = \Theta(1)$ and 
%$b^{\star}$ result in $b^{\star}$ and $\gamma$ that are independent of $n$. \guney{what does this mean} If  
\begin{align*}
    \frac{\lvert E^{\star}\rvert}{\Delta_{P}} = \Theta\left(\log\left(1/\pi(E^{\star})\right)\right).
\end{align*} 
Then under Assumption~\ref{asm:input-assumptions}, the short path algorithm finds the minimizer of $H$ over $\Xcal$ with running-time bounded by
\begin{align*}
    \mathcal{O}\left([\pi(E^{\star})^{-1}]^{\frac{1}{2} - \alpha } \; \poly(n)\right),
\end{align*}
where
$$
    \alpha = \frac{\eta(1-\eta)\lvert E^{\star}\rvert \mu}{2\Delta_{P}\ln(1/\pi(E^{\star}))} = \Theta(1).
$$ 
\end{corollary}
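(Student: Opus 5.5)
The plan is to obtain the corollary by direct substitution into the runtime bounds of Theorem~\ref{thm:gen_short_path_runtime_ls} (log-Sobolev version) and Theorem~\ref{thm:gen_short_path_runtime_poincare} (spectral-gap version), which are taken as given. Two things need to be checked. First, the exponent in those bounds has the claimed form with $\alpha=\Theta(1)$. Second, the prefactors $\omega^{-1}$ and $\delta^{-1}$, together with the per-step cost of the block-encodings, are at most $\poly(n)$. Assumption~\ref{asm:input-assumptions} supplies $U_\pi$, $U_{D(P)}$ and $U_H$ with $\poly(n)$ gates and $\poly(n)$ normalizations $\kappa_1,\kappa_2$. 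Hence every call to the short-jump and long-jump subroutines (\cite[Proposition 1]{dalzell2022mind}) costs $\poly(n)$, and the ``$\poly(n)$'' in both theorems is genuinely polynomial.

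\textbf{Step 1 (the exponent).} Both theorems give runtime
\[
\poly(n)\,\rho^{-1}\,[\pi(E^\star)^{-1}]^{\frac12-\alpha}, \qquad \rho\in\{\omega,\delta\},
\]
where
\[
\alpha=\frac{\eta(1-\eta)\lvert E^\star\rvert\mu}{2\Delta_P\ln(1/\pi(E^\star))}.
\]
The hypothesis $\lvert E^\star\rvert/\Delta_P=\Theta(\ln(1/\pi(E^\star)))$ makes the ratio $\lvert E^\star\rvert/(\Delta_P\ln(1/\pi(E^\star)))$ equal to $\Theta(1)$. Since $\mu=\Theta(1)$ and $\eta\in(0,1)$ is a fixed constant, it follows that $\alpha=\Theta(1)$. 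The admissibility conditions of the relevant theorem (the condition on $\mu^\star$ and \eqref{eqn:ell_condition_ls} or \eqref{eqn:ell_condition_pc}) hold by hypothesis.

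\textbf{Step 2 (the prefactor).} In the spectral-gap case, the constraint $\mu<\mu^\star=\delta/4$ with $\mu=\Theta(1)$ forces $\delta=\Omega(1)$, so $\delta^{-1}=\Ocal(1)$. In the log-Sobolev case, $\mu<\mu^\star=\tfrac23\gamma\omega\ln(1/\pi(E^\star))$ gives
\[
\omega^{-1}<\frac{2\gamma\ln(1/\pi(E^\star))}{3\mu}.
\]
To bound the right-hand side, first note that $\gamma\le 1$. Indeed, $\{E\le(1-\eta)E^\star\}\supseteq\{E=E^\star\}$ because $E^\star<0$. The spectral density condition then gives $\pi(E^\star)\le\pi(E^\star)^\gamma$, and since $\pi(E^\star)<1$ this forces $\gamma\le1$. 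It remains to bound $\ln(1/\pi(E^\star))$ polynomially. This holds whenever $\pi(E^\star)\ge 2^{-\poly(n)}$, for instance for the uniform stationary distribution on a state space of size $2^{\poly(n)}$, which is the setting in which the theorem is applied. Then $\omega^{-1}=\Ocal(\ln(1/\pi(E^\star)))=\poly(n)$ and is absorbed into the $\poly(n)$ factor.

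\textbf{Main obstacle.} Step 2 is the only non-mechanical point. I expect the main subtlety to be the implicit requirement that $\ln(1/\pi(E^\star))$ be polynomially bounded, which I will state explicitly as holding for stationary distributions with $2^{-\poly(n)}$ minimum mass. Combining Steps 1 and 2 gives the runtime $\Ocal\!\left([\pi(E^\star)^{-1}]^{1/2-\alpha}\poly(n)\right)$ with $\alpha=\Theta(1)$, as claimed.
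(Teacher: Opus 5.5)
Your proposal is correct and matches how this corollary is obtained in the cited source; the paper itself gives no independent proof and simply imports it from \cite{chakrabarti2024generalized}. The argument in both is to substitute $|E^\star|/\Delta_P=\Theta(\ln(1/\pi(E^\star)))$ and $\mu=\Theta(1)$ into the exponent of Theorem~\ref{thm:gen_short_path_runtime_ls} or~\ref{thm:gen_short_path_runtime_poincare}. Your treatment of the prefactor makes explicit what the statement leaves implicit in its $\poly(n)$ factor: $\mu<\mu^\star$ forces $\delta=\Omega(1)$, or $\omega^{-1}=\Ocal(\ln(1/\pi(E^\star)))$ via $\gamma\le 1$, and you flag the assumption $\ln(1/\pi(E^\star))=\poly(n)$, which holds for the uniform $\pi$ on $K$ used later in the paper.
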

We note that even though the conditions in Corollary \ref{cor:super_quad_runtime_short_path} seem complicated and somewhat contrived, the speedup factor (say in the setting of  Theorem \ref{thm:gen_short_path_runtime_ls}) simplifies to $\alpha(n)= \Omega\left(\frac{\omega n}{\Delta_P}\right)$ for typical hard combinatorial optimization problems. This because typically $|E^{\star}|=\Theta(n)$ and $\ln(1/\pi(E^{\star})) = \Ocal(n)$ since the spectral density condition holds (otherwise we can find approximate solution in sub-exponential time). As an example, consider the hypercube walk $P$, where $D(P) =\frac{1}{n} \sum_{i \in [n]} \sigma^x_i$. Since $\omega=\Omega(1/n)$ for this walk, $\alpha = \Omega(1)$ as long as the change in the objective function after a bit-flip, bounded by $\Delta_P$ is $\Ocal(1)$ . For general problems, the challenge is to construct a random walk that has large Log-Sobolev constant and small pseudo-Lipschitz norm or $\Delta_{P}$. In the next sections, we describe the Markov chain construction to satisfy these conditions for a certain class of integer programming problems.  

\section{Solving group relaxations through local search}\label{sec:primal}
In this section, we first review important details regarding the group relaxation in order to derive a formulation of this problem amenable to local search. We then show how to solve the reformulated problem using \textit{Markov chain search}, and analyze the complexity of the resulting scheme. 

\subsection{The group relaxation}\label{ss:primal_group_problem}
Let $\Bcal $ denote an optimal basis for the LP relaxation of \eqref{e:ILP}. Recall the definition of the group relaxation induced by this basis 
\begin{equation}\tag{$\text{ILP}_{\Bcal }$}\label{e:group_problem_recall}
       \OPT_\Bcal \coloneqq  \min \left\{c^{\top} x: x \in \Pcal_{\Bcal} \right\},
\end{equation}
where $\Pcal_{\Bcal} := \{ (x_{\Bcal} , x_{\Ncal}) \in \Z{m} \times \Z{n-m}_{\geq 0} : Ax = b \}$.  

Since any $x = (x_{\Bcal}, x_{\Ncal}) \in \Pcal_{\Bcal}$ satisfies $Ax = b$, we may write 
\begin{align*}
    c^\top x = c_\Bcal ^\top x_\Bcal + c_\mathcal{N}^\top x_\mathcal{N} &= c_\Bcal ^\top A_\Bcal ^{-1}(b-A_\mathcal{N}x_\mathcal{N})+c_\mathcal{N}^\top x_\mathcal{N}
    = c_\Bcal ^\top A_\Bcal ^{-1}b + \bar{c}_{\Ncal}^{\top}x_\mathcal{N}.
\end{align*}
Given that $A_{\Bcal}$ is invertible, we can also reformulate the system $A_{\Bcal} x_{\Bcal} + A_{\Ncal} x_{\Ncal} = b$ as
\begin{align*}
    A^{-1}_{\Bcal}A_{\Ncal}x_{\Ncal} + x_{\Bcal} =  A^{-1}_{\Bcal}b \implies x_{\Bcal} =  A^{-1}_{\Bcal}(b-A_{\Ncal}x_{\Ncal}). 
\end{align*}
Thus, once $x_{\Ncal}\in \Z{n-m}_{\geq 0}$ is fixed, $x_{\Bcal} \in \Z{m}$ is uniquely determined. Accordingly, \eqref{e:group_problem_recall} is equivalent to
\begin{equation} \label{eq:basis_OPT}
\OPT_\Bcal =  \min_{x_{\mathcal{N}} \in \Zmbb _{\geq 0}^{n-m}} \left\{ c_{\Bcal }^{\top} A_{\Bcal }^{-1} b + \bar{c}_{\mathcal{N}}^{\top} x_{\mathcal{N}} : A_{\Bcal}^{-1}A_{\Ncal}x_{\Ncal}  = A_{\Bcal}^{-1}b \pmod{\Z{m}}\right\}.
\end{equation} 
Since $\Bcal$ is optimal for \eqref{e:LP}, the nonbasic part of the reduced cost is nonnegative, 
\begin{align*}
    \bar{c}_{\Ncal}^\top \coloneqq c_\mathcal{N}^\top-c_\Bcal ^\top A_\Bcal ^{-1}A_\mathcal{N} \geq 0,
\end{align*}
and $c_{\Bcal }^{\top} A_{\Bcal }^{-1} b = \OPT_{\text{LP}}$, so we have $\OPT_\Bcal \geq \OPT_{\text{LP}}$.

To further refine this formulation, consider the Smith normal form $A_{\Bcal} = URV$, where $U$ and $V$ are unimodular matrices, and $R = \textup{diag}(r_1, \dots, r_m)$ with 
\begin{equation*}
    \abs{\det A_\Bcal} = \prod_{j=1}^{m} r_j, \quad r_1~|~r_2~|~\cdots~|~r_m.
\end{equation*}
Each $r_j > 0$, as $A_\Bcal$ is full-rank. Multiplying both sides of the constraint in \eqref{eq:basis_OPT} by $RV$, we obtain
\begin{equation*} 
    \OPT_\Bcal =   \OPT_{\text{LP}}  + \min_{x_{\mathcal{N}} \in \Zmbb _{\geq 0}^{n-m}} \left\{  \bar{c}_{\mathcal{N}}^{\top} x_{\mathcal{N}} : U^{-1}A_{\Ncal}x_{\Ncal}  = U^{-1}b \pmod{R\Z{m}}\right\}.
\end{equation*}
%where the notation mod $R \Z{}$ signifies that the $i$-th row of the congruence is considered modulo $r_i$, and we define $G \coloneqq \lat /\lat_{\Bcal}  \cong R\Zmbb $. 
Labeling the data $\widetilde{\Ambf}  = U^{-1}A_{\Ncal} \in \Zmbb ^{m \times (n-m)}$, $\bmbf  = U^{-1}b \in \Z{m}$, and $\mathbf{c} = \bar{c}_{\mathcal{N}}$, we arrive at
\begin{equation}\label{e:GILO_infinite}
   \OPT_\Bcal =   \OPT_{\text{LP}}  +\min_{\xmbf \in \Zmbb _{\geq 0}^{n-m}} \left\{ \mathbf{c}^{\top} \xmbf : \widetilde{\Ambf} \xmbf  = \bmbf  \pmod{R\Z{m}}\right\}.
\end{equation}
Note it may be the case that some columns of $\widetilde{\Ambf }$ are already in $R\Z{m}$. For those columns, we can simply set $x_j = 0$, as this process never increases the objective value, nor changes the LHS of the constraints modulo $R \Z{m}$. Denoting by $\Ambf  \in \mathbb{Z}^{m \times d}$, with $d \leq n-m$, the matrix obtained upon dropping these columns, problem~\eqref{e:GILO_infinite} simplifies to
\begin{equation}\tag{$\widetilde{\text{ILP}}_{\Bcal }$}\label{e:GILO-final}
      \OPT_\Bcal =   \OPT_{\text{LP}} +  \min_{\xmbf \in \Zmbb _{\geq 0}^{d}} \left\{\mathbf{c}^{\top} \xmbf : \Ambf  \xmbf = \bmbf  \pmod{R\Z{m}} \right\}.
\end{equation}
We thus describe an instance of a group relaxation via the tuple $(\Ambf , \bmbf , \mathbf{c}, \{r_j\}_{j=1}^{m})$. 

Gomory originally showed that, under certain conditions, the group relaxation is \emph{asymptotically tight}---that is, it solves the original ILP exactly when the right-hand side is sufficiently large. 
\begin{theorem}[Theorem 5.6 in \cite{wolsey1999integer}]
    Let $\Bcal$ be any feasible basis for the \ref{e:LP} relaxation of the \ref{e:ILP} problem. The group relaxation defined by $\Bcal$ solves the \ref{e:ILP} problem to optimality if 
    $$ A^{-1}_{\Bcal } b \geq \max_{i,j} \left\{ | (A^{-1}_{\Bcal } A_{\mathcal{N}})_{ij} | \right\} \cdot | \det A_{\Bcal } | \mathbf{1}_{n-m}.$$ 
\end{theorem}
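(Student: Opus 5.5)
The plan is to show that some optimal solution of the group relaxation \eqref{e:group_problem_recall} automatically has $x_{\Bcal} \ge 0$. That solution is then feasible for \eqref{e:ILP}. Since $\OPT_\Bcal \le \OPT$, it is optimal for the ILP.

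\textbf{Step 1 (bounded optimal solution).} Let $D \coloneqq |\det A_{\Bcal}|$. First I will show that \eqref{e:group_problem_recall} has an optimal solution $x^\star$ with $\|x^\star_{\Ncal}\|_1 \le D-1$. Any nonnegative nonbasic vector $x_{\Ncal}$ can be written as a sum of $\|x_{\Ncal}\|_1$ unit vectors $\mathbf{e}_{j_1},\dots,\mathbf{e}_{j_k}$. Consider the partial sums $s_t \coloneqq \sum_{\ell\le t} \mathbf{e}_{j_\ell}$ for $t=0,\dots,k$. Their images $A_{\Ncal}s_t \bmod A_{\Bcal}\Z{m}$ lie in the group $\Z{m}/A_{\Bcal}\Z{m}$, which has order $D$ (Section~\ref{subsec:groups_lattices}).

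If $k \ge D$, then by pigeonhole two partial sums $s_p, s_q$ with $p<q$ have the same image. Hence $A_{\Ncal}(s_q - s_p) \in A_{\Bcal}\Z{m}$. Removing the block $s_q - s_p \ge 0$, which is nonzero, gives a new nonnegative integer vector. It satisfies the same congruence, so the corresponding $x_{\Bcal}$ is still integral. Its objective value is no larger, because $\bar c_{\Ncal}\ge 0$. The feasible values $\bar c_{\Ncal}^\top x_{\Ncal}$ form a discrete set bounded below (for rational data), so the minimum is attained. Starting from a minimizer and iterating the removal therefore yields a minimizer with $\|x_{\Ncal}\|_1 \le D-1$.

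\textbf{Step 2 (nonnegativity of $x_{\Bcal}$).} For this minimizer we have $x_{\Bcal} = A_{\Bcal}^{-1}b - (A_{\Bcal}^{-1}A_{\Ncal})x_{\Ncal}$. Let $M \coloneqq \max_{i,j}|(A_{\Bcal}^{-1}A_{\Ncal})_{ij}|$. Each coordinate satisfies
\[
(x_{\Bcal})_i \ge (A_{\Bcal}^{-1}b)_i - M\|x_{\Ncal}\|_1 \ge (A_{\Bcal}^{-1}b)_i - M(D-1).
\]
Under the hypothesis, this is strictly positive whenever $M>0$. If $M=0$ it is at least $(A_{\Bcal}^{-1}b)_i \ge 0$ anyway, since $\Bcal$ is a feasible basis. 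So $x^\star \ge 0$ and $x^\star\in\Pcal$. (The statement writes $\mathbf 1_{n-m}$, but the vector $A_{\Bcal}^{-1}b$ lives in $\R{m}$; I read the hypothesis coordinatewise over the $m$ basic indices.)

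\textbf{Step 3 (conclude).} Combining Steps 1 and 2, $c^\top x^\star = \OPT_\Bcal \le \OPT \le c^\top x^\star$, so $x^\star$ is optimal for \eqref{e:ILP}.

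\textbf{Expected obstacle.} The main care point is Step 1. One must verify that deleting the zero-sum block keeps $x_{\Bcal}$ integral. This holds because the congruence is precisely $A_{\Ncal}x_{\Ncal}\equiv b \pmod{A_{\Bcal}\Z{m}}$. One must also ensure the group relaxation attains its minimum. Here I would assume the group relaxation is feasible, as the statement implicitly does; the ILP itself is feasible because $\Bcal$ is a feasible basis. Given feasibility, attainment follows from integrality of the scaled reduced costs.
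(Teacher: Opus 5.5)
The paper gives no proof of this statement (it is quoted from Wolsey). Your argument is the classical Gomory--Wolsey proof: pigeonhole on partial sums in $\Z{m}/A_\Bcal\Z{m}$ yields a group-optimal $x^\star$ with $\|x^\star_\Ncal\|_1\le|\det A_\Bcal|-1$, and then $x^\star_\Bcal\ge A_\Bcal^{-1}b-M(|\det A_\Bcal|-1)\mathbf{1}\ge 0$. The core reasoning is correct, but two points need fixing.

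First, you use $\bar c_\Ncal\ge 0$, both to say that deleting the block does not increase the objective and for boundedness. This requires $\Bcal$ to be dual feasible, i.e.\ an \emph{optimal} LP basis. Under the literal hypothesis ``any feasible basis'' the statement is false. Take $\min\{-x_1 : x_1+x_2=10,\ x\in\Z{2}_{\ge 0}\}$ with $\Bcal=\{2\}$. The inequality holds ($A_\Bcal^{-1}b=10$, $M=1$, $|\det A_\Bcal|=1$), yet $\bar c_1=-1$, and the group relaxation (with $x_2$ free in sign) is unbounded. You should state the optimality assumption explicitly, as the paper does when it defines the group relaxation, just as you flagged the $\mathbf{1}_{n-m}$ typo.

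Second, your remark that the ILP is feasible ``because $\Bcal$ is a feasible basis'' is false, since $A_\Bcal^{-1}b$ need not be integral (e.g.\ $2x_1+2x_2=1001$). It is also unnecessary. If the group relaxation is infeasible, then so is the ILP, because $\Pcal\subseteq\Pcal_\Bcal$; otherwise your argument applies.

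Finally, attainment needs no rationality assumption. Your Step~1 reduction already shows that the infimum equals the minimum over the finitely many feasible $x_\Ncal$ with $\|x_\Ncal\|_1\le|\det A_\Bcal|-1$.
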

Lasserre later established a \emph{nondegeneracy condition}: 
\begin{definition}[Nondegeneracy condition for group relaxations \cite{lasserre2004generating}]
\label{def:non_degen}
    Denote the set of all optimal bases to the LP relaxation of \eqref{e:ILP} by $\mathscr{B}$. Let $\Bcal \in \mathscr{B}$ be an optimal LP basis. If there exists another basis $\Bcal^{\prime} \in \mathscr{B}$ such that $\OPT_\Bcal = \OPT_{\Bcal^{\prime}}$ then the problem is degenerate. Otherwise, the group relaxation defined by $\Bcal$ is nondegenerate, and solves \eqref{e:ILP} to optimality.
\end{definition}
\noindent This nondegeneracy condition can be viewed as a similar statement to the usual notion of nondegeneracy in linear programming, in the following sense: the situation in which the group relaxation only provides a lower bound on $\OPT$ arises, necessarily, when there is another LP basis $\Bcal'$ whose group relaxation yields the same lower bound. We emphasize that group nondegeneracy is strictly stronger than LP nondegeneracy. Moreover, while LP nondegeneracy can often be certified a posteriori (e.g., via reduced costs and basic variable signs), no obvious efficient a posteriori certificate is available for the group case. That said, there is a simple \emph{sufficient} a posteriori check for exactness: if a group‑optimal solution is ILP‑feasible (i.e., satisfies the dropped nonnegativity), then $\OPT_{\Bcal}=\OPT$; conversely, failure of this check does not certify degeneracy, since there may be another basis which induces a group relaxation that solves the ILP.

\subsection{Null space formulation}
\label{subsec:nullspaceform}
Recall from \eqref{e:GILO-final} of the prior section that the feasible set to the group relaxation is 
\begin{align*}
    \{ \xmbf  \in \mathbb{Z}_{\geq 0 }^{d} : \Ambf \xmbf  = \mathbf{b} \pmod{R\Z{m}}\}.
\end{align*}
This set is infinite, provided it is nonempty. However, note that the constraints are linear equations over the finite group $\Rcal \coloneqq \Z{m}/R\Z{m} \cong \bigoplus_{i=1}^m \Zmbb_{r_i}$. Thus, the order of $\Ambf_j$ is finite, and necessarily divides $r_m$. Reducing each component modulo $r_m$ never increases the objective value $\cmbf^\top \xmbf$, since $\cmbf \geq 0$. We can thus take the feasible set to be a subset of $Z \coloneqq \Zmbb_{r_m}^d$ while preserving optimality and feasibility. We note that the ambient space of $\xmbf$ can be even further reduced with additional refinements, and explore this approach in Appendix \ref{app:kernel_compression}.

Suppose $\Ambf \xmbf  = \mathbf{b} \pmod{R\Z{m}}$ is solvable, with some solution $\hat{\xmbf} \in Z$. Then, the set of solutions can be written as the coset $\hat{\xmbf} + K$. Here, $K$ is the kernel of $\Ambf$ as a homomorphism from $Z$ to $\Rcal$
\begin{align}
\label{eq:kernel_set_defn}
    K = \{ \xmbf \in Z : \Ambf \xmbf = \mathbf{0} \pmod{R\Z{m}}\}.
\end{align}
Thus, as typical with more familiar linear systems over vector spaces, one can parameterize the feasible set of the group relaxation using any feasible solution together with the kernel of some group homomorphism. We shall see, however, that characterizing the kernel is somewhat more challenging than the vector space setting.

Supposing the group problem is feasible, we can formulate the objective function in terms of the kernel subgroup as follows
\begin{equation}
\label{eq:transformed_cost_no_basis}
    \min_{\xmbf \in \hat{\xmbf}+K} \mathbf{c}^{\top} \xmbf + c_\Bcal^{\top} A_\Bcal^{-1} b   = \min_{\xmbf \in K}\cmbf^\top (\hat{\xmbf} + \xmbf \pmod{r_m \Z{d}}) + c_{\Bcal }^{\top} A_\Bcal^{-1} b  =: \min_{\xmbf \in K} f(\xmbf).
\end{equation}
The constant shift $c_{\Bcal }^{\top} A_{\Bcal }^{-1} b$ ensures that $\min_{\xmbf \in K} f(\xmbf) = \OPT_\Bcal$.

\subsection{Computing the null space}
\label{sec:computing_the_nullspace}
Our formulation \eqref{eq:transformed_cost_no_basis} is only useful algorithmically provided there are efficient methods for 
\begin{enumerate}[label=(\alph*)]
    \item producing a feasible solution $\hat{\xmbf} \in Z$ to $\Ambf \xmbf  = \mathbf{b} \pmod{R\Z{m}}$; and 
    \item accessing elements $\xmbf \in K$.
\end{enumerate}
 We shall show how to accomplish both of these tasks in classical polynomial time using tools from computational number theory \cite{cohen2000advanced}. We remark that one can formulate \eqref{eq:transformed_cost_no_basis} as an instance of the abelian Hidden Subgroup Problem by considering the task of finding the kernel of $K$ as a hidden subgroup of $Z$. This is a problem for which quantum computers are known to be effective. However, because $\Ambf $ is stored in classical memory rather than provided via an oracle, the setting is somewhat distinct, and in any case, an efficient classical algorithm which produces a set of generators is preferable.

Solving  $\Ambf \xmbf  = \mathbf{b} \pmod{R\Z{m}}$ is complicated by the fact that $\Rcal$ is rectangular, as opposed to square: we would rather work in a space of the form $a \Z{m}$ for some positive integer $a$ because it remains invariant under unimodular transformations. This can be accomplished with some preconditioning. Note that $\xmbf \in Z$ satisfies $\Ambf \xmbf  = \bmbf \pmod{R\Z{m}}$, if and only if
\begin{align*}
    B\Ambf \xmbf  = B\bmbf \pmod{r_m\Z{m}},
\end{align*}
where ${B} =\textup{diag}(r_m/r_1, \dots, r_{m}/r_{m-1}, 1)$. We shall presently see how this aides us.

Given $B\Ambf  \in \Zmbb^{m \times d}$, we first compute the Smith normal form $B\Ambf  = U T V$ for $U \in \Z{m\times m}$, $V \in \Z{d\times d}$ unimodular and $T = \textup{diag}(t_1, t_2, \dots, t_k)$ consisting of invariant factors, with $k = \min(m,d)$. Multiplying by $U^{-1}$, the system reduces to 
\begin{equation} \label{eq:half_reduce}
    TV\xmbf = \bmbf' \pmod{r_m \Z{m}},
\end{equation}
where $\bmbf' \coloneqq U^{-1} B\bmbf$. To solve this system, observe that by unimodularity, the operation $\mathbf{y} \mapsto V^{-1}\mathbf{y} \pmod{r_m\Z{d}}$ defines an automorphism on $Z$. Thus, any solution $\xmbf \in Z$ to \eqref{eq:half_reduce} can be obtained uniquely from some $\ymbf \in Z$ that solves
\begin{equation}\label{eq:simple_lin_cong}
    T \ymbf  = \bmbf' \pmod{r_m \Z{m}}.
\end{equation}
The problem has thus been reduced to $m$ independent linear equations modulo $r_m$
$$
t_i \ymbf_i = \bmbf_i' \pmod{r_m}
$$
whose solutions are well known from number theory. A solution exists if and only if
\begin{equation} \label{eq:diagonalized_modulo}
    \gcd(r_m, t_i) \;\big|\; \bmbf'_i
\end{equation}
for each $i \in [k]$, and in this case a particular solution $\hat{\ymbf}_i$ can be found through the extended Euclidean algorithm. Thus, $V^{-1} \hat{\ymbf} \pmod{r_m \Z{m}}$ is a feasible solution in $\Zmbb_{r_m}^d$.

It remains to determine the kernel $K$, and this can be done by setting $\bmbf' = 0$. In this case, the $i$th equation in \eqref{eq:diagonalized_modulo} has $\gcd(r_m, t_i)$ solutions: the multiples of $r_m/\gcd(r_m, t_i)$. The number of solutions is thus $\prod_{i \in [k]} \gcd(r_m, t_i)$. The solutions $\xmbf\in \Zmbb_{r_m}^d$ to \eqref{eq:half_reduce} are obtained as
\begin{equation}
\label{eq:null_solution_set}
    \xmbf = \sum_{i=1}^k n_i h_i, \quad n_i \in \{0,1,\ldots, \gcd(r_m, t_i)-1\}.
\end{equation}
where $h_i = V_i^{-1} \frac{r_m}{\gcd(r_m, t_i)} \pmod{r_m \Z{d}}$ and all addition is entrywise modulo $r_m$. To summarize,
\begin{equation} \label{eq:K_isomorphism}
    K \cong \bigoplus_{i=1}^k \Zmbb_{\gcd(r_m, t_i)}
\end{equation}
with corresponding cyclic generators $\Scal = (h_i)_{i=1}^k$.

The procedure outlined above computes a generating set for $K$, and shows that transforming the problem in \eqref{e:GILO-final} to \eqref{eq:transformed_cost_no_basis} is classically efficient, enabling us to efficiently compute a generating set for $K$. We summarize the overall procedure in Algorithm~\ref{alg:nullspacereform}.

\begin{algorithm}
  \caption{\textsc{NullGenFinding}}
  \label{alg:nullspacereform}
  \hspace*{\algorithmicindent} \textbf{Input}: Constraints from Group relaxation  $(\Ambf , \{r_j\}_{j=1}^{m})$\\   
 \hspace*{\algorithmicindent} \textbf{Output}: An independent generating set $\Scal$ 
 for $K\leq \Zmbb_{r_m}^d$, the solution space to $\Ambf \xmbf = \mathbf{0} \pmod{R\Z{m}}$.
  \begin{algorithmic}[1]
    \State Compute the SNF of $B\Ambf  \in \Z{m \times d}$: $B\Ambf  =  UTV$, $T = \diag(t_1,\dots,t_k)$, ${B} =\diag (r_m/r_j)_{j=1}^m$.
    \State For each $i \in [k]$ compute $\gcd(r_m, t_i)$ and set $z_i = r_m/\gcd(r_m, t_i)$.
    \State Set $h_i = z_i V^{-1}_i \pmod{r_m\Z{d}}$ for $i \in [k]$
    \State Output $\Scal = (h_1, \dots, h_k)$.
  \end{algorithmic}
\end{algorithm}

\noindent Proof of correctness is given by the above discussion. The runtime of Algorithm~\ref{alg:nullspacereform} can be bounded as follows.  
\begin{lemma}[Runtime of \textsc{NullGenFinding}]
\label{lem:kernel_gen_finding}
    Suppose one has data $(\Ambf , \bmbf , \mathbf{c}, \{r_j\}_{j=1}^{m})$ corresponding to the group relaxation of an ILP stored in classical memory. There is a classical algorithm that outputs cyclic generators $\Scal = (h_i)_{i=1}^k$ for the kernel $K$ of $\Ambf$ in $Z$ using time $\widetilde{\mathcal{O}}(\textup{poly}(\log\lVert \Ambf  \rVert_{\infty}, d) )$. In particular, $K \cong \Zmbb _{\lvert \langle h_1\rangle\rvert} \oplus \cdots \oplus \Zmbb _{\lvert \langle h_k\rangle\rvert}$.
\end{lemma}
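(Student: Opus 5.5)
The plan is to analyze Algorithm~\ref{alg:nullspacereform} step by step. Correctness is already established by the preceding discussion, so the work is bounding the cost of each line in the bit model. First I will bound the sizes of the inputs. The entries of $B = \diag(r_m/r_j)$ are divisors of $r_m \le |\det A_\Bcal|$. The Hadamard bound gives $\log|\det A_\Bcal| = \mathrm{poly}(m, \log\|A\|_\infty)$, and the same holds for $\log\|\Ambf\|_\infty$, since $\Ambf$ comes from $U^{-1}A_\Ncal$ with $U$ produced by a polynomial-time SNF algorithm. Hence $B\Ambf$ is an $m \times d$ integer matrix whose entries have bit length polynomial in the input, and I will treat $m \le d + m = \Ocal(d)$ via the standing assumption $m = \Theta(n)$.

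Next, line 1 is the main cost. I will invoke the polynomial-time Smith normal form algorithm cited in Section~\ref{subsec:groups_lattices} (Iliopoulos / Kannan--Bachem type). It returns unimodular $U, V$ and $T$ in time polynomial in $m$, $d$ and $\log\|B\Ambf\|_\infty$. It also guarantees that the entries of $U$, $V$, $U^{-1}$, $V^{-1}$ have polynomially bounded bit length. This bit-length guarantee is the step I expect to be the main obstacle. Polynomial time depends on it, and naive elimination can blow entries up exponentially. So I will rely on the cited algorithm's explicit control of intermediate coefficient growth, possibly working modulo a multiple of $r_m$ or of the determinant, rather than reprove it.

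Line 2 consists of $k \le \min(m,d)$ gcd computations and divisions on numbers of polynomial bit length. Each costs $\widetilde{\Ocal}(\text{bit length}^2)$ by the Euclidean algorithm. Line 3 needs $V^{-1}$. Either the SNF routine returns it directly, or I compute it by exact rational elimination, which is polynomial by Edmonds' bound since $\det V = \pm1$. Then the columns are scaled by $z_i$ and reduced modulo $r_m$, for $\Ocal(d^2)$ arithmetic operations on polynomial-size integers. Summing these costs gives $\widetilde{\Ocal}(\mathrm{poly}(\log\|\Ambf\|_\infty, d))$.

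Finally, I will verify the structural claim $K \cong \bigoplus_i \Zmbb_{|\langle h_i\rangle|}$. The map $\ymbf \mapsto V^{-1}\ymbf \bmod r_m$ is an automorphism of $Z$ that carries the solution set of $T\ymbf \equiv 0$ onto $K$. That solution set is the internal direct sum of the cyclic subgroups generated by $(r_m/\gcd(r_m,t_i))\mathbf{e}_i$, $i \in [k]$, and the coordinates $i > k$ contribute nothing because the corresponding columns were handled by the reduction. Each such generator has order exactly $\gcd(r_m,t_i)$. Automorphisms preserve both direct sum decompositions and element orders, which yields both \eqref{eq:K_isomorphism} and the stated form with $|\langle h_i\rangle| = \gcd(r_m,t_i)$.
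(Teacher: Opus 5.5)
Your route is the paper's route: the SNF of $B\Ambf$ dominates the cost, the gcds, the scaling by $z_i$ and the multiplication by $V^{-1}$ modulo $r_m$ are subdominant, and the isomorphism is transported by the automorphism $\ymbf\mapsto V^{-1}\ymbf \bmod r_m$. Your bit-length bookkeeping is more careful than the paper's.

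\textbf{The gap.} Your structural step contains a false claim: that the coordinates $i>k$ ``contribute nothing because the corresponding columns were handled by the reduction.'' When $d>m$ we have $k=m$, and columns $m+1,\dots,d$ of $T$ are identically zero. So $T\ymbf\equiv\mathbf{0}\pmod{r_m}$ imposes no condition on $y_{m+1},\dots,y_d$, and each such coordinate contributes a full $\mathbb{Z}_{r_m}$ summand to the solution set. The earlier reduction only discards columns of $\widetilde{\Ambf}$ that already lie in $R\Z{m}$. It has nothing to do with the integer null space of $B\Ambf$, which has rank at least $d-m$ and contains the vectors $V^{-1}\mathbf{e}_i$, $i>k$.

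A minimal counterexample: take $m=1$, $r_1=2$, $\Ambf=(1\ 1\ 1)$. No column is dropped, and the SNF is $T=(1\ 0\ 0)$, so $k=1$ and $\gcd(2,1)=1$. The algorithm therefore outputs only $h_1=\mathbf{0}$, whereas $K=\{\xmbf\in\mathbb{Z}_2^3 : x_1+x_2+x_3\equiv 0\}$ has order $4$. In general the true order is $r_m^{d-k}\prod_{i\le k}\gcd(r_m,t_i)$, not $\prod_{i\le k}\gcd(r_m,t_i)$.

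\textbf{The fix.} The repair is easy and costs nothing in runtime. Set $t_i:=0$ for $k<i\le d$, so that $\gcd(r_m,t_i)=r_m$ and $z_i=1$, and output $h_i=V^{-1}_i \bmod r_m$ for every $i\in[d]$. Your automorphism argument then goes through verbatim and gives $K\cong\bigoplus_{i=1}^{d}\mathbb{Z}_{\gcd(r_m,t_i)}$. The same omission appears in the paper's derivation of the kernel solution set and in Algorithm~\ref{alg:nullspacereform}, so its one-line appeal to the automorphism inherits it. It is harmless only when $d\le m$, as in Theorem~\ref{thm:family_construction}. You, however, explicitly asserted the incorrect reason, so your proof as written fails for $d>m$.

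\textbf{Smaller bookkeeping points.} First, $m=\Theta(n)$ does not give $m=\Ocal(d)$, since $d$ can be far below $n-m$ after columns are dropped. Second, the SNF of $B\Ambf$ and arithmetic modulo $r_m$ cost polynomially in $\log r_m$, which $\log\|\Ambf\|_\infty$ does not control. The honest bound is polynomial in $m$, $d$, $\log\|\Ambf\|_\infty$ and $\log r_m$, hence in $n$ and $L$ by your Hadamard argument. The paper's proof is equally loose here, since it quotes the SNF cost of $\Ambf$ rather than of $B\Ambf$.
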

\begin{proof} 
For dense $\Ambf $, the Smith normal form can be computed using  
$$
    \mathcal{O}(\textup{poly}(\log\lVert \Ambf  \rVert_{\infty}, d))
    $$
operations, where $\| \Ambf  \|_{\infty} \coloneqq \max_{i \in [d], j \in [d]} |\Ambf _{ij}|$, see~\cite{storjohann1996near}. The only other operations involved are multiplication by $V^{-1}$ and modular arithmetic ($V^{-1}$ can be obtained along with $V$ ``on the fly" during the SNF calculation, by composing the intermediate row and column operations). %We then need to determine a generating set for each of the resulting $m$ scalar linear congruences determined from the diagonal matrix $S$ with the other variables being free. This gives a generating set for $T\zmbf = \mathbf{0}\pmod{R\Z{m}}$, $m \leq d$. 
These costs, are subdominant, hence the runtime follows simply. The last statement from theorem follows from $V^{-1}$ being automorphism for $\mathbb{Z}_{r_m}^d$.
\end{proof}

\subsection{Markov-chain search algorithms for the group relaxation}\label{subsec:local_search_nullspace}
% \label{sec:int_prog_and_group_prob}
In this section, we design a random-walk-based approach for sampling optimal solutions of the group relaxation. 
%This will be accomplished by mapping \eqref{e:GILO-final} to a problem over a finite group, while preserving the optimal solution. Then, 
We will show how to construct a random walk over a coset corresponding to the set of feasible solutions, enabling one to run Markov chain search to solve the group relaxation. The motivation for considering Markov-chain search is primarily for analysis: it is easier to demonstrate quantum speedup in this framework compared with a classical algorithm that directly samples uniformly over the feasible coset. 

As done in the previous section, by exploiting the algebraic structure of the group relaxation, one can efficiently generate a basis for the kernel of the linear map defined by the coefficient matrix with the range  mapped to $\Rcal$. Hence, provided any feasible solution to the group relaxation, we can determine the optimal solution by solving a problem over the null space.  

An efficient procedure for obtaining a generating set for $K$ buys us nothing on its own. It does however enable one to construct a Markov Chain that explores the feasible region of the group relaxation, and mixes to a stationary distribution $\pi$ in $\poly(n)$ steps. Letting $\pi^\star$ denote the total probability that a sample from $\pi$ is a global minimizer of $f$, drawing $\Ocal\left((\pi^\star)^{-1} \log(\delta^{-1})\right)$ samples from $\pi$ suffices to ensure that we encounter the global minimizer with probability at least $1 - \delta$. This algorithm, following \cite{chakrabarti2024generalized}, is called the \emph{Markov Chain Search} (MCS). We outline this framework in Algorithm~\ref{alg:MarkovChainSearch}. Under the stated prerequisities, MCS finds the global minimizer of $f$ in time $\mathcal{O}\left((\pi^\star)^{-1}\cdot \poly (n, L)\right)$. The major component is the construction of the random walk over the feasible set, using the generator-finding procedure from the previous section.

A natural choice for our setting is a \emph{Cayley walk} generated by the set $\Scal = \{h_1, \dots, h_k\}$ returned by Algorithm \ref{alg:nullspacereform}, which leads to a walk over an undirected graph. In particular, a step of the walk corresponds to uniformly at random picking a generator $h_j \in \Scal$, then uniformly at random adding $h_j$ or $-h_j$ to the current point modulo $r_m\mathbb{Z}^d$ producing another element of $K \subset Z$. Let $u_j = |\langle h_j \rangle|$ for $j \in [k]$. Via the isomorphism 
\begin{align}
\label{eqn:hcal_isomorph}
        K \cong \bigoplus_{i \in [k]} \Zmbb_{u_i},
\end{align} 
one can see that this Cayley walk corresponds to a product of cycle walks over the cyclic groups $\Zmbb _{u_1}, \dots, \Zmbb_{u_k}$. %, which effectively corresponds to a Cayley walk over $K$  via the isomorphism in \eqref{eqn:hcal_isomorph}.  
Thus, the stationary distribution $\pi$ is the uniform superposition over $K$, and satisfies $\pi^\star = \lvert K^{\star} \rvert/\lvert K\rvert$. The log-Sobolev (LS) constant of the cycle walk on $\Zmbb _{u_i}$ is $\Theta (1/u_i^2 )$ \cite[Example 4.2]{diaconis1996logarithmic}, and the product of $k$ cycles has a LS constant lower bounded by $\Omega ( 1/(k \min_{i \in [k]} u_i^2))$ \cite[Lemma 3.2]{diaconis1996logarithmic}. This also happens to be the same order as the spectral gap.

\begin{algorithm}
  \caption{\textsc{GroupMarkovChainSearch}}
  \label{alg:MarkovChainSearch}
  \hspace*{\algorithmicindent} 
  \textbf{Prerequisites:} Group relaxation data $(\Ambf , \bmbf , \mathbf{c}, \{r_j\}_{j=1}^{m})$, distribution $\pi$ such that $\pi({\OPT}_\Bcal)$ is the probability that a sample from $\pi$ is a global minimizer, a Markov Chain with transition matrix $P$ that preserves $\Ambf \xmbf = \bmbf  \bmod{R\Z{m}}$, mixing time $\tmix(\pi^\star/2) = \mathcal{O}(\poly(n))$.  \\ 
  \hspace*{\algorithmicindent} \textbf{Input}: Description of $\poly(n)$ time procedures to evaluate $f$ and perform a step of the Markov Chain described by $P$, failure probability $\epsilon$. Known feasible point $\hat{\mathbf{x}} \in \overline{\Pcal}_{\Bcal}$. \\   
 \hspace*{\algorithmicindent} 
 \textbf{Output}: A global minimizer $\xmbf ^{\star}$ of the group relaxation.
  \begin{algorithmic}[1]
    \State Set $i=0$, set $z^{(0)} = \hat{\mathbf{x}}$, and construct $f(\mathbf{x})$ via Equation \eqref{eq:transformed_cost_no_basis}.
    \While {$i \leq \frac{2}{\pi({\OPT}_\Bcal)}\log\left(1/\epsilon\right)$}
        \State Simulate $\tmix$ steps of $P$ to obtain sample $\tilde{z}$.
        \If {$f(\tilde{z}) \le f(z^{(i)})$}
        \State Set $z^{(i+1)} = \tilde{z}$ 
        \Else 
        \State Set $z^{(i+1)} = z^{(i)}$ 
        \EndIf
        \State $i \gets i + 1$
    \EndWhile
    \State Output $z^{(i)}$.
  \end{algorithmic}
\end{algorithm}

Given the above facts, the next result follows immediately. 
\begin{theorem}[Markov chain search for group relaxations] 
\label{thm:mcs_complexity}
Algorithm~\ref{alg:MarkovChainSearch}, with a Cayley walk over $K$ as the input Markov chain, finds an optimal solution to the group relaxation with high probability, and runtime at most $\mathcal{O}\left((|K|/\lvert K^{\star}\rvert) \cdot \poly (n, L)\right)$. 
\end{theorem}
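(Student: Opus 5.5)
Since the chain moves along $K$ while $f$ is defined through the shift $\hat{\xmbf}$, we identify the feasible coset $\hat{\xmbf}+K$ with $K$ via $\zmbf\mapsto\hat{\xmbf}+\zmbf$. The plan has four parts: correctness of the walk, a bound on its mixing time, a sample-count argument for success probability, and a tally of per-step costs.

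\emph{Correctness of the walk.} The Cayley walk with symmetric generating set $\pm\Scal$, made lazy, is symmetric. Symmetry gives reversibility and a uniform stationary distribution on $K$. By \eqref{eqn:hcal_isomorph} the walk is a product of cycle walks, so it is irreducible. Laziness makes it aperiodic. Every step adds $\pm h_j\in K$, so by \eqref{eq:kernel_set_defn} feasibility $\Ambf\xmbf=\bmbf \pmod{R\Z{m}}$ is preserved. Under the uniform measure the minimizers have mass $\pi(\OPT_\Bcal)=|K^\star|/|K|$, where $K^\star\subseteq K$ is the set of $\zmbf$ with $f(\zmbf)=\OPT_\Bcal$.

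\emph{Mixing time.} The spectral gap of the product of cycles is $\Omega(1/(k\max_i u_i^2))$; for a lower bound the largest cycle is the relevant one. Combined with $t_{\mathrm{mix}}(\varepsilon)=\Ocal(\delta^{-1}\log(1/\varepsilon))$ this gives $t_{\mathrm{mix}}(\pi^\star/2)=\Ocal(k\,r_m^2\log(|K|/|K^\star|))$, using $u_i\mid r_m$. This step is where I expect the difficulty. The quantity $r_m^2\le|\det A_\Bcal|^2$ need not be polynomial in $n$ and $L$, so the stated bound needs an extra argument. I would first try to bypass mixing entirely. The walk's stationary law is uniform on $K$, and exact uniform sampling from $K$ is trivial in $\poly(n,L)$ time: draw $n_i\in\{0,\dots,u_i-1\}$ uniformly and output $\sum_i n_ih_i$, using \eqref{eq:null_solution_set}. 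This realizes line 3 of Algorithm~\ref{alg:MarkovChainSearch} with zero error. The Cayley walk then serves only as the analytical object whose stationary distribution is $\pi$. Alternatively, one can observe that the theorem's prerequisite explicitly assumes $t_{\mathrm{mix}}=\poly(n)$, and either invoke that assumption or note that $\log r_m\le L$ bounds things whenever the $u_i$ are polynomial.

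\emph{Success probability.} Each iteration yields a sample whose law is within TV distance $\pi^\star/2$ of uniform (exactly uniform under the direct sampler). Each sample therefore hits $K^\star$ with probability at least $\pi^\star/2$, independently of previous iterations. Over $N=\frac{2}{\pi^\star}\log(1/\epsilon)$ iterations, the probability that no sample hits $K^\star$ is at most $(1-\pi^\star/2)^N\le\epsilon$. The retained iterate $z^{(i)}$ is monotone nonincreasing in $f$, so once a minimizer is seen it is kept, and the output is optimal.

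\emph{Cost per iteration.} Computing $\Scal$ and $\hat{\xmbf}$ costs $\poly(n,L)$ by Lemma~\ref{lem:kernel_gen_finding} and the solution procedure of Section~\ref{sec:computing_the_nullspace}. Evaluating $f$ via \eqref{eq:transformed_cost_no_basis} is modular arithmetic on $d$ entries of bit length $\Ocal(\log r_m)=\Ocal(L)$. Multiplying the number of iterations by the per-iteration cost gives the claimed $\Ocal((|K|/|K^\star|)\poly(n,L))$ runtime.
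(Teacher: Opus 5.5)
Your proof is correct and has the same skeleton as the paper's. The paper treats the theorem as immediate from three facts:
\begin{itemize}
\item the Cayley walk's stationary law is uniform on $K$, so $\pi^\star=|K^\star|/|K|$;
\item Markov chain search needs $\mathcal{O}((\pi^\star)^{-1}\log(1/\epsilon))$ near-stationary samples;
\item each sample costs $t_{\mathrm{mix}}(\pi^\star/2)=\poly(n)$ steps.
\end{itemize}
The third fact is simply one of the prerequisites of Algorithm~\ref{alg:MarkovChainSearch}, backed by the paper's informal log-Sobolev remark.

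You differ in scrutinizing that third step. You correctly note that the product-of-cycles gap is governed by $\max_i u_i$; the paper's $\Omega(1/(k\min_i u_i^2))$ points the wrong way for a lower bound. Hence the mixing time is of order $\max_i u_i^2$. Since $u_i \mid r_m$ and $r_m$ can be as large as $|\det A_\Bcal|$, this can be exponential in $L$.

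Your two resolutions buy different things:
\begin{itemize}
\item \textbf{Invoking the prerequisite.} This is exactly the paper's route. It proves the theorem as literally stated, for the walk-based algorithm, but only conditionally.
\item \textbf{Exact sampling.} Replacing line 3 by drawing $\sum_i n_i h_i$ from the direct-sum decomposition \eqref{eq:null_solution_set} gives an unconditional $\mathcal{O}((|K|/|K^\star|)\,\poly(n,L))$ bound with genuinely independent samples. It is then a statement about uniform sampling from $K$ rather than about running the Cayley walk. The walk survives only as the object whose stationary law you sample, which is the classical counterpart of Algorithm~\ref{alg:initial_state_prep}.
\end{itemize}

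One small fix if you keep the walk version. When consecutive iterations continue the same chain, the samples are not independent. What you need is that each sample hits $K^\star$ with probability at least $\pi^\star/2$ conditionally on the past. This holds because $d(t)$ is a supremum over initial distributions, and it suffices for the $(1-\pi^\star/2)^N\le\epsilon$ bound.
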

% \noindent As mentioned in Section \ref{subsec:nullspaceform}, the procedure \textsc{NullGenFinding} outputs a generating set $\Scal = \{h_1, \dots, h_k\}$ for $K$, where each vector $a \in \bigoplus_{i \in [k]} \Zmbb_{u_i}$ maps to a unique element of $K$ via $a \mapsto \sum_j a_j h_j$. Hence by randomly sampling from $\bigoplus_{i \in [k]} \Zmbb_{u_i}$ and taking linear combinations in $\Scal$, we can randomly sample from $K$. If $K^{\star} \subset K$ is the subset of minimizers of the group relaxation, then the probability that this uniform sampling finds a minimizer is $\lvert K^{\star} \rvert/\lvert K\rvert$. 

\subsection{Comparison with the shortest path on the Cayley graph}\label{s:compare}

The classical approach for solving \eqref{e:GILO-final} leverages the nonnegativity of $\mathbf{c}$ to reduce the group relaxation to finding the shortest path in a Cayley directed graph (digraph)~\cite{gomory1965relation, shapiro1968dynamic}. Unlike MCS, this algorithm does not explicitly depend on $\lvert K \rvert$. It instead depends on $\lvert G \rvert$, where
\begin{align}
\label{eqn:gdefin}
\begin{aligned}
    G &\coloneqq (\Ambf\Z{d} + R\Z{m})/R\Z{m} \leq \Z{m}/R\Z{m}.
\end{aligned}
\end{align}
In words, the group $G$ is the range of the map $\xmbf  \in \Z{d}_{\geq 0} \mapsto \Ambf \xmbf \pmod{R\Z{m}}$. In this section, we recall the shortest-path approach of Gomory and present a setting under which $\lvert K \rvert = \lvert G \rvert$, making the shortest-path and MCS approaches comparable. Hence, we want to identify settings in which our quantum algorithm has a super-quadratic speedup over Gomory's shortest-path approach.

Let $\Gcal$ be the corresponding Cayley digraph over $G$ generated by \begin{align*}
    \Acal  = \{\Ambf_1 \pmod{R\Z{m}}, \cdots, \Ambf_d \pmod{R\Z{m}}\},
\end{align*} where an edge incident on $\mathbf{y} \in G$ corresponding to $\mathbf{y} \mapsto \left[\mathbf{y}  + \Ambf _{j}\right] \pmod{R\Z{m}}$ has a weight of $\cmbf_j \geq 0$. Define \begin{align*}
    \mathbf{b}' = \left(\bmbf \pmod{R\Z{m}}\right) \in G.
\end{align*} Feasible solutions correspond to directed paths from $\mathbf{0}$ to $\mathbf{b}'$. Since $\mathbf{c}$ is nonnegative, the optimal solution is the feasible solution with smallest $\ell_1$-norm, i.e., the shortest path from $\mathbf{0}$ to $\mathbf{b}'$ in $\mathcal{G}$.  

This approach has complexity dependent on $\lvert G\rvert$, as there are $\mathcal{O}(\lvert \Acal  \rvert \lvert G\rvert)$ edges in $\Gcal$. Thus, shortest-path algorithms can solve the problem with $\mathcal{O}\left(n\lvert G\rvert\right)$ time. In contrast, the null space approach depends on $\lvert K\rvert$. In general, either $\lvert K \rvert$ or $\lvert G \rvert$ may be larger. More precisely, \eqref{eq:kernel_set_defn} and the first isomorphism theorem says that $G \cong \mathbb{Z}_{r_m}^d/K$, and in particular $r_m^d = \abs{K}\abs{G}$. As discussed in Appendix \ref{app:kernel_compression}, choosing  $\bigoplus_{i=1}^d \Zmbb_{s_i}$, where $s_i \coloneqq \abs{\langle\Ambf_i\rangle}$, as the group relaxation domain can result in a smaller kernel $K'$. Then the relation is $\prod_{i=1}^d s_i = \abs{K'}\abs{G}$. As expected, the range and kernel sizes are inversely related for fixed domain, but in general either could be larger than the other. 

In fact, it appears one can construct families of instances where either the range space or kernel space algorithms can be asymptotically faster than the other. The goal of the following result is to show there at least exists a seemingly generic family of ILPs, where $\lvert G \rvert = \lvert K\rvert$.

\begin{theorem}
\label{thm:family_construction}
    There is a constructable family of integer linear programming problems, parameterized by $m\in \Zmbb_+$, whose group relaxations satisfy $G \cong K$ (in particular, $\abs{G} = \abs{K}$).
\end{theorem}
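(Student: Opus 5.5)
We will construct the family explicitly, working backwards from the desired group-relaxation data $(\Ambf,\bmbf,\cmbf,\{r_j\})$ to an ILP that produces it. The ambient domain is $Z=\Zmbb_{r_m}^d$, and by the first isomorphism theorem $G\cong Z/K$, so $|Z| = |K||G|$. The goal is therefore to choose $d$ and the data so that $K\cong Z/K$ as groups.

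The simplest candidate is $d=2m$ with all $r_j=r$ for a fixed $r\ge 2$ (say $r$ prime, or any $r$), so that $R=rI_m$ and $Z=\Zmbb_r^{2m}$. Take $\Ambf=[\,I_m\;\; I_m\,]$ modulo $r$.
\begin{itemize}
\item The range $G$ is all of $\Zmbb_r^m$, since the columns $e_i$ already generate it.
\item The kernel is $K=\{(x,y): x+y\equiv 0\}=\{(x,-x)\}\cong \Zmbb_r^m$.
\end{itemize}
Hence $G\cong K\cong\Zmbb_r^m$, and $|K|=|G|=r^m$. The verification is direct. Alternatively, we can run Algorithm~\ref{alg:nullspacereform}: $B=I$, the SNF of $[I_m\;I_m]$ has $t_i=1$, so by \eqref{eq:K_isomorphism} we get $K\cong\bigoplus_{i}\Zmbb_{\gcd(r,1)}$. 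This is wrong, because \eqref{eq:K_isomorphism} ignores the $d-k$ free coordinates. We will therefore do the kernel count directly rather than quote that formula, noting that the columns of $V^{-1}$ beyond index $k$ contribute free $\Zmbb_{r_m}$ factors.

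The main work, and the main obstacle, is realizing this data from an honest ILP. We need an integer $A=[A_\Bcal,A_\Ncal]$, with $A_\Bcal$ an \emph{optimal} LP basis, such that:
\begin{itemize}
\item the SNF of $A_\Bcal$ is $rI_m$, so that $A_\Bcal=U(rI_m)V$;
\item $U^{-1}A_\Ncal\equiv[I_m\;I_m]$ with no column in $r\Zmbb^m$.
\end{itemize}
The plan is to take $A_\Bcal=rI_m$ (so $U=V=I$) and $A_\Ncal=[I_m\;\;(r+1)I_m]$, or $[I_m\;\;I_m]$ if duplicate columns are acceptable. Then $A=[rI_m\;\;I_m\;\;(r+1)I_m]$, $n=3m$, and we choose $b=r\beta\mathbf 1_m$ with $\beta\ge 1$ an integer, so that $x_\Bcal=\beta\mathbf 1>0$ is a basic feasible solution. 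For LP optimality we pick costs making the reduced costs nonnegative. With $c_\Bcal=\mathbf 1$ we have $y_\Bcal=\mathbf 1/r$, and the reduced costs are $c_j-1/r$ for the first $I_m$ block and $c_j-(r+1)/r$ for the second. Choosing, for example, $c_j=1$ and $c_j=2$ respectively makes them positive, so $\Bcal$ is optimal and nondegenerate.

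After checking these, we still need to confirm that the reductions of Section~\ref{ss:primal_group_problem} give exactly $\Ambf=[I_m\;I_m]$ and $R=rI_m$, with no columns dropped. This holds since $(r+1)e_i\equiv e_i\not\equiv 0 \pmod r$. We also need feasibility: $\bmbf=b\equiv 0$, so $\hat\xmbf=0$. Applying the computation of the previous paragraph then gives $G\cong K\cong\Zmbb_r^m$ for every $m$, with the parameter $r$ (and $\beta$) free. This yields a family indexed by $m$.

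If a less trivial-looking family is desired, we could conjugate by random unimodular matrices $U,W$. We would take $A_\Bcal=U(rI)W$ and $A_\Ncal=U[I\;\,(r+1)I]$, reselect $b$ and $c$ to keep optimality, and observe that the isomorphism types of $K$ and $G$ are invariant under this conjugation.
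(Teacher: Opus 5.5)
Your construction is correct, but it reaches $G\cong K$ by a different mechanism than the paper.

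\textbf{The paper's route.} The paper keeps $n=2m$ (so $d=m$) and takes $A_\Bcal=U(t^2I_m)V$ and $A_\Ncal=U'(tI_m)V'$. Then $\Ambf=t\,U^{-1}U'V'$. Inside $\mathbb{Z}_{t^2}^m$, which here serves as both $Z$ and the range group, the image and the kernel are the same subgroup $t\,\mathbb{Z}_{t^2}^m\cong\mathbb{Z}_t^m$. In effect, the torsion of each coordinate is split in half. The paper only records $|K|=|G|=t^m$, reading $|K|$ off \eqref{eq:K_isomorphism}. That is legitimate there because $d=k=m$.

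\textbf{Your route.} You split the \emph{coordinates} instead. With $n=3m$, $d=2m$ and $\Ambf\equiv[I_m\;I_m]\pmod r$, the map is onto $\mathbb{Z}_r^m$ and its kernel is the antidiagonal $\{(-y,y)\}$. This gives $G\cong K\cong\mathbb{Z}_r^m$ for every $r\ge 2$, not only perfect squares.

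\textbf{What each buys.} Yours is completely explicit. Everything is diagonal, so the Smith form, the absence of dropped columns, and LP optimality (reduced costs $(r-1)/r>0$, $x_\Bcal=\beta\mathbf 1>0$) are all checked by hand. You need no auxiliary condition such as $A_\Bcal^{-1}\ge 0$, and you prove an actual isomorphism rather than equal orders. The paper's version is general over arbitrary unimodular $U,V,U',V'$ and has the shape $n=2m$. It uses that shape in the remark after the theorem and, with $V'=I_m$, in the later family satisfying \eqref{eqn:condition1}--\eqref{eqn:condition2}.

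\textbf{On \eqref{eq:K_isomorphism}.} Your caution is well founded. When $d>m$, the last $d-k$ columns of $T$ vanish, so the kernel picks up an extra free factor $\mathbb{Z}_{r_m}^{d-k}$ that the displayed formula omits. The paper's proof never meets this issue because $d=m$ there.

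\textbf{Optional improvement.} With $b=r\beta\mathbf 1$, the LP optimum is already integral, so these ILPs are trivial. Since $G$ is all of $\mathbb{Z}_r^m$ and dual feasibility does not depend on $b$, you can take any $b\in\mathbb{Z}^m_{\ge 0}$ with some $r\nmid b_i$. That keeps $\Bcal$ optimal and the group relaxation feasible, and makes the unique LP optimum fractional.
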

\begin{proof}
Consider arbitrary positive integers $t, m$. Let $n=2m$, $r = t^2$, $R = r I_m $, and $T = t I_m $. Pick unimodular matrices $U, V, U', V' \in \mathbb{Z}^{m \times m}$. We then let the ILP constraint matrix be $A = [URV | U'TV'] = [A_{\Bcal} | A_{\Ncal}] \in \mathbb{Z}^{m \times n}$. Additionally, we require $U, V$ are chosen such that $A_{\Bcal}^{-1}$ has all non-negative entries.

 We want the right-hand side vector $b \in \mathbb{Z}^{m}_{\geq 0}$ to be such that the original ILP is feasible. To do this, we first look at the group relaxation constraint of 
 \begin{align*}
U^{-1}A_{\Ncal}\xmbf  = U^{-1}b  \pmod{r\Z{m}}.
\end{align*}
The feasibility of the group relaxation hinges on 
\begin{align*}
    T\mathbf{z} = b \pmod{r\mathbb{Z}^m}
\end{align*}
having a solution.  By a discussion in Section \ref{sec:computing_the_nullspace}, this is true if and only if $t~|~b_j$, $\forall j \in [m]$. Choosing $b$ as  such and keeping all entries non-negative, one solution is then $\hat{\mathbf{x}} = (V')^{-1}\left(\frac{b}{t}\right) \pmod{r\mathbb{Z}^n}$.  From Section \ref{ss:primal_group_problem}, ILP feasibility then hinges on
\begin{align*}
A^{-1}_{\Bcal}(b-A_{\Ncal}\hat{\mathbf{x}}) \geq 0,
\end{align*}
which holds by our assumptions on $A_{\Bcal}^{-1}$.

We also want to be able to plant our choice of basis $\Bcal$ as an optimal one. To perform the planting, we only need to ensure primal and dual feasibility of the corresponding LP basic feasible solution (BFS). The primal BFS is $x_{\textup{LP}} = (x_{\Bcal}, x_{\Ncal}) =(A_{\Bcal}^{-1}b, \mathbf{0}_{n-m})$ and the associated dual solution is $y_{\textup{LP}} = (A_\Bcal^{\top})^{-1}c_{\Bcal}$. By the earlier paragraph, we already have $x_{\Bcal} \geq 0$, ensuring primal feasibility. Dual feasibility requires the reduced costs to be nonnegative:  
$$
   c_\mathcal{N}^\top-c_\Bcal ^\top A_\Bcal ^{-1}A_\mathcal{N} \geq 0,
$$
which we can ensure by choosing $c$.

From equation \eqref{eq:K_isomorphism}, the null space $K$ has cardinality
\begin{align*}
    \lvert K \rvert = \prod_{j=1}^{m} \gcd (r, t) = t^m.
\end{align*}
Meanwhile, the cardinality of $G$ satisfies
\begin{align*}
    \lvert G \rvert = \prod_{i=1}^{m} \frac{r_m}{\gcd (r_m, t_i)} = \prod_{i=1}^m\frac{t^2}{t} = \abs{K}
\end{align*}
Hence for each $m$ we have a collection of ILPs corresponding to the set of all possible choices for $t, U, V, U', V'$, where $\lvert K \rvert = \lvert G \rvert$ for all of them.    
\end{proof}

To summarize, the only difference between the construction in Theorem \ref{thm:family_construction} and the set of all ILPs with $n = 2m$ is the requirement on the invariant factors of $A_{\Bcal}$ and $A_{\Ncal}$, i.e. the invariant factors of $A_{\Bcal}$ are all equal and each factor is exactly the square of a factor of $A_{\Ncal}$.

Note from \eqref{eqn:gdefin}, $G$ satisfies $\lvert G\rvert \leq \abs{\det A_{\Bcal}}$ and $\lvert \Acal \rvert \leq n$, which is why the shortest-path approach for the group relaxation is usually presented with an $\mathcal{O}(n |\det A_{\Bcal}|)$ runtime. Here we present a set of sufficient conditions on an ILP under which $\lvert K \rvert$ in the null space formulation is not significantly larger, i.e. in $\mathcal{O}(\textup{poly}(n, m)|\det A_{\Bcal}|)$.  Unlike the previous theorem, we relax some of the constraints on $n$ and the invariant factors of $A_{\Bcal}$. There are no conditions on $A_{\Ncal}$.
\begin{theorem}
    Let $c \in \mathbb{R}^n, A \in \mathbb{Z}^{m \times n}, b \in \Z{m}$ define an integer linear program, and $\Bcal$ be an optimal LP basis. Let $R = \diag(r_j)_j^m = UA_{\Bcal}V \in \mathbb{Z}^{m \times m}$ be the SNF of $A_{\Bcal}$. Suppose $n \leq  2m + \mathcal{O}(\log(m))$ and that there exists an integer $j =\mathcal{O}(\log(m))$ such that for all integers $k$ satisfying $j \leq k \leq m$, we have $r_j = r_m$. Then, we can construct a null space formulation of the ILP group relaxation where $\lvert K \rvert = \mathcal{O}(\textup{poly}(m, n) \abs{\det A_{\Bcal}})$. 
\end{theorem}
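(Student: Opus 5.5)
Since $G \cong Z/K$ for whatever finite domain $Z$ we choose, we have $|K| = |Z|/|G|$. The plan is therefore to choose the domain $Z$ so that $|Z|$ is only a polynomial factor larger than $|\det A_{\Bcal}| = \prod_i r_i$. Then $|K| \le |Z|$, and a bound on $|Z|$ directly gives a bound on $|K|$.

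The plan uses the compressed domain from Appendix~\ref{app:kernel_compression} and the discussion following \eqref{eqn:gdefin}. Rather than $\Zmbb_{r_m}^d$, we take $Z' = \bigoplus_{i=1}^d \Zmbb_{s_i}$, where $s_i = |\langle \Ambf_i\rangle|$ is the order of column $i$ in $\Rcal = \Z{m}/R\Z{m}$. As in Section~\ref{subsec:nullspaceform}, reducing each coordinate modulo $s_i$ preserves feasibility, and since $\cmbf \ge 0$ it also preserves optimality. The kernel $K'$ is computed exactly as in Algorithm~\ref{alg:nullspacereform}, with the modulus adapted per coordinate; this is a routine modification. We then have $|K'| = \prod_i s_i/|G| \le \prod_i s_i$, and each $s_i \le r_m$ because the exponent of $\Rcal$ is $r_m$.

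The difficulty is that the crude bound $r_m^d$ with $d \le n-m \le m + \Ocal(\log m)$ does not by itself compare to $\det A_{\Bcal}$. I would use the hypothesis $r_j = \cdots = r_m$ with $j = \Ocal(\log m)$ as follows. Since $d \le m + \Ocal(\log m)$ and at least $m-j+1$ of the invariant factors equal $r_m$, we get
\[
\prod_{i=1}^d s_i \le r_m^{d} = r_m^{m-j+1}\cdot r_m^{d-m+j-1} \le |\det A_{\Bcal}| \cdot r_m^{\Ocal(\log m)}.
\]
The first factor is bounded by $\prod_{k \ge j} r_k \le |\det A_{\Bcal}|$. The excess factor $r_m^{\Ocal(\log m)}$ is the obstacle: it is polynomial in $m$ only if $r_m$ is bounded by a constant. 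Otherwise the bound reads $|K| \le |\det A_{\Bcal}|\cdot r_m^{\Ocal(\log m)}$, which is quasi-polynomial in general.

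To close this gap, I would try to absorb the excess into $|G|$ instead of bounding $\prod_i s_i$ crudely. The main step is to show that $|G|$ is large. Since $[A_{\Bcal} \mid A_{\Ncal}]$ has full row rank $m$ and can be assumed to generate a lattice of full rank, the image $G$ of $\Ambf$ in $\Rcal$ should have only $\Ocal(\log m)$ "missing" directions. I would then argue $|G| \ge r_m^{d-m}\cdot(\text{poly factor})^{-1}$, perhaps via the SNF of $B\Ambf$ and the formula $|G| = \prod_i r_m/\gcd(r_m,t_i)$ from Section~\ref{s:compare}. Combining this with $|K| = r_m^d/|G|$ would give $|K| \le r_m^{m}\cdot \poly \approx |\det A_{\Bcal}|\cdot\poly$, using $r_m^m \le r_m^{j}\prod_{k}r_k$. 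I expect this lower bound on $|G|$, and the handling of the residual $r_m^{j}$ factor (which likely requires $r_m = \poly(m)$, or counting the first $j$ invariant factors carefully), to be the main obstacle. If it fails in general, I would state the result with the implicit assumption that $r_m$ is polynomially bounded, so that $r_m^{\Ocal(\log m)}$ is interpreted as the claimed polynomial factor.
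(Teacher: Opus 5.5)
Your first display is exactly the paper's argument. The paper takes $Z=\mathbb{Z}_{r_m}^d$, bounds $|K|\le r_m^d$, uses $|\det A_{\Bcal}|\ge r_m^{m-j+1}$, and simply asserts that $r_m^d$ is ``of the same order.'' You are right that the leftover factor $r_m^{\,d-m+j-1}=r_m^{\mathcal{O}(\log m)}$ is polynomial only when $r_m=\mathcal{O}(1)$. That is precisely the step the paper's proof glosses over. Its argument is valid only in the regime $r_m=\mathcal{O}(1)$, which the paper derives from hardness assumptions after Theorem~\ref{thm:main-speedup} but does not include in this statement.

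The gap in your proposal is the attempted repair via a lower bound on $|G|$, which cannot hold in general. Full row rank of $A$ is supplied entirely by $A_{\Bcal}$ and constrains $\Ambf$ not at all, so $G$ can be tiny. Here is a concrete counterexample:
\begin{itemize}
\item Take $A=[\,rI_m\mid \mathbf{e}_1\cdots\mathbf{e}_1\,]$ with $d=m+\lceil\log_2 m\rceil$ copies of $\mathbf{e}_1$, $b=r\mathbf{1}_m$, $c_{\Bcal}=\mathbf{0}$ and $c_{\Ncal}=\mathbf{1}$.
\item Then $\Bcal$ is an optimal LP basis, every $r_k=r$ (so $j=1$), and $n=2m+\lceil\log_2 m\rceil$.
\item The range is $G\cong\mathbb{Z}_r$, and $K=\{x\in\mathbb{Z}_r^d:\sum_i x_i\equiv 0 \pmod r\}$ has $|K|=r^{d-1}$, while $|\det A_{\Bcal}|=r^m$.
\item The ratio $r^{\lceil\log_2 m\rceil-1}$ is superpolynomial once $r$ grows with $m$ (e.g.\ $r=m$).
\end{itemize}
The compressed domain of Appendix~\ref{app:kernel_compression} does not help, since every $s_i=r$. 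So within the null-space formulations of Section~\ref{subsec:nullspaceform} or the appendix, the statement as written is false without an extra hypothesis.

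Your fallback hypothesis is also too weak. If $r_m$ is merely polynomially bounded, then $r_m^{\Theta(\log m)}=m^{\Theta(\log r_m)}$ is quasi-polynomial, not polynomial. The same applies to the residual $r_m^{j-1}$ in your $|G|$-based route. If $r_m$ is exponential in $m$, even a single surplus factor $r_m$ breaks polynomiality.

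The hypothesis actually needed is $r_m=\mathcal{O}(1)$, or more precisely $r_m^{\,d-m+j-1}=\poly(m,n)$. Once you add it, your first display finishes the proof immediately, with no need for $Z'$ or any bound on $|G|$.
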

\begin{proof}
    We follow the procedure outlined in Section \ref{subsec:nullspaceform} with the domain $\mathbb{Z}_{r_m}^{d}$. Without loss of generality, we take $d = n-m$ here. By construction, $n -m \leq m + \mathcal{O}(\log(m))$. Note trivially $\lvert K \rvert \leq \lvert \mathbb{Z}_{r_{m}}^d \rvert$. Also $\lvert\det A_{\Bcal}\rvert = \prod_{i=1}^{m} r_i = \mathcal{O}(\textup{poly}(m)r_m^{m-j})$, which one can easily see is the same order of $\lvert \mathbb{Z}_{r_m}^{d} \rvert = r_m^d$.
\end{proof}
\noindent The main point is that $\lvert K\rvert$ tends to not be much larger than $|\det A_{\Bcal}|$ for problems with around twice as many variables as constraints and with an $A_{\Bcal}$ that has the majority of its invariant factors being equal.

\section{Quantum short path algorithm for the group relaxation}\label{s:quantum}
In this section we apply the quantum short path algorithm to the group relaxation. We provide two distinct instantiations of the framework. The first relies on the product-of-cycles walk (defined in Section \ref{subsec:local_search_nullspace}) and uses Theorem~\ref{thm:gen_short_path_runtime_ls}, while the second constructs an expander from the product-of-cycles walk and applies Theorem~\ref{thm:gen_short_path_runtime_poincare}. Before devling into these algorithms, we show that the quantum input assumption (Assumption~\ref{asm:input-assumptions}) is satisfied for both walks. 

\subsection{Quantum state preparation and block encoding}
\label{subsec:state_prep_and_block_encoding}
We recall below the input requirements of the SP algorithm.

\paragraph{Initial State Preparation} A critical prerequisite for the SP framework is the efficient preparation of $|\sqrt{\pi}\rangle$.  For loading the initial state for either the product of cycles walk or the expander walk, we can utilize the following procedure outline in Algorithm~\ref{alg:initial_state_prep}.
\begin{algorithm}  
   \caption{\textsc{State Preparation For Uniform Distribution over $K$}}   \label{alg:initial_state_prep}
    \textbf{Input:} Generating set $\Scal$\\
    \textbf{Output:} A quantum state $\ket{\sqrt{\pi}}$ encoding a uniform distribution over $K$
    \begin{algorithmic}[1]
    \State Construct a matrix $\Bmbf  \in \Zmbb ^{d \times k}$ using elements of $\Scal$ as  columns.
    \State Determine $u_j = \lvert \langle h_j \rangle\rvert$ for $h_j \in \Scal$
    \State Load superposition over $\bigoplus_i^k \Zmbb _{u_i}$
    \State Apply a unitary $\mathcal{U}_{\Bmbf }$ that performs $|\alpha\rangle|0\rangle \mapsto |\alpha\rangle|\Bmbf \alpha\rangle$
    \State Apply unitary  $\mathcal{U}_{\Bmbf _{\text{inv}}}$ that performs $|\alpha\rangle|\gamma\rangle |0\rangle \mapsto |\alpha\rangle|\gamma\rangle |\Bmbf ^{-1}\gamma\rangle$ 
    \State \textsf{CNOT} third (control) and first (target) registers, apply $\mathcal{U}_{\Bmbf _{\text{inv}}}^{\dagger}$ 
    \State Trace out first and third registers, i.e. the first and third registers will be definitely all zeroes (unentangled), hence we can ignore them.
    \end{algorithmic}
\end{algorithm} 
Step 3 can be carried out using $\mathcal{O}\left(k\max_{u_j \in \{ u_i : h_i \in \Scal\}}u_j\right)$ %\ba{What is $\Bcal_K$? Seems to conflate with the basis notation}\dyh{it's a typo, was some old notation.}
gates via \cite{shukla2023efficient,grover2002creating}. Steps 4 and 5 can be performed by reversibly encoding $\mathcal{O}(\text{poly}(n))$-sized classical circuits for applying $\Bmbf $ and $\Bmbf ^{-1}$. The rest of the operations are either classical polynomial time or require $\mathcal{O}(\poly(n))$ quantum gates.

\paragraph{Block-encodings of Markov Chain and Cost Function}
Furthermore, we require efficient block-encodings of both the Markov chain discriminant matrix $D(P)$ and the cost Hamiltonian $H$. We say that unitary $\mathcal{U}(M)$ is  $(\kappa, a)$-block encoding  of an $n$-qubit operator $M$ of
\begin{align*}
    (\langle 0|^{\otimes a} \otimes I)\mathcal{U}(M)(|0\rangle^{\otimes a} \otimes I) = M/\kappa.
\end{align*}
Specifically, for SP in general, we want to block-encode $H$ defined as
$$H \ket{\xmbf} =  f(\xmbf)\ket{\xmbf},$$ for the problem cost function $f$ \eqref{eq:transformed_cost_no_basis},
and  $D(P)$ defined by
$$D(P)_{\xmbf, \xmbf^{\prime}} = \sqrt{P(\xmbf,\xmbf^{\prime}) P(\xmbf^{\prime},\xmbf)},$$
when the chain $P$ is reversible.
% we assume the existence of quantum circuits implementing $(\kappa, a)$-block-encodings of these operators with $\kappa = \mathrm{poly}(n, m, \log p)$ and $a = O(1)$, as detailed in Appendix~D of~\cite{chakrabarti2024generalized}.\

%\paragraph{Cycle Product Walk}

Since Cayley walks form regular graphs, $P$ is symmetric and $D(P) = P$. Hence, we only need to block-encode $P$ (more specifically, its lazy variant). Note that there is an efficient classical circuit that computes $P(\xmbf,\xmbf^{\prime})$ for any $\xmbf, \xmbf^{\prime} \in K$, and hence we can efficiently construct  unitary operations that perform:
\begin{align*}
|\xmbf\rangle|j\rangle|0\rangle &\mapsto |\xmbf\rangle|j\rangle|r_{\xmbf j}\rangle\\
|\xmbf\rangle|j\rangle|0\rangle &\mapsto |\xmbf\rangle|j\rangle|c_{\xmbf j}\rangle\\
|\xmbf\rangle|\xmbf^{\prime}\rangle|0\rangle &\mapsto |\xmbf\rangle|\xmbf^{\prime}\rangle|P(\xmbf,\xmbf^{\prime})\rangle,
\end{align*}
where $r_{\xmbf j} \in K$ is such that $P(\xmbf, r_{\xmbf j})$ is the $j$-th nonzero entry of the row $P(\xmbf, \cdot)$, and 
$c_{\xmbf j} \in K$ is  such that $P(c_{\xmbf j}, \xmbf)$ is the $j$-th nonzero entry of the column $P(\cdot, \xmbf)$. Additionally, $P$ is $\lvert \Scal\rvert = \Theta(\log\lvert K \rvert) = \mathcal{O}(d\log r_m)$ row and column-sparse matrix. Hence, by \cite[Lemma 48]{gilyen2019quantum} we can prepare a $(\Theta(d\log r_m, \mathcal{O}(\text{poly}(n)))$-block encoding to $P$.
 
The cost function in this case is $f$ from Equation \eqref{eq:transformed_cost_no_basis}, which we can evaluate reversibly in quantum polynomial time. Specifically, the following is a $(\kappa, 1)$-block-encoding for $H$:
\begin{align*}
&\mathcal{U}(H): |\xmbf \rangle|0\rangle \mapsto \frac{f(\xmbf )}{\kappa} |\xmbf\rangle|0\rangle  + \sqrt{1- \frac{f(\xmbf )^2}{\kappa^2}}|\xmbf \rangle|1\rangle,
\end{align*}
where $\kappa$ ensures $\lvert f(\xmbf )/\kappa\rvert \leq 1$ and $\xmbf \in K$ (for example, $\kappa = \lVert \mathbf{c}_{1}\rVert \cdot r_m$ suffices.) This operation involves reversibly implementing $f$ and $\mathcal{O}(\log \kappa )$ additional gates to rotate $f(\xmbf )$ onto an amplitude. 

We already have the ingredients for a quadratic quantum speedup. As discussed in Theorem~\ref{thm:mcs_complexity}, classical Markov Chain search, with a fast mixing chain, requires time
$$\mathcal{O}((|K|/|K^\star|)\cdot \poly (n, L))$$
to find an optimal solution. Using Algorithm~\ref{alg:initial_state_prep}, we can efficiently prepare a quantum state encoding $\sqrt{\pi}$ for the Cayley walk. As a consequence, we trivially obtain a quadratic speedup for over MCS for this problem, via generalized quantum minimum finding \cite[Theorem 49]{van2020quantum}. We highlight this fact in a theorem:

\begin{theorem}[Quadratic Speedup for group relaxations] There is a quantum algorithm that, with high probability, finds an optimal solution to the group relaxation with runtime at most 
$$
    \mathcal{O}\left(\sqrt{|K|/ \lvert K^{\star}\rvert} \cdot \poly (n, L) \right).
    $$
\end{theorem}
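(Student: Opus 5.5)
The plan is to reduce the statement to generalized quantum minimum finding \cite[Theorem 49]{van2020quantum}. That result is a Dürr--Høyer style procedure that takes a state-preparation unitary $U$ producing $\sum_{\xmbf} \sqrt{p(\xmbf)}\ket{\xmbf}$ and a comparison oracle for $f$. It returns a minimizer with high probability using $\mathcal{O}\bigl(1/\sqrt{p^\star}\bigr)$ applications of $U$ and the oracle, where $p^\star$ is the total weight that $p$ places on minimizers. We will take $p = \pi$, the uniform distribution on the feasible coset, so that $p^\star = |K^\star|/|K|$. Here $K^\star \subseteq K$ denotes the set of kernel elements $\xmbf$ attaining $\min_{\xmbf\in K} f(\xmbf) = \OPT_\Bcal$.

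The first step is all classical preprocessing, with total cost $\poly(n,L)$. We solve the LP relaxation to get an optimal basis $\Bcal$ in time $\Ocal(n^3L)$ and compute the Smith normal form of $A_\Bcal$ to obtain $(\Ambf,\bmbf,\cmbf,\{r_j\})$. We then obtain a feasible residue $\hat{\xmbf}$ via the extended Euclidean step of Section~\ref{sec:computing_the_nullspace}. If condition \eqref{eq:diagonalized_modulo} fails, the group relaxation is infeasible, and we report this. Finally, we run \textsc{NullGenFinding}, which by Lemma~\ref{lem:kernel_gen_finding} outputs cyclic generators $\Scal$ of $K$.

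The second step verifies the quantum inputs. Algorithm~\ref{alg:initial_state_prep} prepares $\frac{1}{\sqrt{|K|}}\sum_{\xmbf\in K}\ket{\xmbf}$ using $\poly(n,L)$ gates. The isomorphism \eqref{eqn:hcal_isomorph} guarantees that the map $\alpha\mapsto \Bmbf\alpha$ is a bijection onto $K$, so the result is exactly uniform. One point to check is that Step 3 costs $\mathcal{O}(k\max_j u_j)$ gates as written. I would instead invoke a standard $\mathcal{O}(\polylog u_j)$ uniform-superposition preparation over $\Zmbb_{u_j}$, for example via exact amplitude amplification, to keep the cost polynomial in $L$. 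The comparison oracle $\ket{\xmbf}\ket{y}\ket{0}\mapsto\ket{\xmbf}\ket{y}\ket{[f(\xmbf)<f(y)]}$ is a reversible implementation of the classical evaluation of $f$ in \eqref{eq:transformed_cost_no_basis}. That evaluation is one modular addition plus an inner product, so it also costs $\poly(n,L)$ gates.

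The third step applies \cite[Theorem 49]{van2020quantum}. It gives $\mathcal{O}(\sqrt{|K|/|K^\star|})$ calls to these unitaries, each of cost $\poly(n,L)$, which is the stated runtime. The output $\xmbf^\star\in K$ then determines the group-relaxation solution $\hat{\xmbf}+\xmbf^\star \pmod{r_m\Z{d}}$. From this we recover $x_\Bcal = A_\Bcal^{-1}(b - A_\Ncal x_\Ncal)$ after undoing the column dropping and the unimodular change of variables of Section~\ref{ss:primal_group_problem}. Optimality for \eqref{e:group_problem_recall} follows from the reductions already established there. Those reductions preserve optimality: reducing modulo $r_m$ never increases the objective since $\cmbf\ge 0$, and the dropped columns are set to zero.

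The main obstacle is ensuring that the costs hidden in $\poly(n,L)$ really are polynomial in the input size. In particular, the state preparation and the bit-lengths of $r_m$ and of $f$ must be polylogarithmic in $|K|$ rather than polynomial in it. The remainder of the argument is a direct black-box invocation.
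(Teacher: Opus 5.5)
Your proposal is correct and follows the paper's argument: prepare the exactly uniform superposition over $K$ with Algorithm~\ref{alg:initial_state_prep}, then invoke generalized quantum minimum finding \cite[Theorem 49]{van2020quantum} with success weight $|K^\star|/|K|$. You also point out that the stated $\mathcal{O}(k\max_j u_j)$ gate count for Step 3 should be replaced by a uniform-superposition preparation polylogarithmic in each $u_j$, so that each call genuinely costs $\poly(n,L)$; this is a worthwhile tightening that the paper's one-line justification glosses over.
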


Utilizing the generalized SP algorithm discussed in Section \ref{sss:runtime_short_path}, we can construct a quantum algorithm that improves upon the one above in the following sense. Our quantum algorithm achieves a runtime of $ \mathcal{O}((|K|/ \lvert K^{\star}\rvert)^{\frac{1}{2} - \alpha} \cdot \poly (n, L))$
for some $\alpha > 0$ determined by spectral properties of the walk over $K$, stability of the cost function (Definition \ref{def:delta-stability}), and spectral density (Definition \ref{defn:gamma-spectral}) parameters. While $\alpha$ may not be a dimension-independent constant in general, we do provide sufficient conditions for ensuring it is.

\subsection{Product of cycles walk approach}\label{ss:prod_cycles}
In this section we analyze the cost of the short path algorithm for solving the group relaxation when the underlying walk is the product-of-cycles walk described in Section~\ref{subsec:local_search_nullspace}.
% Our analysis requires the following intermediate result, which bounds the pseudo Lipschitz norm of our cost function under the product-of-cycles walk in terms of cyclic metrics. This bound serves as a proxy for $\Delta_P$-stability (see, Definition~\ref{def:delta-stability}).
% \begin{lemma}[Objective change under cyclic metrics]\label{lem:tight-step-weighted}
% Let $\xmbf\mapsto \xmbf^{\prime} = \xmbf + a h_j \bmod r_m \Z{d}$ be a Cayley step with $a\in\{\pm 1\}$ and $h_j\in\Scal$. Then
% $$
% \big|f(\xmbf^{\prime}) - f(\xmbf)\big|
% \;\le\;
% \cycnormonew{h_j}{\mathbf{c}},
% $$
% and, hence,
% $$
% \big(f(\xmbf^{\prime}) - f(\xmbf)\big)^2
% \;\le\;
% \cycnormonew{h_j}{\mathbf{c}}^{\;2}.
% $$
% \end{lemma}
% \begin{proof}
% The balanced difference obeys $|\xmbf_i'-\xmbf_i| \le \min\{|a h_{ji}|,\,r_m-|a h_{ji}|\}$ for each $i \in [d]$, where $h_{ji}$ is the $i$th coordinate of $h_j$ in balanced residues. The first inequality follows from
% $$
%     |f(\xmbf^{\prime})-f(\xmbf)| = \left|\sum_{i \in [d]} \mathbf{c}_i\,(\xmbf^{\prime}_i-\xmbf_i)\right|
%     \le \sum_{i \in [d]} |\mathbf{c}_i|\,|\xmbf^{\prime}_i-\xmbf_i|
%     \le \sum_{i \in [d]} |\mathbf{c}_i|\,\min\{|h_{ji}|,\,r_m-|h_{ji}|\}.
% $$
% This completes the proof up to trivial computation. Since $|a| = 1$, the “$\le$” to the unscaled $h_j$ holds in both cases. 
% \end{proof}
% We are now in a position to state the running time of the short path algorithm applied to group relaxations using a product-of-cycles walk. 
The runtime can be bounded as follows.
\begin{theorem}[Runtime for solving the group relaxation]
\label{thm:main-speedup}
With high probability, the generalized quantum SP algorithm finds the optimal solution to the group relaxation \eqref{e:GILO-final} with runtime at most
$$
\mathcal{O}\!\left(\left(\frac{|K|}{|K^\star|}\right)^{\frac{1}{2}-\alpha}\cdot \mathrm{poly}(n,L)\right).
$$
If the set $\Scal = \{h_1, \dots, h_{k}\}$ outputted by \textsc{NullGenFinding} (Algorithm \ref{alg:nullspacereform})  satisfies:
\begin{subequations}
    \begin{align}
\label{eqn:condition1}
    \max_{j\in[k]}\,\cycnormonew{h_j}{\mathbf{c}}
    &=
    \Theta\!\left(
    \frac{|{\OPT}_\Bcal|}
    {\log\left(|K|/\lvert K^{\star}\rvert\right)}
    \right),\\
    \label{eqn:condition2}
    \max_{j \in [k]}\lvert \langle h_j \rangle\rvert^2 &= \Theta\left(\frac{\log\left(|K|/\lvert K^{\star}\rvert\right)}{k}\right),
\end{align}
\end{subequations}
then $\alpha$ is a nonzero constant.
\end{theorem}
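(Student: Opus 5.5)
We plan to instantiate Theorem~\ref{thm:gen_short_path_runtime_ls} with $\Xcal = K$ (equivalently, the feasible coset $\hat{\xmbf}+K$), $P$ the lazy product-of-cycles Cayley walk generated by $\Scal$, and $H$ the diagonal operator encoding $f$ from \eqref{eq:transformed_cost_no_basis}, shifted by a constant so that $E^\star<0$ with $|E^\star| = \Theta(|\OPT_\Bcal|)$. Since the walk is a symmetric Cayley walk, $\pi$ is uniform on $K$ and $\pi(E^\star) = |K^\star|/|K|$. The term $[\pi(E^\star)^{-1}]^{1/2-\alpha}$ in the runtime therefore becomes $(|K|/|K^\star|)^{1/2-\alpha}$. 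Assumption~\ref{asm:input-assumptions} holds by Section~\ref{subsec:state_prep_and_block_encoding}: Algorithm~\ref{alg:initial_state_prep} prepares $\ket{\sqrt\pi}$, the sparse-access construction gives a block-encoding of $D(P)=P$, and the reversible evaluation of $f$ gives a block-encoding of $H$. For the walk parameter, the paper's discussion of \cite{diaconis1996logarithmic} gives the log-Sobolev constant $\omega = \Omega(1/(k\max_j u_j^2))$ with $u_j = |\langle h_j\rangle|$. Laziness changes this only by a constant factor.

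The main step is to bound the stability parameter. A single move is $\xmbf\mapsto \xmbf \pm h_j \bmod r_m$, and the objective is $\mathbf{c}^\top(\hat{\xmbf}+\xmbf \bmod r_m)$. We will bound the change coordinatewise: the reduced representative of coordinate $i$ moves by at most $\max\{h_{ji}, r_m-h_{ji}\}$, so $|f(\xmbf')-f(\xmbf)| \le \cycnormonew{h_j}{\mathbf{c}}$. It follows that $\|H\|_P \le \max_j \cycnormonew{h_j}{\mathbf{c}}^2$, and by \cite[Lemma 4.26]{chakrabarti2024generalized} we get $\Delta_P \le \sqrt{\|H\|_P} \le \max_j\cycnormonew{h_j}{\mathbf{c}}$. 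In the general case (no conditions imposed) every quantity is finite and positive. The exponent $\alpha = \eta(1-\eta)|E^\star|\mu/(2\Delta_P\ln(1/\pi(E^\star)))$ is then strictly positive for any admissible $\mu\in(0,\mu^\star)$. The remaining factors $\omega^{-1}\le k\max_j u_j^2$ and $\poly(n)$ are absorbed into $\poly(n,L)$, after we argue that $k$, $\log u_j$ and the chosen parameters are polynomial in the input length. This gives the first claim.

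For the constant-$\alpha$ claim we invoke Corollary~\ref{cor:super_quad_runtime_short_path}. Condition \eqref{eqn:condition1} together with the bound on $\Delta_P$ gives $|E^\star|/\Delta_P = \Omega(\log(|K|/|K^\star|))$; the matching upper bound follows by using the condition as an equality up to constants. Condition \eqref{eqn:condition2} gives $\omega^{-1} = \mathcal{O}(\log(|K|/|K^\star|))$, i.e.\ $\omega\ln(1/\pi(E^\star)) = \Omega(1)$. In the formula for $\gamma$ in Theorem~\ref{thm:gen_short_path_runtime_ls}, the quantity $-(1-\eta)E^\star - \mathbb{E}_\pi[H]$ is of order $|E^\star|$ after fixing the shift, and $\|H\|_P = \mathcal{O}(|E^\star|^2/\ln^2(1/\pi(E^\star)))$. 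Together these make $\gamma = \Omega(1)$, so $\mu^\star = \tfrac23\gamma\omega\ln(1/\pi(E^\star)) = \Omega(1)$ and we may choose $\mu = \Theta(1)$. With these choices Corollary~\ref{cor:super_quad_runtime_short_path} gives $\alpha = \Theta(1)$. Condition \eqref{eqn:ell_condition_ls} will be checked as in \cite[Remark 4.4]{chakrabarti2024generalized}: its first term is exponentially large, while the subtracted terms are $\mathcal{O}(\log|K|/\omega)$ and $\mathcal{O}(\log^2(|K|/|K^\star|))$.

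I expect the main obstacle to be the $\gamma$ computation, specifically controlling $-(1-\eta)E^\star - \mathbb{E}_\pi[H]$. Because $f\ge \OPT_{\rm LP}$ after the shift, the mean of $H$ under uniform $\pi$ must sit a constant fraction above $(1-\eta)E^\star$. I would first try to choose the additive shift, and $\eta$ as a constant, to force this. If that fails, I would fall back on stating the spectral-density parameter as an additional mild hypothesis, in the same way the paper treats $\gamma$ as reasonable for hard instances.
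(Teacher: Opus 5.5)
Your outline follows the paper's proof step for step:
\begin{itemize}
\item the same instantiation of Theorem~\ref{thm:gen_short_path_runtime_ls} with the product-of-cycles Cayley walk and uniform $\pi$;
\item the same one-step bound giving $\Delta_P\le\sqrt{\|H\|_P}\le\max_j\cycnormonew{h_j}{\mathbf{c}}$;
\item the same use of \eqref{eqn:condition2} to get $\omega^{-1}=O(\log(|K|/|K^\star|))$ (you correctly use $\max_j u_j$ where the paper writes $\min_j u_j$);
\item the same $\gamma$, $\mu^\star$ and \eqref{eqn:ell_condition_ls} bookkeeping feeding Corollary~\ref{cor:super_quad_runtime_short_path}.
\end{itemize}
The genuine gap is the step you flag yourself. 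Nothing in the proposal shows that the numerator of $\gamma$ is of order $|E^\star|$, and retreating to an extra spectral-density hypothesis proves a weaker statement than the one posed. The paper handles this step as follows. It replaces $\mathbb{E}_\pi[H]$ with the truncated mean $\mathbb{E}_\pi[\widetilde{\theta}_\eta(\widetilde{f})]$ of the shifted cost $\widetilde{f}=f-C$. It asserts $E^\star\le\mathbb{E}_\pi[\widetilde{\theta}_\eta(\widetilde{f})]\le(1-\eta)E^\star$. It then concludes that $-(1-\eta)E^\star-\mathbb{E}_\pi[\widetilde{\theta}_\eta(\widetilde{f})]$, in the form printed in Theorem~\ref{thm:gen_short_path_runtime_ls}, is $\Theta(|\OPT_\Bcal|)$ for a shift $C=\Theta(|\OPT_\Bcal|)$.

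Do not simply import that move. Your own reading of what is needed is the right one, and under the stated hypotheses it cannot be met.
\begin{itemize}
\item To get $\gamma=\Omega(1)$ from a log-Sobolev (Herbst) lower-tail bound on $\pi(H\le(1-\eta)E^\star)$, you need $\mathbb{E}_\pi[H]-(1-\eta)E^\star=\Omega(|E^\star|)$ with a positive sign. That is not the quantity $-(1-\eta)E^\star-\mathbb{E}_\pi[H]$ you wrote earlier, which enters the printed formula squared and carries no sign requirement.
\item With the cyclic metric as literally defined, every generator has $\cycnormonew{h_j}{\mathbf{c}}\ge\tfrac{r_m}{2}\|\mathbf{c}\|_1$, because the maximum is charged even on coordinates where $h_j$ vanishes.
\item Meanwhile $\mathbf{c}\ge 0$ and reduced coordinates in $[0,r_m)$ give $\max_K f-\min_K f\le(r_m-1)\|\mathbf{c}\|_1$. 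So \eqref{eqn:condition1} forces $\max_K f-\OPT_\Bcal=O\big(|\OPT_\Bcal|/\log(|K|/|K^\star|)\big)$.
\item Since $H(\mathbf{x})\le(1-\eta)E^\star$ iff $f(\mathbf{x})-\OPT_\Bcal\le\eta|E^\star|$, any shift with $|E^\star|=\Theta(|\OPT_\Bcal|)$ and fixed $\eta$ puts every point of $K$ in this sublevel set once $|K|/|K^\star|$ is large. Then $\pi(E\le(1-\eta)E^\star)=1$ and no $\gamma>0$ exists.
\end{itemize}
No choice of shift rescues your plan. The paper's sandwich does not address this either, since it only controls the sign-free expression.

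A sound version needs an additional hypothesis. For instance, charge only $\supp(h_j)$ in the metric of \eqref{eqn:condition1}; this still bounds the one-step change of $f$. Then either assume $\mathbb{E}_\pi[f]-\OPT_\Bcal=\Omega(|\OPT_\Bcal|)$, or replace $|\OPT_\Bcal|$ in \eqref{eqn:condition1} by this mean gap. In both cases choose $C$ so that $\eta|E^\star|$ is half the gap. Then $\gamma=\Omega(\omega\log(|K|/|K^\star|))=\Omega(1)$ follows as you intended.

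Two smaller points.
\begin{itemize}
\item Do not route the unconditional runtime claim through Theorem~\ref{thm:gen_short_path_runtime_ls}. The factor $\omega^{-1}\le k\max_j u_j^2$ can be exponential in $L$: $u_j$ can be as large as $r_m$, and polynomial $\log u_j$ does not help. Condition \eqref{eqn:ell_condition_ls} is also unverified there. The quadratic-speedup theorem gives the bound with $\alpha=0$.
\item Condition \eqref{eqn:condition1} yields only a lower bound on $|E^\star|/\Delta_P$, not the matching upper bound you claim. The lower bound is all that is needed.
\end{itemize}
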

\begin{proof}
The quantum input assumption (see Assumption~\ref{asm:input-assumptions}) holds by results from Section~\ref{subsec:state_prep_and_block_encoding}. Recall from Eq. \eqref{eq:transformed_cost_no_basis} that $f: K \rightarrow \Rmbb_{\geq 0}$ is given by
\begin{align*}
    f(\xmbf) = \cmbf^\top (\hat{\xmbf} + \xmbf \pmod{r_m \Z{d}}) + c_\Bcal^{\top} A_\Bcal^{-1} b.
\end{align*}
Define a shifted cost $\widetilde{f}(\xmbf) \coloneqq f(\xmbf)-C$ for some $C=\Theta(\lvert \OPT_{\Bcal} \rvert)$ so that $E^\star=\min_{\xmbf \in K} \widetilde{f}(\xmbf)<0$.

By the definition of the $P$-pseudo-Lipschitz norm and $P$ being a Cayley walk:
\begin{align*}
\lVert \widetilde{f} \rVert_{P} &= \lVert f \rVert_{P} \\&= \frac{1}{3\lvert {\Scal} \rvert}\sum_{h_j \in {\Scal}}\sum_{a \in \{\pm 1, 0\}} (\langle \mathbf{c}, [\hat{\mathbf{x}} + \mathbf{x} + a\cdot h_j \pmod{r_m\mathbb{Z}^d}] - [\hat{\mathbf{x}} + \mathbf{x} \pmod{r_m\mathbb{Z}^d}]\rangle)^2\\
&=\frac{1}{3\lvert {\Scal} \rvert}\sum_{h_j \in {\Scal}}\sum_{a \in \{\pm 1, 0\}} \left(\sum_{i=1}^d c_i\left( [\hat{x}_{i} + x_i + a\cdot h_{ji} \pmod{r_m}] - [\hat{x}_i + {x}_i \pmod{r_m}]\right)\right)^2\\
&\leq \frac{1}{3\lvert {\Scal} \rvert}\sum_{h_j \in {\Scal}}\sum_{a \in \{\pm 1, 0\}} \left(\sum_{i=1}^d \lvert c_i\rvert \max\{ r_m - h_{ji}, h_{ji}\}\right)^2\\
&\leq \max_{h_j \in \Scal}  \cycnormonew{h_j}{\mathbf{c}} ^2.
\end{align*}
Also, as mentioned in Section~\ref{sss:runtime_short_path}, $\Delta_P \le \sqrt{\lVert \widetilde{f} \rVert_{P}}$.

From Theorem \ref{thm:gen_short_path_runtime_ls} and our assumptions, we take $\gamma$ for the spectral density condition to be
\begin{align}
\label{eqn:expression_for_gamma_proof}
    \gamma = \frac{\omega(-(1-\eta)[\OPT_{\Bcal} - C] - \mathbb{E}_{\pi}[\widetilde{\theta}_{\eta}(\widetilde{f})])^2}{(\max_{j\in[k]}\,\cycnormonew{h_j}{\mathbf{c}})^2\ln(\lvert K\rvert/\lvert K^{\star}\rvert))} = \Theta\left(\omega \log(\lvert K\rvert/\lvert K^{\star}\rvert)\right).
\end{align}
Note that in Theorem \ref{thm:gen_short_path_runtime_ls}, the expression for $\gamma$ contains $\mathbb{E}_{\pi}[H]$ in the numerator, where here $H = \widetilde{f}$. However, it is implicit in the analysis of \cite{chakrabarti2024generalized} that we can replace $\mathbb{E}_{\pi}[\widetilde{f}]$ with $\mathbb{E}_{\pi}[\widetilde{\theta}_{\eta}(\widetilde{f})]$. Note that by construction
\begin{align*}
    \OPT_{\Bcal} - C \leq \mathbb{E}_{\pi}[\widetilde{\theta}_{\eta}(\widetilde{f})] \leq (1-\eta) (\OPT_{\Bcal} - C).
\end{align*}
Since $\pi(E^{\star}) \neq 0$, we have that $\lvert -(1-\eta)(\OPT_{\Bcal} - C) - \mathbb{E}_{\pi}[\widetilde{\theta}_{\eta}(\widetilde{f})]\rvert$ is nonzero and $\in \Theta(\lvert  \OPT_{\Bcal} \rvert)$. Hence the expression in \eqref{eqn:expression_for_gamma_proof} follows.

We will want $\gamma = \Theta(1)$. However, note that we will also want
\begin{align*}
    \mu^{\star} = \gamma\omega\ln\left(\lvert K \rvert/ \lvert K^{\star}\rvert\right) = \Theta(\omega^2 \log^2\left(\lvert K \rvert/ \lvert K^{\star}\rvert\right))
\end{align*}
from Theorem \ref{thm:gen_short_path_runtime_ls} to be $\Theta(1)$. Hence both conditions require $\omega^{-1} = \Theta\left(\log\left(\lvert K \rvert/ \lvert K^{\star}\rvert\right)\right)$. By the discussion in Section \ref{subsec:local_search_nullspace}, \begin{align*}
    \omega =\Omega \left(\frac{1}{k \min_{j\in[k]} u_j^2 }\right)
\end{align*}
for the product-of-cycles walk, where $u_i \coloneqq \lvert \langle h_i \rangle\rvert$.
Hence our assumption implies both $\gamma$ and $\mu^{\star}$ are $\Theta(1)$.

We also need to check that condition \eqref{eqn:ell_condition_ls} is satisfied. By the above, it suffices for
\begin{align*}
    &\frac{1}{3\gamma \ln(1/\pi(E^{\star}))\sqrt{ \pi( E \leq (1-\eta)E^*)}} - \max\left(\frac{4\ln(\lvert K\rvert)}{\omega}, \frac{3|\OPT_{\Bcal} - C|^2(1-\eta)^2}{\Big(\max_{j\in[k]}\,\cycnormonew{h_j}{\mathbf{c}}\Big)^{\!2}}\right)\\
    &=\frac{\xi_1}{\ln(\vert K \rvert/\lvert K^{\star}\rvert) (\lvert K ^{\star}\rvert/\lvert K \rvert)^{\xi}} - \max\left(\xi_3, \xi_2\log^2(\lvert K \rvert/\lvert K^{\star}\rvert)\right)  \geq 2,
\end{align*}
for some constants $\xi, \xi_1, \xi_2, \xi_3 > 0$. This holds for sufficiently large problem size by our assumptions.
% \jw{This part of the argument feels slippery and hard to follow along. Suppose the constant in $\Theta(1)$ is 1000. What choices for $C, C'$ suffice? And how does this connect with \eqref{eqn:ell_condition_ls}? Separately, we should not use $C$ twice.}\dyh{Sorry there was a typo in the denominator. Now it should be clear why the constants don't matter? It's just replacing the variables in eqn 6 with their actual values.}\dyh{Ok added a few more details, since I realized I did this for the proof in  the appendix. lmk if this is fine.}\jw{I will check a little later then delete these comments, thanks.}

The runtime of the SP algorithm is
\begin{align*}
    \mathcal{O} \left(\textup{poly}(n)\omega^{-1}[\pi(E^{\star})^{-1}]^{\left(\frac{1}{2}-\frac{\eta(1-\eta)\lvert E^{\star}\rvert \mu}{2\ln(1/\pi(E^{\star}))\Delta_{P}}\right)}\right)
\end{align*}
with $\mu < \mu^{\star}$ then follows from Theorem \ref{thm:gen_short_path_runtime_ls}.

Fixing $\eta \in (0, 1)$ and using that $\mu^{\star}=\Theta(1)$, the super-quadratic condition becomes
$$
\frac{|E^\star|}{\Delta_P} \geq \frac{|E^\star|}{\max_{j\in[k]}\,\cycnormonew{h_j}{\mathbf{c}}}
\;=\;
\Theta\!\left(\frac{|\OPT_{\Bcal}|}{\max_{j\in[k]}\,\cycnormonew{h_j}{\mathbf{c}}}\right)
\;=\;
\Theta\!\left(\log (|K|/|K^\star|)\right),
$$ 
which holds by our assumptions.
Note that $\lvert \OPT_\Bcal - C\rvert =\Theta(|\OPT_{\Bcal}|)$.
Thus we have that $\alpha>0$ (cf.\ Corollary~\ref{cor:super_quad_runtime_short_path}) is  dimension-independent constant.  
This completes the proof.
% Similar to Theorem \ref{thm:main-speedup}, our assumptions and Corollary \ref{cor:super_quad_runtime_short_path} also imply that $\alpha > 0$ is a dimension interdependent constant.

\end{proof}

Here, we seek to make the super‑quadratic condition ($\alpha > 0$ constant) in Theorem~\ref{thm:main-speedup} more transparent. For the condition in \eqref{eqn:condition1}, we have the following bound on generator-weight norm:
  $$
     \max_{j\in[k]}\,\cycnormonew{h_j}{\mathbf{c}} \leq r_m \lVert \mathbf{c}\rVert_{\infty} \max_{j\in[k]}\lVert h_j\rVert_{\ell_0}.
  $$
Suppose that the generators are sparse, in the sense that $\max_{j\in[k]}\lVert h_j \rVert_{\ell_0} = \mathcal{O}(1)$.  This reduces the first condition to
\begin{align*}
    \frac{\OPT_\Bcal}{\lVert \mathbf{c} \rVert_{\infty}} =\Theta\!\big(r_m\log\left(\lvert K \rvert/\lvert K^{\star}\rvert\right)\big).
\end{align*}

Suppose   $\lvert Z \rvert = r_m^d = \Theta(\exp(n))$. This condition boils down to assuming the problem is hard, taking exponential time, but not super exponential time.  For hardness, we are assuming $m = \Theta(n)$. Thus these assumptions  yield $\max_{j \in [k]}\lvert \langle h_j \rangle\rvert^2  = \mathcal{O}(1)$. To see this, note by definition of $Z$, $\lvert Z \rvert = \Theta(\exp(n))$, and $n = \Theta(m)$ imply that $r_m = \lVert R \rVert_{\infty} = \mathcal{O}(1).$ However, $\lvert \langle h_j \rangle \rvert \leq r_m$.

Suppose also that reducing the problem to a search over $\lvert K \rvert$ does not result in subexponential runtime, i.e. $k=\lvert \Scal \rvert = \Theta(n)$. Thus, $\lvert K \rvert = \prod_{h_j \in \Scal} \lvert \langle h_j \rangle \rvert  = \Theta(\exp(n))$.  Thus, the second condition \eqref{eqn:condition2} is satisfied.

Thus we end up with the following more intuitive conditions for hard, yet not super-exponentially hard, problems. To summarize, $\alpha > 0$ is a nonzero constant if 
  \begin{enumerate}
      \item $n = \Theta(m)$;
      \item  the domain of the finite group relaxation $Z$, satisfies $\lvert Z\rvert = \Theta(\exp(n))$;
      \item $\frac{\OPT_\Bcal}{\lVert \mathbf{c}\rVert_{\infty}} = \Theta\!\big(n) $;
      \item restricting to the feasible set $\lvert K \rvert$ does not make the runtime subexponential;
      \item and the generating set $\Scal$ obtained from Algorithm \ref{alg:nullspacereform} is sparse : $\forall h_j \in \Scal, 
      \lVert h_j \rVert_{\ell_0} = \mathcal{O}(1)$.
  \end{enumerate}
One will note that most of these conditions are light, except for the condition on the sparsity of the generators.

The following result shows that there does exist a family of ILPs whose group relaxations satisfy the conditions for super-quadratic speedups using the product-cycle walk. The following ILPs turn out to have a specific structure that enables them to be solved easily. However, the goal is simply to use this construction to show that our conditions are not vacuous.
%\dyh{So these problems turn out to be easy once the structure is discovered (which is easy to discover). I think it's fine though and achieves the goal stated.}
\begin{theorem}
There exists a family of integer linear programming problems whose group relaxations satisfy conditions \eqref{eqn:condition1} and \eqref{eqn:condition2} in Theorem \ref{thm:main-speedup}. %Specifically, there is quantum algorithm whose worst-case asymptotic runtime over this family is $\mathcal{O}\left(\vert G\rvert^{\frac{1}{2} - \alpha}\cdot \poly(n, L)\right)$, where $\alpha > 0$ is a constant. Recall Gomory's shortest-path approach runs in $\mathcal{O}(\lvert G \rvert \cdot \poly(n, L))$.%\dyh{Might be able to just say super-quadratically faster than gomory's shortest path. Or maybe this is fine.}
\end{theorem}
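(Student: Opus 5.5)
We construct the family explicitly, reusing the template of Theorem~\ref{thm:family_construction}, but now arranging for the kernel generators to be sparse and of constant order. Fix a constant prime $p$ (say $p=2$). Set $n=2m$ and $A_\Bcal = pI_m$, so $R=pI_m$, $r_m=p$ and $U=V=I$. Then $A_\Bcal^{-1}=\tfrac1p I_m$ is entrywise nonnegative. Choose $A_\Ncal$ to be a block-diagonal integer matrix made of $m/2$ constant-size blocks (e.g.\ $2\times 2$), each of full rank with $p \mid \det$ of the block exactly once. For instance, take blocks $\begin{pmatrix}1&1\\1&1+p\end{pmatrix}$ with appropriate nonnegative entries. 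Because the blocks act on disjoint coordinates, $\Ambf = A_\Ncal$ (no column lies in $p\Z{m}$), and the kernel $K\le \Zmbb_p^{m}$ splits as a direct sum of the kernels of the individual blocks mod $p$.

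First, we compute $K$ via Algorithm~\ref{alg:nullspacereform}. The SNF of a block-diagonal matrix can be taken blockwise, with $U,V$ block-diagonal. Each block then contributes exactly one invariant factor $t_i\equiv 0 \pmod p$ and one unit. By \eqref{eq:K_isomorphism}, $K\cong \Zmbb_p^{m/2}$, so $k=m/2=\Theta(n)$. Moreover each generator $h_i$ is supported on the two coordinates of its block, giving $\lVert h_i\rVert_{\ell_0}\le 2$ and $|\langle h_i\rangle| = p$. Hence the left side of \eqref{eqn:condition2} is $p^2=\Theta(1)$, and $\cycnormonew{h_j}{\mathbf c}\le 2p\lVert\mathbf c\rVert_\infty = \Theta(1)$ provided $\mathbf c$ has constant entries.

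Second, we choose $b$ and $c$. We pick $b$ with $p\mid$ suitable entries, large and nonnegative, so the group relaxation and the ILP are feasible, arguing as in Theorem~\ref{thm:family_construction}. We choose $c_\Bcal$ and $c_\Ncal$ so that the reduced costs $\bar c_\Ncal=\mathbf c$ are positive constants, which plants $\Bcal$ as an optimal basis. The remaining quantities are $|K|/|K^\star|$ and $|\OPT_\Bcal|$. We arrange the right-hand side so that within each block the feasible coset has $p$ elements, with a unique minimizer per block. Then $|K^\star|=1$ and $\log(|K|/|K^\star|)=\tfrac m2\log p=\Theta(n)$. Condition \eqref{eqn:condition2} reads $\Theta(1)=\Theta(n/k)=\Theta(1)$, which holds. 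Condition \eqref{eqn:condition1} requires $|\OPT_\Bcal|=\Theta(n)$. We get this by choosing $b$ so that $\OPT_{\textup{LP}}=c_\Bcal^\top A_\Bcal^{-1}b=\Theta(n)$, for example by putting constant costs on every basic variable. The group correction adds at most $\Theta(n)$, so the ratio is $\Theta(n)/\Theta(n)=\Theta(1)$, as required.

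The main obstacle is the bookkeeping of uniqueness of the minimizer per block together with the requirement that $\OPT_\Bcal$ scale linearly and not be cancelled. It is also a concern that Theorem~\ref{thm:main-speedup} implicitly uses the shifted value $|E^\star|=\Theta(|\OPT_\Bcal|)$. I would handle this by working one block at a time: pick a single $2\times2$ gadget and a right-hand side residue for which the $p$ feasible residues have distinct costs, then take $m/2$ disjoint copies so that all quantities add. Finally I would note, as the paper remarks, that this family is easy because it decouples. It nonetheless witnesses that the conditions in Theorem~\ref{thm:main-speedup} are not vacuous.
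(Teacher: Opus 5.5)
Your construction is correct and follows essentially the same route as the paper's proof. Both specialize the template of Theorem~\ref{thm:family_construction} to a decoupled instance with sparse, constant-order kernel generators, a unique group-optimal point, and $|\OPT_\Bcal|=\Theta(m)$. They differ only in the gadget. The paper takes $r=t^2$ and $A_\Ncal=U'tI_m$ with $V'=I_m$, so $h_j=t\mathbf{e}_j$ is $1$-sparse and uniqueness already holds with $c_\Bcal=\mathbf{0}_m$, $c_\Ncal=\mathbf{1}_m$. Your $2\times 2$ blocks of determinant $p$ over $r_m=p$ give $2$-sparse generators such as $(p-1,1)$. They also genuinely need unequal reduced costs inside a block whenever its residue is nonzero: with equal costs, a residue $\beta\neq 0$ yields $\beta+1$ tied minimizers.

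The $\OPT_{\textup{LP}}$ offset you mention is not needed. With $c_\Bcal=\mathbf{0}_m$ and nonzero block residues, the nonbasic part alone gives $|\OPT_\Bcal|=\Theta(m)$, which is what the paper does. This is also the more meaningful way to meet \eqref{eqn:condition1}, since an additive constant does not change the variation of $\mathbf{c}^\top\xmbf$ over the coset that the short-path analysis actually exploits.
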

\begin{proof}
The family is constructed using the procedure outlined in the proof of Theorem \ref{thm:family_construction}, with some additional assumptions on the unimodular matrix $V'$ chosen in the construction therein. To recall for the reader, the construction considers arbitrary positive integers $t, m$, with $n=2m$ and $r = t^2$. Define $R = r I_m $ and $T = t I_m $. We pick unimodular matrices $U, V, U', V \in \mathbb{Z}^{m \times m}$. The ILP constraint matrix is then $A = [URV | U'TV'] = [A_{\Bcal } | A_{\Ncal}] \in \mathbb{Z}^{m \times n}$, with cost $c$ chosen so that $\Bcal$ is an optimal basis. We now additionally specify $V' = I_m$.

Recall that our constraint on the cost function vector $c$ is from the need for the reduced costs to be nonnegative:
\begin{align*}
   c_\mathcal{N}^\top-c_\Bcal ^\top A_\Bcal ^{-1}A_\mathcal{N} \geq 0.
\end{align*}
This is satisfied by the choice $c_{\Bcal} = \mathbf{0}_m$ and $c_{\Ncal} = \mathbf{1}_{m}$. 
The conditions for super-quadratic speedup are
\begin{align*}
\max_{j\in[k]}\,\cycnormonew{h_j}{\mathbf{c}}=
    \Theta\!\left(
    \frac{|{\OPT}_\Bcal|}
    {\log\left(|K|/\lvert K^{\star}\rvert\right)}
    \right), \qquad \max_{j \in [k]}\lvert \langle h_j \rangle\rvert^2 = \Theta\left(\frac{\log\left(|K|/\lvert K^{\star}\rvert\right)}{k}\right).
\end{align*}
The group relaxation constraint is 
\begin{align*}
    U^{-1}A_{\Ncal}\xmbf  = U^{-1}b  \pmod{r\Z{m}}.
\end{align*}

Then let us follow the steps of Algorithm \ref{alg:nullspacereform}. Note that by construction, the SNF of $U^{-1}A_{\Ncal}$ is $U^{-1}U'TV'$. We first solve $T\zmbf = 0 \pmod{r\Z{m}}$, which from \eqref{eq:null_solution_set} has solution set
\begin{align*}
    \left\{ \ell t : \ell \in [t] \right\}^{\times m}.
\end{align*}
We then  transform the generating set $\{t \mathbf{e}_j : j \in [m]\}$ via $(V')^{-1}$ and mod by $r\Z{m}$, which is trivial here as $V'$ is the identity. Thus, $h_j = t \embf_j$, for each $j\in[m]$, and there are $k =m$ generators of $K$. These generators each have order $t$, thus $\max_{j \in [k]}\lvert \langle h_j \rangle\rvert^2 = t^2 = \Theta_{m}(1)$. Additionally, a direct calculation shows that for any $t \geq 2$,
\begin{align*}
    \cycnormonew{h_j}{\cmbf} = c_j (t^2 - t) = \Theta_m(1).
\end{align*}
Hence, $\max_{j\in[k]}\,\cycnormonew{h_j}{\cmbf} = \Theta_m(1)$.
% \begin{align*}
%    \max_{j\in[k]}\,\cycnormonew{h_j}{\mathbf{c}} &\leq (\max_{j \in [k]} \lvert \langle h_j\rangle \rvert)^2 \lVert \mathbf{c}\rVert_{\infty}^2\max_{h_j \in {\Scal}}\lVert h_j \rVert_{\ell_0}^2\\
% %     &\leq t^2 \max_{h_j \in {\Scal}}\lVert h_j \rVert_{\ell_0}^2,
% \end{align*}
% and we also have that $\max_{h_j \in {\Scal}}\lVert h_j \rVert_{\ell_0} = 1$. 

Thus, the two conditions for super-quadratic runtime become
\begin{align*}
    \log(\lvert K \rvert/\lvert K^{\star}\rvert) = \Theta(m), \qquad |\OPT_\Bcal| = \Theta(m).
\end{align*}

From the proof of Theorem \ref{thm:family_construction}, we can take $\hat{\mathbf{x}} = (V')^{-1}\left(\frac{b}{t}\right) \pmod{r\mathbb{Z}^m} = \left(\frac{b}{t}\right) \pmod{r\mathbb{Z}^m}$. Take $b = \ell t\mathbf{1}_m$, $\ell < r$. Thus our feasible coset has elements of the form
\begin{align*}
% &V'^{-1}\left(\frac{b}{t} + t\sum_{j}n_je_j\right)\pmod{t^2\mathbb{Z}^n}\\
% &=V'^{-1}\left(\ell \mathbf{1}_m + t\sum_{j}n_je_j\right)\pmod{t^2\mathbb{Z}^n}\\
&\ell + \sum_{j} t n_j\mathbf{e}_j \pmod{t^2\mathbb{Z}^n}, \quad n_j \in \{0,1,\ldots, t-1\}.
\end{align*}
%$n_j \in \{0,1,\ldots, t\}.$

This construction ensures that there is only one optimal solution, i.e. the element of the coset with smallest entries. We can pick $\ell$ to ensure that all the entries of the minimizer are nonzero. Hence we have $\lvert K^{\star} \rvert = 1$ and $|\OPT_\Bcal| = \Theta(m)$ by the chosen cost function. Because $\log \abs{K} = m \log t = \Theta_m(m)$, we see that for each $t \geq 2$ there is a family of problems satisfies both conditions of Theorem \ref{thm:main-speedup}.
\end{proof}

\subsection{Expander walk approach}\label{ss:SP_poin}

We can convert the product-of-cycles walk into an expander graph, with high probability, by randomly sampling additional elements from $K$. Specifically, if we use the minimal generating set $\Scal$ found using Algorithm \ref{alg:nullspacereform} to sample $C\log \lvert K\rvert$ (for some constant $C >0 $) elements $\widetilde{\Scal}$ uniformly at random from $K$, then with high-probability both $\widetilde{\Scal}$ will generate $K$ and the Cayley graph spanned by $\widetilde{\Scal}$ is an expander (Algorithm \ref{alg:expandergen}). The latter part follows from the \textit{Alon--Roichman theorem}~\cite{alon1994random}, which asserts that the spectral gap is $\delta \geq \epsilon$ for a constant $\epsilon> 0$. Since the Cayley walk that we generate is with high probability a spectral expander, we refer to this approach as the \emph{Expander Walk}. 

\begin{algorithm}
  \caption{\textsc{ExpanderGeneration}}
  \label{alg:expandergen}
  \hspace*{\algorithmicindent} \textbf{Input}: Constraints from Group relaxation  $(\Ambf , \{r_j\}_{j=1}^{m})$\\   
 \hspace*{\algorithmicindent} \textbf{Output}: A Cayley graph generated by a set $\widetilde{\Scal} \subset K$ s.t. with high probability the outputted set both generates $K$ and forms a Cayley graph that is a spectral expander.
  \begin{algorithmic}[1]
    \State Let $\Scal =\textup{NullGenFinding}(\Ambf , \{r_j\}_{j=1}^{m})$, compute $\lvert K \rvert$.
    \State Use $\Scal$ to uniformly sample $C\log\lvert K \rvert$ elements from $K$ and obtain $\widetilde{\Scal} \subset K$.
    \State Output $\widetilde{\Scal}$.
  \end{algorithmic}
\end{algorithm}

Since the expander has a constant spectral gap, we obtain an alternative runtime bound for solving the group relaxation. Compared with Theorem~\ref{thm:main-speedup}, the condition~\eqref{eqn:condition2} for $\alpha$ being dimension-independent is no longer required. 

\begin{theorem}[Runtime for solving the group relaxation -- Expander Version]\label{thm:main-speedup-poincare}
With high probability, the generalized quantum SP algorithm finds the optimal solution to the group relaxation \eqref{e:GILO-final} with runtime at most
$$
\mathcal{O}\!\left(\left(|K|/|K^\star|\right)^{\frac{1}{2}-\alpha}\cdot \mathrm{poly}(n,L)\right).
$$
If the set $\Scal=\{h_1,\dots,h_k\}\subset Z$ output by \textsc{ExpanderGeneration} (Algorithm~\ref{alg:expandergen}) is a generating set for $K$ satisfying
$$
\max_{j\in[k]}\,\cycnormonew{h_j}{\mathbf{c}}
\;=\;
\Theta\!\left(
\frac{|\OPT_\Bcal|}
{\log\!\left(|K|/|K^\star|\right)}
\right),
$$
then $\alpha$ is a positive constant independent of the problem dimension.
\end{theorem}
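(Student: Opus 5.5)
We will mirror the proof of Theorem~\ref{thm:main-speedup}, replacing the log-Sobolev route (Theorem~\ref{thm:gen_short_path_runtime_ls}) by the spectral-gap version, Theorem~\ref{thm:gen_short_path_runtime_poincare}.

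\textbf{Step 1: the walk and the input assumptions.} First, by the Alon--Roichman theorem~\cite{alon1994random}, with high probability $\widetilde{\Scal}$ generates $K$ and the (lazy) Cayley walk $P$ on $K$ has spectral gap $\delta \ge \epsilon = \Theta(1)$. Its stationary distribution is uniform, so $\pi(E^\star) = |K^\star|/|K|$. Assumption~\ref{asm:input-assumptions} then holds by Section~\ref{subsec:state_prep_and_block_encoding}:
\begin{itemize}
\item Algorithm~\ref{alg:initial_state_prep} prepares the uniform state over $K$, using the original generators $\Scal$.
\item $P$ is symmetric with row sparsity $|\widetilde{\Scal}| = \Ocal(\log|K|) = \poly(n,L)$, so \cite[Lemma 48]{gilyen2019quantum} applies.
\item The block-encoding of the cost is unchanged.
\end{itemize}

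\textbf{Step 2: bounding $\Delta_P$.} As before, shift $\widetilde f = f - C$ with $C = \Theta(|\OPT_\Bcal|)$ so that $E^\star < 0$ and $|E^\star| = \Theta(|\OPT_\Bcal|)$. The pseudo-Lipschitz computation from the proof of Theorem~\ref{thm:main-speedup} is coordinatewise and does not use the particular generators. It gives
\[
\Delta_P \le \sqrt{\|\widetilde f\|_P} \le \max_j \cycnormonew{h_j}{\mathbf{c}},
\]
with the maximum taken over the generators of the expander walk.

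\textbf{Step 3: verifying the hypotheses of Theorem~\ref{thm:gen_short_path_runtime_poincare}.} Using the same sandwich bound on $\mathbb{E}_\pi[\widetilde\theta_\eta(\widetilde f)]$, the numerator of $\gamma$ is $\Theta(|\OPT_\Bcal|)$. Combined with the hypothesis on $\max_j \cycnormonew{h_j}{\mathbf{c}}$, this gives
\[
\gamma = \Theta\!\left(\frac{\sqrt{\delta}\,|\OPT_\Bcal|}{\max_j \cycnormonew{h_j}{\mathbf{c}}\,\ln(|K|/|K^\star|)}\right) = \Theta(\sqrt{\delta}) = \Theta(1).
\]
With $\gamma$ constant:
\begin{itemize}
\item $\delta/2 > (|K^\star|/|K|)^\gamma$ holds for large instances, since the right side decays exponentially.
\item $\mu^\star = \delta/4 = \Theta(1)$, so we may choose a constant $\mu < \mu^\star$.
\item In condition~\eqref{eqn:ell_condition_pc}, the first term is at least $\tfrac12 (|K|/|K^\star|)^{\gamma/2}$. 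The subtracted maximum is at most $\max\{4\ln|K|/\delta,\ \Theta(\log^2(|K|/|K^\star|))\}$, which is polynomial in $n,L$. So the condition holds for sufficiently large problem size.
\end{itemize}
Theorem~\ref{thm:gen_short_path_runtime_poincare} then gives runtime $\Ocal(\poly(n)\,\delta^{-1}(|K|/|K^\star|)^{1/2-\alpha})$ with $\delta^{-1} = \Ocal(1)$. This proves the first claim, with $\alpha$ possibly dimension dependent.

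\textbf{Step 4: constant $\alpha$.} By the hypothesis,
\[
\frac{|E^\star|}{\Delta_P} \ge \Theta\!\left(\frac{|\OPT_\Bcal|}{\max_j\cycnormonew{h_j}{\mathbf{c}}}\right) = \Theta\big(\log(|K|/|K^\star|)\big).
\]
With $\mu,\eta$ constant, Corollary~\ref{cor:super_quad_runtime_short_path} then gives $\alpha = \Theta(1)$.

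\textbf{Main obstacle.} The delicate point is Step~1 together with a mismatch in the statement. The statement's $\Scal$ must be read as the random generators produced by \textsc{ExpanderGeneration}. We need Alon--Roichman to give a constant gap for the symmetrized Cayley graph on an abelian group with $C\log|K|$ samples, and we must absorb its failure probability into the ``with high probability'' claim. I would handle this by fixing $C$ large enough that the failure probability is $1/\poly$, and by verifying the generation property a posteriori via a classical subgroup computation (SNF of the sampled generators), resampling if it fails.
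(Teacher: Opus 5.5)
Your proposal is correct and follows the paper's proof essentially step for step. The shared steps are: a constant gap from Alon--Roichman; the generator-independent Cayley-walk bound $\sqrt{\|\widetilde f\|_P}\le\max_j\cycnormonew{h_j}{\mathbf{c}}$ (your square root is the correct form of the inequality the paper writes without it); $\gamma=\Theta(1)$ from the hypothesis; the two side conditions of Theorem~\ref{thm:gen_short_path_runtime_poincare} for large instances; and Corollary~\ref{cor:super_quad_runtime_short_path} for constant $\alpha$.

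One small point, shared with the paper: condition~\eqref{eqn:ell_condition_pc} and the corollary also need $\Delta_P$ bounded from below, not only from above. You should therefore explicitly \emph{set} $\Delta_P \coloneqq \max_j\cycnormonew{h_j}{\mathbf{c}}$. This is legitimate because $\Delta_P$-stability persists when $\Delta_P$ is increased, since $\widetilde\theta_\eta$ is nondecreasing.
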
 
\begin{proof}
The input assumptions in Assumption~\ref{asm:input-assumptions} hold by construction (see, Section~\ref{subsec:state_prep_and_block_encoding}). The Cayley walk induced by $\widetilde{\Scal}$ returned by \textsc{ExpanderGeneration} has a constant spectral gap $\delta=\Omega(1)$ with high probability.  Hence, we have that $\mu^\star=\Theta(\delta)=\Omega(1)$. Let $\widetilde{f}$ be as in the proof of Theorem \ref{thm:main-speedup}.

By Theorem~\ref{thm:gen_short_path_runtime_poincare}, for any $\mu<\mu^\star$, the short path algorithm runs in time
\begin{equation}\label{e:sp_run_time_proof}
    \mathcal{O}\!\left(\mathrm{poly}(n)\,\delta^{-1}\,\big[\pi(E^\star)^{-1}\big]^{\left(\frac{1}{2}-\frac{\eta(1-\eta)\,|E^\star|\,\mu}{2\,\ln(1/\pi(E^\star))\,\sqrt{\|\widetilde{f}\|_{P}}}\right)}\right),
\end{equation}
if $\frac{\delta}{2} > \pi(E^{\star})^{\gamma}$ and  
\begin{align*}
\frac{1}{2\sqrt{\pi(E \leq (1-\eta)E^*)}} - \max\left(\frac{4\ln(\lvert K\rvert)}{\delta}, \frac{3\lvert E^{\star}\rvert^2(1-\eta)^2}{\Delta_P^2}\right)  \geq 2.
\end{align*}

Like in the proof of Theorem \ref{thm:main-speedup}:
$$
\Delta_{P} \leq \|\widetilde{f}\|_{P} \;=\; \|f\|_{P}
\;\le\;
\Big(\max_{j\in[k]}\,\cycnormonew{h_j}{\mathbf{c}}\Big)^{\!2},
$$
which holds because the expander walk is still a Cayley walk.
% Also, as mentioned in Section \ref{sss:runtime_short_path}, $\Delta_P \le \max_{j\in[k]}\,\cycnormonew{h_j}{\mathbf{c}}$.

Also similar to  the proof of Theorem \ref{thm:main-speedup}, but using the expression for $\gamma$ from Theorem \ref{thm:main-speedup-poincare}, we have
\begin{align*}
    \gamma = \frac{\sqrt{\delta}(-(1-\eta)(\OPT_\Bcal - C) - \mathbb{E}_{\pi}[\widetilde{\theta}_{\eta}(\widetilde{f})])}{\max_{j\in[k]}\,\cycnormonew{h_j}{\mathbf{c}}\ln(\lvert K\rvert/\lvert K^{\star}\rvert))} = \Theta(1),
\end{align*}
with $C = \Theta(\lvert \OPT_\Bcal\rvert)$ and by our assumptions. Hence,  since $\delta = \Omega(1)$, we have $\frac{\delta}{2} > \pi(E^{\star})^{\Theta(1)}$ for sufficiently large problem size, implying the first condition is satisfied. The second condition reduces to
\begin{align*}
&\frac{1}{2\sqrt{\pi(E^*)^{\Theta(1)}}} - \max\left(\frac{4\ln(\lvert K\rvert)}{\delta}, \frac{3|\OPT_\Bcal - C|^2(1-\eta)^2}{\Big(\max_{j\in[k]}\,\cycnormonew{h_j}{\mathbf{c}}\Big)^{\!2}}\right)\\
&\quad= \frac{1}{2\sqrt{(\lvert K ^{\star}\rvert/\lvert K \rvert)^{\xi}}} - \max\left(\xi_2\ln(\lvert K\rvert), \xi_3\log^2(\lvert K\rvert/\lvert K^{\star} \rvert)\right) \geq 2,
\end{align*}
by assumption and for some constants $\xi, \xi_2, \xi_3> 0$.  The second condition is satisfied for large enough problem size. Thus we have the runtime stated in Theorem \ref{thm:gen_short_path_runtime_poincare}.

Using that $\mu < \mu^{\star}=\Theta(1)$, the super-quadratic condition becomes
$$
    \frac{|E^\star|}{\Delta_P} \geq \frac{|E^\star|}{\max_{j\in[k]}\,\cycnormonew{h_j}{\mathbf{c}}}
    \;=\;
    \Theta\!\left(\frac{|\OPT_\Bcal|}{\max_{j\in[k]}\,\cycnormonew{h_j}{\mathbf{c}}}\right)
    \;=\;
    \Theta\!\left(\log\frac{|K|}{|K^\star|}\right),
$$
which is satisfied by our assumptions.
 Thus we have that $\alpha>0$ (cf.\ Corollary~\ref{cor:super_quad_runtime_short_path}) is a dimension-independent constant. 
%  We also have
% \begin{align*}
%     \gamma = \frac{\sqrt{\delta}(-(1-\eta)\widetilde{\OPT}_\Bcal - \mathbb{E}_{\pi}[\widetilde{\theta}_{\eta}(\widetilde{f})])}{\sqrt{B}\ln(\lvert K\rvert/\lvert K^{\star}\rvert))} = \Theta(1),
% \end{align*}
% by the above. By construction, $\lvert \widetilde{\OPT}_\Bcal \rvert = \mathcal{O}( \lvert \mathbb{E}_{\pi}[\widetilde{\theta}_{\eta}(\widetilde{f})]\rvert)$. Thus, the condition for super-quadratic runtime by Corollary \ref{cor:super_quad_runtime_short_path} is met.
This completes the proof.
\end{proof}

\section{Numerical results}
%Adding a temporary section here on the numerics
\label{sec:numerical}

We evaluate our algorithm on two complementary problem classes: a synthetic suite of CUTGEN1 one‑dimensional cutting‑stock instances, which lets us generate many reproducible problems via parametric sweeps \cite{GAU1995572}; and
(ii) a curated set of pure‑integer ILPs from MIPLIB~2017, providing diverse, real‑world benchmarks \cite{miplib2017}.
This pairing allows us to test behavior both in controlled settings (generation at scale) and across accepted benchmark models.
Because the group relaxation is defined for any ILP and yields bounds stronger than the plain LP relaxation, with the minimization chain $\OPT_\mathrm{LP} \le \OPT_\Bcal \le \OPT$, these two families are natural testbeds.

\begin{figure}[h]
    \centering
    \pgfplotstableread[col sep=comma]{gau_benchmark_table_combined.csv}\datatable

\begin{tikzpicture}
\begin{axis}[
    width=12cm,
    height=7cm,
    ybar,
    bar width=9.5,
    xmin=0, xmax=100,
    ymin=0,
    xlabel={$R\% = 100 \cdot \frac{\OPT_{\Bcal} - \OPT_{\mathrm{LP}}}{\OPT - \OPT_{\text{LP}}}$},
    ylabel={Count},
    xtick={0,20,40,60,80,100},
    xticklabels={0\%,20\%,40\%,60\%,80\%},
    grid=both, 
]

\addplot+[
    hist={
        bins=10,
        data min=0,
        data max=100.01
    },
    fill=blue!60,
    draw=blue!80!black,
] table [y=R_pct] {\datatable};

\draw[dashed, black, thick] (95,0) -- (95,27);
\node[anchor=south, font=\small] at (95,27.5) {100\%};

\end{axis}
\end{tikzpicture}
    \caption{Histogram of $R_{\%} = 100 \cdot (\OPT_\Bcal - \OPT_{\mathrm{LP}})/(\OPT - \OPT_{\mathrm{LP}})$ over the cutting‑stock benchmark. Dashed line at $100\%$ marks full gap closure by the group relaxation.}
    \label{fig:cutstock_hist}
\end{figure}
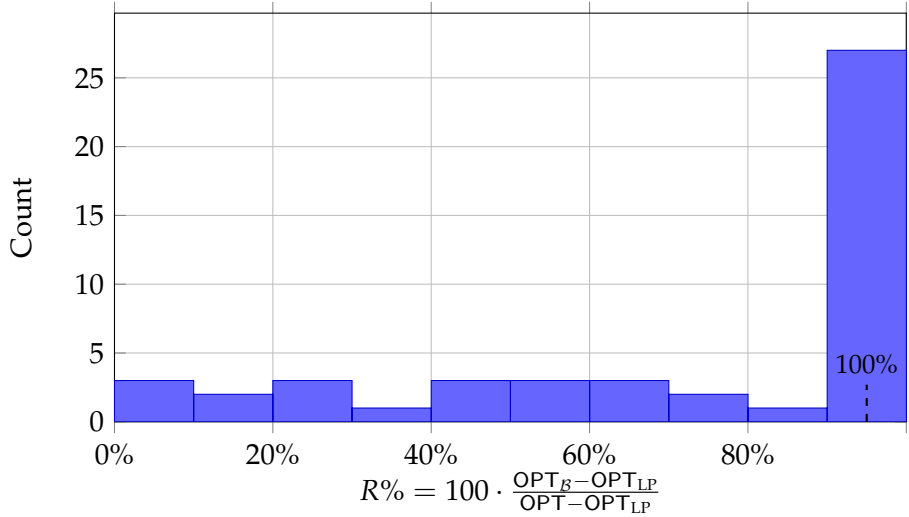

We use the standard pattern–based ILP for the 1D–CSP: given stock length $L$, item widths $w_i$ and demands $d_i$ for $i\in[m]$, let $P$ be the set of cutting patterns with counts $a_{ip}\in\mathbb{Z}_{\ge 0}$ satisfying $\sum_{i=1}^m a_{ip} w_i \le L$. The problem at hand is thus
$$
\min_{x\in\mathbb{Z}_{\ge 0}^{|P|}} \left\{ \;\sum_{p\in P} x_p
 : 
\sum_{p\in P} a_{ip}\, x_p \;\ge\; d_i \;\; \forall i\in[m] \right\},
$$
i.e., minimize the number of stock rolls subject to satisfying all demands.

Instance generation follows the CUTGEN1 scheme of Gau and W\"ascher~\cite{GAU1995572}. We instantiate a grid over three knobs
$$
m\in\{25,50,75\},\qquad
v_2\in\{0.25,0.50,0.75,1.00\},\qquad
\overline{d}\in\{5,10,20\}.
$$
Here: $m$ is the number of distinct item types;
 $v_2$ is the upper width ratio relative to the stock length $L$, and widths are sampled as integers in an interval proportional to $L$, typically
  $$
  w_i \in \big[\lceil v_1 L\rceil,\;\lfloor v_2 L\rfloor\big],
  $$
  where $v_1$ is a fixed lower ratio that is held constant across our runs.

\begin{table}[h]
\centering
% \caption{Group vs ILP: gaps and reduction }
\label{tab:grp_ilp_uniform}
\renewcommand{\arraystretch}{1.12}
\begin{tabular}{lrrrrrrr}
\toprule 
\hline
\textbf{Instance} &
$\mathsf{OPT_{\mathrm{LP}}}$ &
$\mathsf{OPT_\Bcal}$ &
$\mathsf{OPT_{\mathrm{ILP}}}$ &
$\boldsymbol{\Delta_{\mathrm{LP}\to \mathrm{ILP}}}$ &
$\boldsymbol{\Delta_\Bcal}$ &
$\boldsymbol{R_\mathrm{abs}}$ &
$\boldsymbol{R_\%}$ \\
\hline
\texttt{qap10}   & 332.57   & 336.00   & 340.00   &  7.43 &  4.00 &  3.43 & 46.2\% \\
\texttt{gen02}   & -4840.54 & -4817.32 & -4783.73 & 56.81 & 33.59 & 23.22 & 40.9\% \\
\texttt{ex10}    & 100.00   & 100.00   & 100.00   &  0.00 &  0.00 &  0.00 & N/A    \\
\texttt{supp19}  & 12677206.00 & 12677206.00 & 12677552.00 & 346.00 & 346.00 & 0.00 & 0.0\% \\
\bottomrule
\end{tabular}
\caption{Minimization results on pure ILP instances extracted from MIPLIB 2017~\cite{miplib2017}. Column definitions: $\Delta_{\mathrm{LP}\to\mathrm{ILP}}:=\OPT_\mathrm{ILP}-\OPT_\mathrm{LP}$, $\Delta_\Bcal:=\OPT_{\mathrm{ILP}}-\OPT_\Bcal$, $R_\mathrm{abs}:=\OPT_\Bcal-\OPT_\mathrm{LP}$, and $R_\%:=100\cdot R_\mathrm{abs}/\Delta_{\mathrm{LP}\to\mathrm{ILP}}$.}

\end{table}

\section*{Acknowledgements}
 The authors thank Rob Otter for executive support and valuable feedback on this project. We also acknowledge our colleagues at the Global Technology Applied Research Center
of JPMorganChase.

\bibliographystyle{alpha}
\bibliography{sdpbib}

\section*{Disclaimer}
This paper was prepared for informational purposes by the Global Technology Applied Research center of JPMorgan Chase \& Co. This paper is not a product of the Research Department of JPMorgan Chase \& Co. or its affiliates. Neither JPMorgan Chase \& Co. nor any of its affiliates makes any explicit or implied representation or warranty and none of them accept any liability in connection with this paper, including, without limitation, with respect to the completeness, accuracy, or reliability of the information contained herein and the potential legal, compliance, tax, or accounting effects thereof. This document is not intended as investment research or investment advice, or as a recommendation, offer, or solicitation for the purchase or sale of any security, financial instrument, financial product or service, or to be used in any way for evaluating the merits of participating in any transaction.
\newpage
\appendix

\section{Compressing the search space}
\label{app:kernel_compression}
This appendix discusses an approach to shrink the kernel group $K$ by compressing the ambient space. Recall from equation \eqref{eq:kernel_set_defn} that $K$ is a subgroup of $Z = \Zmbb_{r_m}^d$. Though this set is finite, it is clearly not the minimal choice. To see this, let $s_j = \abs{\langle\Ambf_j\rangle}$ be the order of $\Ambf_j$. Then, for any $\xmbf \in K$ the coordinate $\xmbf_j$ can always be reduced modulo $s_j \coloneqq \abs{\langle \Ambf_j \rangle}$ while remaining in constraint. Explicitly, the orders are given by
\begin{align*}
    s_j = \textup{lcm}\left(\frac{r_1}{\gcd(r_1, \Ambf_{1j})}, \dots, \frac{r_m}{\gcd(r_m, \Ambf_{mj})}\right).
\end{align*}
To see this, observe that any solution to $s_i \Ambf_{ij} = 0 \pmod{r_i}$ is a multiple of $\gcd(r_i, \Ambf_{ij})$. Any solution to these equations simultaneously must therefore be a common multiple. By definition of order, the smallest $s_j$ satisfying these equations is therefore the \emph{least} common multiple. 

Note that $r_m$ is a common multiple of the $\gcd(r_i, A_{ij})$, hence $s_j \vert r_m$. This implies a natural homomorphism 
\begin{equation*}
    Z \rightarrow Z' \coloneqq \bigoplus_{i=1}^d \Zmbb_{s_i}
\end{equation*}
obtained by reducing each component modulo $s_i$. The kernel $K$ can be reduced under this map to some $K' \leq Z'$. In this way, we obtain a new, generally smaller kernel space containing the original optimal solutions. This procedure provides a benefit when \begin{align*}
    \prod_{j=1}^{d}s_j \ll r_m^d.
\end{align*}

A set of generators for $K'$ can be obtained from the generators $(h_i)$ of $K$ by $h_i' = h_i \pmod{Z'}$. However these generators do not necessarily provide a direct sum decomposition of $K$. Such a decomposition is desirable for several reasons. At a high level, it allows for efficient search over the group elements without repetitions. More specifically to our purposes, we will find the cyclic decomposition useful for considering Markov chain approaches: the Cayley graph becomes a product of cycles, which has provably nice mixing properties.

To obtain cyclic generators for $K'$, we utilize Algorithm 4.1.10 of Cohen \cite{cohen2000advanced} for computing $K'$. The set up is a follows: let $\varphi(x) = x\pmod{Z'}$ be the homomorphism mapping $Z$ to $Z'$ (and thus $K$ to $K'$). We note that $\varphi$ is ``effective," meaning images and pre-image elements are trivial to compute. The domain space $Z$ is already in SNF (as a direct sum of cyclic groups, each order dividing the previous) , while the codomain $Z'$ can be put in SNF efficiently (especially given that the corresponding group relation matrix $\diag(s_j)_j$ is diagonal). Let $(\alpha_1,\ldots, \alpha_q)$ be the cyclic generators of $Z$ with invariant factors $z_1, \ldots z_q$, $q\leq d$. We note that $\varphi$ is still effective with respect to the new generators $(\alpha_j)_{j=1}^q$. 

The last ingredient is an effective description of $K$ as a subgroup of $Z$. In particular, by Prop. 4.1.6 of \cite{cohen2000advanced}, 
% \dyh{I couldn't find any algorithm in the ref, are there multiple versions?} \jw{Oh okay, I will investigate this.}
% Should have been 4.1.6, now fixed
any subgroup $A$ of $B$ is in 1-1 correspondence with integral matrices $H_A$ in \emph{Hermite Normal Form} (HNF). Such a matrix can be found in time polynomial in $d$, and typically faster than SNF. For our present case, $H_K$ is the matrix obtained from an HNF algorithm on the block matrix
\begin{equation}
    (h_1 h_2 \ldots h_\ell \vert r_m I_{d\times d}).
\end{equation}
We thus assume access to the approriate descriptor $H_K$ of $K$ in $Z$. 

With $H_K$ in hand, Algorithm 4.1.10 outputs another HNF integer matrix $L$ such that
\begin{equation}
    (\alpha_1 \ldots \alpha_q) L = (h_1' \ldots h_p')
\end{equation}
where $(h_i')_{i=1}^p$ form the cyclic generators of $K'$ and $p \leq \ell$. The runtime for this Algorithm is dominated by HNF calculations, which are again polynomial in $d$ and the size of the invariant factors.

Here we also present an alternative and self-contained approach (Algorithm \ref{alg:null_gen_finding_compress}) for compressing the kernel. This again produces a set of cyclic generators for $K'$.
\begin{algorithm}
  
\label{alg:null_gen_finding_compress}\caption{\textsc{NullGenFindingCompress}}
  \hspace*{\algorithmicindent} \textbf{Input}: Constraints from Group relaxation  $(\Ambf , \{r_j\}_{j=1}^{m})$\\   
 \hspace*{\algorithmicindent} \textbf{Output}:   An independent generating set $\Scal$ 
 for ${K}'\leq \bigoplus_{j=1}^d \mathbb{Z}_{s_j}$, the solution space to $\Ambf \xmbf = \mathbf{0} \pmod{R\Z{m}}$, where $s_j = \lvert \langle \mathbf{A}_{:j}\rangle \rvert$. 
  \begin{algorithmic}[1]
    \State Let $\{h_1, \dots, h_k\} =\textup{NullGenFinding}(\Ambf , \{r_j\}_{j=1}^{m})$
    \State Compute $s_j = \lvert \langle \mathbf{A}_{j,:} \rangle \rvert$, the orders of the columns of $\mathbf{A}$ as elements of $\mathbb{Z}_{r_m}^m$.
    \State Construct $B = \text{diag}(r_m/s_1, \dots, r_m/s_k)$, $D = [h_1, \dots, h_k]$.
    \State Let $\{\widetilde{m}_1, \dots, \widetilde{m}_{t}\} =\textup{NullGenFinding}(B{D} , \{r_m\}_{j=1}^{m})$.
    \State Let $m_t = D\widetilde{m}_t\pmod{r_m\mathbb{Z}^d}$.
    \State Compute SNF of $\mathbf{C} = [ m_1, \dots, m_t | r_m e_1, \dots r_m e_d] = UMV$.

    \State Let $\Scal'$ include columns of $U$  such that $M_{jj} \notin \{0, 1\}$.
    \State Let $\Scal = \{ u \pmod{S\mathbb{Z}^d}~|~ u \in \Scal'\}$.
    \State Output ${\Scal}$.
  \end{algorithmic}
\end{algorithm}

\begin{lemma}[Runtime of \textsc{NullGenFindingCompress}]
    Suppose one has data $(\Ambf , \bmbf , \mathbf{c}, \{r_j\}_{j=1}^{m})$ corresponding to the group relaxation of an ILP stored in classical memory. Let $S = \textup{diag}(s_1, \dots, s_d)$, where $s_j$ is the order of the $j$-th column of $\mathbf{A}$ taken  $\mod{R\mathbb{Z}^m}$. There is a classical algorithm that outputs cyclic generators $\Scal = (\widetilde{h}_i)_{i=1}^v$ for $\widetilde{K}$, the kernel of $\Ambf$ in $\bigoplus_{j=1}^d \mathbb{Z}_{s_j}$. In particular, ${K}' \cong \Zmbb _{\lvert \langle \widetilde{h}_1\rangle\rvert} \oplus \cdots \oplus \Zmbb _{\lvert \langle \widetilde{h}_v\rangle\rvert}$. The algorithm uses time $\widetilde{\mathcal{O}}(\textup{poly}(\log\lVert \Ambf  \rVert_{\infty}, d) )$. 
\end{lemma}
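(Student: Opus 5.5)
The plan is to verify correctness and runtime of Algorithm~\ref{alg:null_gen_finding_compress} step by step, reusing Lemma~\ref{lem:kernel_gen_finding} for every call to \textsc{NullGenFinding} and standard polynomial-time bounds for Smith normal form.

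\emph{Step 1 (orders $s_j$).} Each $s_j$ is computed from the lcm formula for the order of $\Ambf_j$ in $\Rcal=\bigoplus_i\Zmbb_{r_i}$ given in this appendix. This takes $\mathcal{O}(m)$ gcd/lcm operations on integers of bit-length $\mathcal{O}(\log r_m)$. Since $s_j\mid r_m$, the reduction map $\varphi: Z=\Zmbb_{r_m}^d\to Z'=\bigoplus_j\Zmbb_{s_j}$ is a well-defined surjective homomorphism.

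\emph{Step 2 (identifying $K'$).} I would take $K'$ to be the kernel of $\Ambf$ on $Z'$; this is well defined because $s_j\Ambf_j\equiv 0 \pmod{R\Z{m}}$. The key claim is $K'=\varphi(K)$.
\begin{itemize}
\item $\varphi(K)\subseteq K'$ holds because reducing coordinate $j$ modulo $s_j$ does not change $\Ambf\xmbf \bmod R\Z{m}$.
\item Conversely, any $\xmbf'\in K'$ lifted to $\{0,\dots,s_j-1\}$ coordinatewise lies in $K$ and maps to $\xmbf'$.
\end{itemize}
Hence $\{\varphi(h_i)\}$ generates $K'$. Equivalently, $K'\cong K/(K\cap\ker\varphi)$ with $\ker\varphi=\bigoplus_j s_j\Zmbb_{r_m}$.

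\emph{Step 3 (cyclic decomposition).} The remaining task is to turn a generating set of $K'$ into cyclic, independent generators. This amounts to computing the Smith normal form of a relation matrix presenting $K'$: its columns are the generators written in a basis of $\Zmbb^d$, together with the lattice $S\Z{d}$. Steps 3--6 produce such a relation matrix.
\begin{itemize}
\item The second call to \textsc{NullGenFinding}, on $BD$, computes the integer combinations $\widetilde m$ of the $h_i$. Multiplying by $B$ with $B_{jj}=r_m/s_j$ ensures those combinations respect the $s_j$-moduli. By Lemma~\ref{lem:kernel_gen_finding} the output is a valid kernel description in polynomial time.
\item The SNF $\mathbf{C}=UMV$ then yields $\Zmbb^d/\mathrm{col}(\mathbf{C})\cong\bigoplus\Zmbb_{M_{jj}}$. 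The columns of $U$ with nontrivial diagonal entries give independent cyclic generators, which are finally reduced modulo $S\Z{d}$.
\end{itemize}
From this, $K'\cong\bigoplus_i\Zmbb_{|\langle\widetilde h_i\rangle|}$ follows as in the last line of the proof of Lemma~\ref{lem:kernel_gen_finding}, because the unimodular $U$ induces an isomorphism.

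\emph{Runtime.} The algorithm makes two calls to \textsc{NullGenFinding}, each polynomial by Lemma~\ref{lem:kernel_gen_finding}, since entries of $BD$ are bounded by $r_m^2$. It performs one SNF of a $d\times(t+d)$ matrix with entries at most $r_m$, which is polynomial by \cite{storjohann1996near}. Everything else is matrix-vector products and modular reductions. Entries stay bounded by $r_m\le|\det A_\Bcal|$, so all bit-lengths are polynomial in $\log\lVert\Ambf\rVert_\infty$ and $d$.

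\emph{Main obstacle.} I expect the correctness of Step 3 to be hardest: showing that the column span of $\mathbf{C}$ (or its intersection with the $S$-moduli) is exactly the relation lattice of $K'$. The worry is that the SNF columns of $U$ describe the cokernel $\Zmbb^d/\mathrm{col}(\mathbf{C})$ rather than the subgroup $K'$, so the argument must identify the right quotient. If the direct identification fails, I would fall back on Cohen's Algorithm~4.1.10 as described above. That route computes the HNF $H_K$ of $(h_1\cdots h_\ell\mid r_mI)$, applies $\varphi$ to obtain the image subgroup, and takes its SNF. Its correctness is guaranteed by \cite{cohen2000advanced}, and its cost is dominated by polynomial-time HNF/SNF computations.
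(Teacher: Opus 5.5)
Your Steps 1--2 are correct; the lifting argument for $K'=\varphi(K)$ is cleaner than anything in the paper. Your main line in Step 3 is exactly the paper's proof: run \textsc{NullGenFinding} on $BD$, push the solutions forward to $m_i=D\widetilde{m}_i$, take the SNF $\mathbf{C}=[m_1\cdots m_t\mid r_mI_d]=UMV$, keep the columns $u_j$ of $U$ with $M_{jj}\neq 1$, and reduce them mod $S\mathbb{Z}^d$.

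The step you flag as the main obstacle is a genuine gap, and the identification is in fact false, not merely unproved. The paper asserts $\widetilde{Z}/\widetilde{K}=\bigoplus_j\langle u_j+\widetilde{K}\rangle$. But the SNF of $\mathbf{C}$ only decomposes the cokernel $\mathbb{Z}^d/\mathrm{col}(\mathbf{C})\cong Z/\widetilde{K}$, where $\widetilde{K}=K\cap\ker\varphi$. The group you need is the subgroup $K/\widetilde{K}\cong K'$ of that cokernel, which is generally not spanned by any subset of the $u_j$; these need not even lie in $K$.

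A minimal counterexample: take $m=d=1$, $r_1=4$, $\mathbf{A}=(2)$. Then $K=\{0,2\}$, $h_1=2$, $s_1=2$, and $K'=\{0\}$. However, $BD=(4)$ gives $\widetilde{m}_1=1$ and $m_1=2$, so $\mathbf{C}=[2\mid 4]$ has $M_{11}=2$ and $U=(1)$, and the procedure outputs $1\in\mathbb{Z}_2$ of order $2$. Your description of a ``relation matrix presenting $K'$'' (the generators written in $\mathbb{Z}^d$ together with $S\mathbb{Z}^d$) has the same defect: its SNF presents $Z'/K'$, not $K'$.

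The missing idea is that a presentation of $K'$ must live in the coefficient space of a generating set, not in the ambient $\mathbb{Z}^d$. The $\widetilde{m}_i$ you already compute are exactly those relations; mapping them through $D$ is what loses the information.

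\begin{itemize}
\item Let $D=[h_1\cdots h_k]$ and define $\psi:\mathbb{Z}^k\to K'$ by $a\mapsto Da \bmod S\mathbb{Z}^d$. This map is onto by your Step 2.
\item Since $(Da)_j\in s_j\mathbb{Z}$ iff $(r_m/s_j)(Da)_j\in r_m\mathbb{Z}$, the kernel of $\psi$ is $\Lambda=\langle\widetilde{m}_1,\dots,\widetilde{m}_t\rangle+r_m\mathbb{Z}^k$.
\item Take the SNF $[\widetilde{m}_1\cdots\widetilde{m}_t\mid r_mI_k]=UMV$ with $U\in\mathbb{Z}^{k\times k}$, and output $\widetilde{h}_j=DU_{:,j}\bmod S\mathbb{Z}^d$ for those $j$ with $M_{jj}>1$.
\item Then $K'\cong\mathbb{Z}^k/\Lambda\cong\bigoplus_j\mathbb{Z}_{M_{jj}}$ with $|\langle\widetilde{h}_j\rangle|=M_{jj}$, and your runtime accounting goes through unchanged.
\end{itemize}

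In the example above, $[1\mid 4]$ has $M_{11}=1$, correctly giving $K'=0$.

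Your Cohen fallback needs the same care. Form the HNF $H$ of the lift $\{x\in\mathbb{Z}^d: x\bmod S\mathbb{Z}^d\in K'\}$. The structure of $K'$ then comes from the SNF of $H^{-1}S$, not of $H$ itself, since $K'\cong H\mathbb{Z}^d/S\mathbb{Z}^d\cong\mathbb{Z}^d/H^{-1}S\mathbb{Z}^d$. Taking the SNF of $H$ would reproduce the same cokernel error.
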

\begin{proof}
    We start by proving the validity of the above procedure. We again consider the generators output by Algorithm \ref{alg:nullspacereform}, $(h_j)_{j=1}^{k}$, which are all in $Z = \mathbb{Z}_{r_m}^d$. 

Recall $S$ is a diagonal matrix with the orders of the columns of $\mathbf{A}$ modulo $R\mathbb{Z}^m$. Note that reducing the element modulo $S\mathbb{Z}^d$ still produces a solution to $\Ambf \xmbf = \mathbf{0} \pmod{R\Z{m}}$.

Let $D = [h_1, \dots, h_k]$. We then want to solve
\begin{align*}
    D\mathbf{x} = \mathbf{0} \pmod{S\mathbb{Z}^d},
\end{align*}
which is to find the kernel of the map $\mathbf{y} \mapsto \mathbf{y} \pmod{S\mathbb{Z}^d}$ in $\langle h_1, \dots, h_k\rangle \leq Z$. To do this, we solve a scaled system. We know that for each integer $j, 1\leq j \leq d, s_j~|~r_m$. Let $B = \text{diag}(r_m/s_1, \dots, r_m/s_k)$. We then compute  a generating set for the solution set to
\begin{align*}
    BD\mathbf{x} = 0 \pmod{r_m\mathbb{Z}^d},
\end{align*}
i.e. run algorithm \ref{alg:nullspacereform} getting generators $\widetilde{m}_1, \dots, \widetilde{m}_t \in Z$. We then know that the set of $m_t = D\widetilde{m}_t \pmod{r_m\mathbb{Z}^d}$ span the kernel of the map $\mathbf{y} \mapsto \mathbf{y} \pmod{S\mathbb{Z}^d}$ in $\langle h_1, \dots, h_k\rangle \leq Z$

Let us now work in the coset space $ \mathbb{Z}^d/r_m\mathbb{Z}^d \cong Z$: we have
\begin{align*}
    \widetilde{Z} := \langle h_1 + r_m\mathbb{Z}^d\rangle \oplus \cdots \oplus \langle h_k + r_m\mathbb{Z}^d\rangle \leq \langle \mathbf{e}_1 + r_m \mathbb{Z}^d \rangle \oplus \cdots \langle \mathbf{e}_d + r_m \mathbb{Z}^d \rangle = \mathbb{Z}^d/r_m\mathbb{Z}^d.
\end{align*}

We also have that
\begin{align*}
    \widetilde{K} := \langle m_1 + r_m\mathbb{Z}^d \rangle \oplus \cdots \oplus \langle m_t + r_m\mathbb{Z}^d\rangle 
\end{align*}
forms the kernel of $ \mathbf{y} \mapsto \mathbf{y} \pmod{S\mathbb{Z}^d}$ in  $\langle h_1 + r_m\mathbb{Z}^d\rangle \oplus \cdots \oplus \langle h_k + r_m\mathbb{Z}^d\rangle$.

Following \cite[Section 6.2]{childs2017lecture}, let 

\begin{align*}
    \mathbf{C} = [ m_1, \dots, m_t | r_m e_1, \dots r_m e_d].
\end{align*} We then compute the SNF of $\mathbf{C}$ to get $\mathbf{C} = UMV$. By unimodularity of $U \in \mathbb{Z}^{d \times d}$ we have
\begin{align*}
     \widetilde{Z} \leq \langle u_1 + r_m \mathbb{Z}^d \rangle \oplus \cdots \oplus\langle u_d + r_m \mathbb{Z}^d \rangle = \mathbb{Z}^d/r_m\mathbb{Z}^d,
\end{align*}
where $u_j = U_{: j}$. By construction $M_{jj}u_j \in \widetilde{K}$. Since our goal is to find a generating set for the range of $ \mathbf{y} \mapsto \mathbf{y} \pmod{S\mathbb{Z}^d}$, we can ignore $u_j$ where $M_{jj} = 1$. 

We can also drop $u_j$ such that $M_{jj} = 0$. This is because for such $u_j$, $\langle u_j + r_m\mathbb{Z}^d \rangle$ is not in $\widetilde{Z}$. To see this: first, note that $M_{jj} = 0$ and the columns of $U$ are linearly independent, so we have that for any  integer $k, 0 <  k < r_m,  k u_j \notin \widetilde{K}$. The only option then is either $u_j \notin \widetilde{K}$ or $\langle u_j + r_m\mathbb{Z} \rangle \in \widetilde{Z}$, i.e.  the entire cyclic subgroup of $u_j$ is in $\widetilde{Z}$, no part is mapped to the kernel.

Suppose the later. Then we know that if we delete the all-zero columns from $M$ to get $M'$ then $r_mu_j \in \textup{colspace}_{\mathbb{Z}}(UM')$. However, the columns of $U$ are linearly independent and form a basis over $\mathbb{R}^d$, a contradiction. Hence $u_j \notin \widetilde{K}$.

This implies 
\begin{align*}
 \widetilde{Z}/\widetilde{K} = \langle (u_1 + r_m\mathbb{Z}^d) +  \widetilde{K} \rangle \oplus \cdots \oplus \langle (u_v + r_m\mathbb{Z}^d) +  \widetilde{K} \rangle .    
\end{align*}
One will also note that we then immediately have
\begin{align*}
    \widetilde{Z}/\widetilde{K} \cong \langle u_1 \pmod{S\mathbb{Z}^d}\rangle \oplus \cdots \oplus \langle u_v \pmod{S\mathbb{Z}^d} \rangle = K'  \leq \bigoplus_{j=1}^{d} \mathbb{Z}_{s_j},
\end{align*}
as desired.

% We can then take the collection of $\Scal = \{\widetilde{h}_j = u_j \pmod{S\mathbb{Z}^d} \}$ to be our generating set. Here, $\Scal$ generates the kernel $\widetilde{K} \cong \mathbb{Z}_{\lvert \langle \widetilde{h} _j\rangle\rvert} \oplus \cdots \oplus  \mathbb{Z}_{\lvert \langle \widetilde{h}_v \rangle\rvert}$ of $\mathbf{A}$ in $\bigoplus \mathbb{Z}_{s_j}$.

The runtime follows from the runtime presented for SNF in Section \ref{subsec:nullspaceform}.

% This gives a generating set forming a direct sum for a subgroup of $\mathbb{Z}_{r_m}^d$ such that no generator has order larger than $\textup{lcm}(s_1, \dots, s_d) \leq r_m$.
% Additionally, the cardinality of the group  (product of the orders of generators) is bounded by $\prod_j s_j$, each element has infinity norm bounded by $r_m$.

\end{proof}

\section{A Gibbs sampling approach for quantum speedup}\label{sec:alt_speedup}
In this section, we present a Gibbs sampler that comes from applying a Metropolis filter (i.e. Metropolis-Hastings algorithm) to either of the Cayley walks described in Sections~\ref{ss:prod_cycles} and \ref{ss:SP_poin}.

While Gibbs sampling for inverse-temperature $\beta > 0$ will outperform the uniform sampling approaches, the complexity of classical Gibbs sampling from the null space $K$ for $\beta > 0$ is unclear. The reason we still mention the Gibbs sampler, is that if one shows that classical Gibbs sampling at some $\beta > 0$ can be done classically in $\mathcal{O}(\textup{poly}(n))$ time and the conditions we present are met, then there is a super-quadratic speedup over the Gibbs sampler.

Beyond uniform sampling, one can also introduce a Metropolis filter (run the Metropolis-Hastings algorithm \cite{levin2017markov})  to perform Gibbs sampling over the feasible coset $\hat{\mathbf{x}} + K$ using a walk with the corresponding Gibbs measure as its stationary distribution. The Gibbs distribution over $\hat{\mathbf{x}} + K$ at inverse-temperature $\beta$ is
\begin{align*}
    \pi_{\beta}(x) \propto \exp\left(\beta \langle \mathbf{c}, x\rangle\right), x \in \hat{\mathbf{x}} + K,
\end{align*}
for $\beta > 0$. The Metropolis filter performs an accept-reject step on top of  another walk, referred to as the \emph{base chain}. Specifically, we can use either the expander or product cycle walk. One step of the Metropolis-adjusted chain is as follows:  starting from $x$, accept a move to $y \neq x$ proposed by the base chain with probability $\min\left(1, \exp\left(\beta\langle \mathbf{c}, y -x \rangle\right)\right)$.

For $\beta > 0$, the Gibbs sampler will provide a strict improvement over uniform sampling. Unfortunately, it appears there is not much known about the complexity of sampling for $\beta > 0$, implying that the current best theoretical result only applies to $\beta = 0$. %We refer to this walk as the \emph{Metropolis-adjusted Cayley Walk}.

% As mentioned in Section \ref{subsec:markov_chain_feas_set}, it is possible that Gibbs sampling at inverse-temperature $\beta > 0$ could provide a classical improvement over Theorem \ref{thm:mcs_complexity}. However, it appears that it is not well-understood what is the sampling complexity for $\beta > 0$. Still,

The below theorem shows that if there is a $\beta > 0$ such that the classical Metropolis chain satisfies a LSI with sufficiently large constant, then SP can provide a super-quadratic speedup over the Gibbs sampler. %From the proof, it will also be apparent that it suffices for the spectral gap of the Metropolis chain to be a constant, i.e. we could of used Theorem \ref{thm:gen_short_path_runtime_poincare}.
The next result follows from Theorem~\ref{thm:gen_short_path_runtime_ls} and Corollary~\ref{cor:super_quad_runtime_short_path}.
\begin{theorem}[Conditional Super-Quadratic Speedup Over Gibbs Sampling]
Suppose there exists a $\beta > 0$ such that a Metropolis-adjusted Cayley Walk with stationary distribution $\pi_{\beta}$ has LS constant lower bounded by $\omega = \Omega\left([\log(1/\pi_{\beta}(\OPT_\Bcal)]^{-1})\right)$. Then, under condition \eqref{eqn:condition1} from Theorem \ref{thm:main-speedup}, there exists a generalized quantum SP algorithm that runs in time
\begin{equation}
    \mathcal{O}\left( [\pi_{\beta}(\OPT_\Bcal)]^{-(1/2-\alpha)} \poly(n,L) \right),
\end{equation}
for $\alpha > 0$.
\end{theorem}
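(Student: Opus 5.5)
We follow the template of the proof of Theorem~\ref{thm:main-speedup}, but instantiate Theorem~\ref{thm:gen_short_path_runtime_ls} with the Metropolis-adjusted Cayley walk $P_\beta$ in place of the plain product-of-cycles walk. Its stationary distribution is $\pi_\beta$ on the feasible coset $\hat{\xmbf}+K$, so $\pi(E^\star)=\pi_\beta(\OPT_\Bcal)$.

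\textbf{Step 1: check that $P_\beta$ fits the framework.} By construction the Metropolis chain is reversible with respect to $\pi_\beta$. We make it aperiodic by taking the lazy version, which changes $\omega$ only by a constant factor.

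\textbf{Step 2: verify Assumption~\ref{asm:input-assumptions}.}
\begin{itemize}
\item For the initial state $\ket{\sqrt{\pi_\beta}}$, we note that $\pi_\beta$ factorizes over coordinates when $K$ is presented through the cyclic generators. This holds in the product-of-cycles parametrization with sparse, coordinate-aligned generators. Where it fails, we fall back on assuming $\ket{\sqrt{\pi_\beta}}$ is preparable, or on preparing it by rejection/amplitude amplification from the uniform state. This step is only partly justified by what precedes it.
\item The discriminant $D(P_\beta)$ stays $\mathcal{O}(|\Scal|)$-sparse. Its entries $\sqrt{P_\beta(x,y)P_\beta(y,x)}$ are efficiently computable from the Cayley step and the acceptance ratio $\min(1,e^{\beta\langle \cmbf,y-x\rangle})$. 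Hence \cite[Lemma 48]{gilyen2019quantum} applies exactly as in Section~\ref{subsec:state_prep_and_block_encoding}.
\item The block-encoding of $H=\widetilde f$ is unchanged.
\end{itemize}

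\textbf{Step 3: bound the pseudo-Lipschitz norm.} The Metropolis chain only moves along Cayley edges (or stays put), so every nonzero transition changes $f$ by at most $\max_j\cycnormonew{h_j}{\mathbf{c}}$. Therefore
$\Delta_P\le\sqrt{\|\widetilde f\|_{P_\beta}}\le \max_j\cycnormonew{h_j}{\mathbf{c}}$, by the same computation as in Theorem~\ref{thm:main-speedup}. We plug this bound and the hypothesized lower bound on $\omega$ into the formula for $\gamma$ of Theorem~\ref{thm:gen_short_path_runtime_ls}, using condition~\eqref{eqn:condition1} with $\log(|K|/|K^\star|)$ replaced by $\ln(1/\pi_\beta(\OPT_\Bcal))$. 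The numerator is $\Theta(|\OPT_\Bcal|^2)$ after the shift by $C$, which gives $\gamma=\Theta(\omega\ln(1/\pi_\beta(\OPT_\Bcal)))=\Omega(1)$. The choice of $\omega$ likewise makes $\mu^\star=\tfrac23\gamma\omega\ln(1/\pi_\beta(\OPT_\Bcal))=\Omega(1)$, so we can fix $\mu=\Theta(1)$.

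\textbf{Step 4: verify condition~\eqref{eqn:ell_condition_ls}.} Its first term is exponentially large in $\ln(1/\pi_\beta(\OPT_\Bcal))$, since $\pi(E\le(1-\eta)E^\star)\le\pi(E^\star)^\gamma$. The subtracted terms are polynomial in $\ln|K|$ and in $\ln(1/\pi_\beta)$. Hence the condition holds for large enough instances.

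\textbf{Step 5: conclude.} Corollary~\ref{cor:super_quad_runtime_short_path} then gives the runtime with $\alpha=\eta(1-\eta)|E^\star|\mu/(2\Delta_P\ln(1/\pi_\beta))$. Here $|E^\star|/\Delta_P=\Omega(\ln(1/\pi_\beta))$, so $\alpha>0$. The prefactor $\omega^{-1}$ is polylogarithmic and is absorbed into $\poly(n,L)$.

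\textbf{Main obstacle: reconciling the two logarithms in condition~\eqref{eqn:condition1}.} That condition is stated with $\log(|K|/|K^\star|)$, whereas here the relevant quantity is $\ln(1/\pi_\beta(\OPT_\Bcal))$. Because $\beta>0$ concentrates mass on low-cost points, we have $\ln(1/\pi_\beta(\OPT_\Bcal))\le\ln(|K|/|K^\star|)$. So \eqref{eqn:condition1} gives the bound $|E^\star|/\Delta_P=\Omega(\ln(|K|/|K^\star|))\ge\Omega(\ln(1/\pi_\beta))$, which is the direction needed for $\alpha>0$. We will therefore assert only $\alpha>0$ rather than a matching $\Theta$. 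A secondary gap is efficient preparation of $\ket{\sqrt{\pi_\beta}}$ (Step 2), which we either treat as part of the hypothesis or handle via amplitude amplification with polynomial overhead.
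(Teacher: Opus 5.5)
Your skeleton is the paper's. Its proof reruns Theorem~\ref{thm:main-speedup}, observing that the Metropolis filter only moves along (or rejects) Cayley edges, so the bound $\Delta_P\le\max_j\cycnormonew{h_j}{\mathbf{c}}$ survives. The hypothesized log-Sobolev constant replaces the product-of-cycles bound, and the walk is implemented via \cite{Lemieux2020efficientquantum}. Your sparse-access block-encoding of $D(P_\beta)$ is an equally valid substitute for the latter: its off-diagonal entries are the Cayley transition probability times $e^{-\beta|f(y)-f(x)|/2}$.

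\textbf{The step that fails is the preparation of $\ket{\sqrt{\pi_\beta}}$.} The factorization you invoke holds only when the generators have pairwise disjoint supports. In that case $f$ is separable over the cyclic factors and the instance decouples anyway. The amplitude-amplification fallback does not have polynomial overhead. Rotating $\sqrt{e^{-\beta(f(x)-\OPT_\Bcal)}}$ into an ancilla on the uniform state over $K$ succeeds with probability $\frac{1}{|K|}\sum_{x\in K}e^{-\beta(f(x)-\OPT_\Bcal)}=\frac{|K^\star|}{|K|\,\pi_\beta(\OPT_\Bcal)}$. Up to polynomial factors, the total cost therefore becomes $(|K|/|K^\star|)^{1/2}\,\pi_\beta(\OPT_\Bcal)^{\alpha}$ rather than $\pi_\beta(\OPT_\Bcal)^{-(1/2-\alpha)}$. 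This overhead is exponential exactly when $\pi_\beta(\OPT_\Bcal)\gg|K^\star|/|K|$, which is the regime where Gibbs sampling beats uniform sampling and the whole point of the theorem. Fast mixing at the single temperature $\beta$ does not supply a warm start either; an annealing schedule would need gap bounds at intermediate temperatures, which are not assumed. So preparability of $\ket{\sqrt{\pi_\beta}}$ must be added as a hypothesis. The paper's proof is equally silent here: it provides only the walk operator, and Section~\ref{subsec:state_prep_and_block_encoding} prepares only the uniform state. You should state this as an assumption rather than claim it.

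\textbf{Two smaller points.} First, you import from the uniform case that the numerator of $\gamma$ is $\Theta(|\OPT_\Bcal|^2)$, i.e.\ that the $\pi_\beta$-mean of $\widetilde f$ stays $\Theta(|E^\star|)$ away from $(1-\eta)E^\star$. Since $\mathbb{E}_{\pi_\beta}[f]$ is nonincreasing in $\beta$, this is an extra assumption (again implicit in the paper), and it is where your ``main obstacle'' actually sits.
\begin{itemize}
\item Spectral density forces $\gamma\le1$, because $\pi_\beta(E\le(1-\eta)E^\star)\ge\pi_\beta(E^\star)$. Writing $t^2$ for the squared numerator term, this says $\omega t^2\le\|\widetilde f\|_{P_\beta}\ln(1/\pi_\beta(\OPT_\Bcal))$.
\item Combined with $t=\Theta(|\OPT_\Bcal|)$, condition~\eqref{eqn:condition1} and $\omega=\Omega(1/\ln(1/\pi_\beta(\OPT_\Bcal)))$, this gives $\log(|K|/|K^\star|)=O(\ln(1/\pi_\beta(\OPT_\Bcal)))$.
\item So whenever your numerator claim holds, the two logarithms agree up to constants, which also puts Step~4 on the same footing as the uniform case. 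When they genuinely differ, it is the numerator claim that fails.
\end{itemize}
Second, your inequality $\pi_\beta(\OPT_\Bcal)\ge|K^\star|/|K|$ uses the low-cost-favoring measure $\pi_\beta\propto e^{-\beta\langle\mathbf{c},x\rangle}$. Appendix~\ref{sec:alt_speedup} literally writes $e^{+\beta\langle\mathbf{c},x\rangle}$, with the matching acceptance rule, and under that convention the inequality reverses. Fix the sign convention explicitly.
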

Note that the proof is effectively the same as Theorem \ref{thm:main-speedup} , as adding the Metropolis filter can only decrease $\Delta_{P}$, so the choices made for $\Delta_{P}$ in the proof of Theorem \ref{thm:main-speedup} remain valid. In addition, the Metropolis-adjusted Cayley walk can be performed with a constant number of queries to the base chain and cost function operators using  the approach of \cite{Lemieux2020efficientquantum} combined with the Cayley-walk implementation previously mentioned and Section \ref{subsec:state_prep_and_block_encoding}. The above theorem implies that if there exists a $\beta>0$ that satisfies the specified conditions, the runtime of both classical and quantum algorithms proposed in the main text can be improved as the distribution is tilted towards the optimal solutions. Unfortunately, for most general Gibbs measures, the distribution starts forming clusters and there is no local Markov chain that can efficiently sample from such a measure due to long range correlations.

\newpage
\section{Notations}

\renewcommand{\arraystretch}{1.25}

\begin{longtable}{p{0.49\textwidth} p{0.49\textwidth}}
\hline
\textbf{Symbol} & \textbf{Meaning} \\
\hline
\endfirsthead

\hline
\textbf{Symbol} & \textbf{Meaning} \\
\hline
\endhead

\hline
\multicolumn{2}{r}{Continued on next page}\\
\hline
\endfoot

\hline
\endlastfoot

\multicolumn{2}{l}{\textbf{Core problem data and sets}} \\
$A \in \Zmbb^{m\times n},\; b \in \Zmbb^{m},\; c \in \R{n}$ & Problem data (constraint matrix, RHS, cost) \\
$\Z{n}_{\ge 0},\; \R{n}_{\ge 0}$ & Nonnegative orthants over $\Z{n}$ and $\R{n}$ \\
$\Pcal := \{ x \in \Z{n}_{\ge 0} : Ax = b \}$ & ILP feasible set \\
$\Pcal_{\text{LP}} := \{ x \in \R{n}_{\ge 0} : Ax = b \}$ & LP relaxation feasible set \\
$\OPT := \min\{ c^{\top} x : x \in \Pcal \}$ & ILP optimal value \\
$\OPT_{\text{LP}} := \min\{ c^{\top} x : x \in \Pcal_{\text{LP}} \}$ & LP optimal value \\

\hline
\multicolumn{2}{l}{\textbf{LP bases, partitions, and reduced costs}} \\
$\Bcal \subset [n]$ & LP basis (with $|\Bcal|=m$ if $\rank(A)=m$) \\
$\Ncal := [n]\setminus \Bcal$ & Nonbasic index set \\
$A = [A_{\Bcal}, A_{\Ncal}]$ & Partition of $A$ by $(\Bcal,\Ncal)$ \\
$c = [c_{\Bcal}, c_{\Ncal}],\; x = [x_{\Bcal}, x_{\Ncal}]$ & Matching partitions of $c$ and $x$ \\
$\bar{c}_{\Ncal} := c_{\Ncal} - A_{\Ncal}^{\top}(A_{\Bcal}^{-1})^{\top} c_{\Bcal}$ & Reduced costs for nonbasic indices \\

\hline
\multicolumn{2}{l}{\textbf{Group relaxation (primal formulations)}} \\
$\Pcal_{\Bcal} := \{ (x_{\Bcal}, x_{\Ncal}) \in \Z{m}\times \Z{n-m}_{\ge 0} : Ax=b \}$ & Group relaxation feasible set \\
$\OPT_{\Bcal} := \min\{ c^{\top}x : x \in \Pcal_{\Bcal} \}$ & Group relaxation optimal value \\
$\OPT_{\Bcal} = \min_{x_{\Ncal} \in \Zmbb^{n-m}_{\ge 0}} \{ c_{\Bcal}^{\top} A_{\Bcal}^{-1} b + \bar{c}_{\Ncal}^{\top} x_{\Ncal} \}$ & Objective in nonbasic variables \\ 
%$\widetilde{\OPT}_{\Bcal} := \OPT_{\Bcal} - c_{\Bcal}^{\top} A_{\Bcal}^{-1} b$ & Shifted objective (constant removed) \\

\hline
\multicolumn{2}{l}{\textbf{Smith normal form and bounded formulation}} \\
$A_{\Bcal} = U R V$ & Smith normal form \\
$R = \textup{diag}(r_1, \dots, r_m)$ & Invariant factors of $A_{\Bcal}$ \\
$U^{-1}A_{\Ncal}x_{\Ncal} = U^{-1}b \pmod{R\Z{m}}$ & Feasibility in SNF coordinates \\
% $s_j$ & Order of column $(U^{-1}A_{\Ncal})_j$ in quotient group \\
% $S := \diag(s_1,\dots,s_{d})$ & Diagonal of orders for bounding \\
%$Z$ & Ambient finite product-of-cycles space \\
$\Ambf := U^{-1}A_{\Ncal},\; \bmbf := U^{-1}b,\; \mathbf{c} := \bar{c}_{\Ncal}$ & Data for group formulation \\
%$\widetilde{\OPT}_{\Bcal} = \min_{x \in Z} \{ \mathbf{c}^{\top} x : \Ambf x = \bmbf \pmod{R\Z{m}} \}$ \dyh{come back to, might need to be updated.} & Bounded group formulation \\

\hline
\multicolumn{2}{l}{\textbf{Groups, lattices, kernel and cosets}} \\
$\lat := A \Z{},\; \lat_{\Bcal} := A_{\Bcal} \Z{}$ & Lattices generated by $A$ and $A_{\Bcal}$ \\
$G := (\Ambf\Z{d} +  R\Z{m})/R\Z{m}$ & Range-space quotient (finite abelian group) \\
$\mathcal{R} := \mathbb{Z}^{m}/R\mathbb{Z}^m$ & Ambient range space \\
$Z := \mathbb{Z}_{r_m}^d$ & Ambient space for finite group relaxation \\
$K := \{ \mathbf{h} \in Z : \Ambf \mathbf{h} = \mathbf{0} \pmod{R\Z{m}} \}$ & Null space subgroup  \\
$\hat{\xmbf}$ & Vector in $Z$ with $\Ambf \hat{\xmbf} = \bmbf \pmod{R\Z{m}}$ \\
$\overline{\Pcal}_{\Bcal} = \hat{\xmbf} + K$ & Feasible set (coset)  \\
$\Scal = \{h_1,\dots,h_k\},\; u_j := \lvert \langle h_j\rangle \rvert$ & Generators of $K$ and their orders \\
$K \cong \bigoplus_{j=1}^{k} \Zmbb_{u_j}$ & Cyclic decomposition of $K$ \\
$\Pcal_{\Bcal}^{\star},\; K^{\star}$ & Group-relaxation optima; optimal subset in $K$ \\
\\
\\
\hline
\multicolumn{2}{l}{\textbf{Markov chains and generalized short path (SP) constructs}} \\
$P$ & Transition matrix \\
$\pi$ & Stationary distribution of $P$ \\
$D(P) := \operatorname{diag}(\pi)^{1/2} P \operatorname{diag}(\pi)^{-1/2}$ & Discriminant matrix \\
$\delta$ & Spectral gap of $P$ \\
$t_{\text{mix}}(\varepsilon),\; \textup{TV}(\mu,\nu)$ & Mixing time; total variation distance \\
$H,\; E^{\star}$ & Cost Hamiltonian; ground state energy \\
$H_{\mu} := -D(P) + \mu\, \theta_{\eta}(H/|E^{\star}|)$ & Short path Hamiltonian \\
$\theta_{\eta}(x) := \min\{0,(x+1-\eta)/\eta\}$ & Thresholding function (SP) \\
$\ket{\psi_{\mu}},\; \Pi^{\star}$ & Ground state of $H_{\mu}$; projector onto optima \\
$\|f\|_{P} := \max_{x}\E_{y\sim P(x,\cdot)}[(f(x)-f(y))^2]$ & $P$-pseudo Lipschitz norm \\
$\Delta_{P},\; \gamma,\; \omega$ & Stability; spectral density; log-Sobolev constant \\
\end{longtable}

\end{document}